\documentclass[11pt]{article}

\usepackage[margin=1in]{geometry}
\geometry{letterpaper}

\usepackage{amsfonts,amssymb,amsmath,amsthm,mathtools}
\usepackage{enumitem,lpic,tocloft,booktabs}
\usepackage[bf]{caption}

\usepackage{thmtools,thm-restate}
\declaretheorem[name=Theorem]{theorem}
\declaretheorem[name=Lemma,sibling=theorem]{lemma}
\declaretheorem[name=Proposition,sibling=theorem]{proposition}
\declaretheorem[name=Fact,sibling=theorem]{fact}
\declaretheorem[name=Claim,sibling=theorem]{claim}
\declaretheorem[name=Definition,style=definition]{definition}

\usepackage{complexity,bm,xspace}
\usepackage[usenames,dvipsnames]{xcolor}

\usepackage{hyperref}
\hypersetup{
	colorlinks=true,
	linkcolor=Sepia,
	citecolor=Sepia,
	filecolor=Sepia,
	urlcolor=Sepia
}

\usepackage[greek,english]{babel}
\addto\extrasenglish{}
\addto\extrasenglish{}

\usepackage{tikz,ifthen,etoolbox,color}
\usetikzlibrary{arrows.meta}

\usepackage[framemethod=tikz]{mdframed}
\mdfsetup{
	roundcorner=4pt,
	nobreak=true,
	linewidth=.5,
	innerleftmargin=25pt,
	innerrightmargin=25pt,
	innertopmargin=15pt,
	innerbottommargin=15pt,
	skipabove=12,skipbelow=8pt
}

\newcommand{\Var}{\mathop{\mathrm{Var}}}
\renewcommand{\E}{\mathop{\mathbb{E}}}

\renewcommand{\poly}{\mathrm{poly}}
\renewcommand{\polylog}{\mathrm{polylog}}

\DeclareMathOperator{\supp}{supp}
\DeclareMathOperator{\cbs}{cbs}
\DeclareMathOperator{\bs}{bs}
\DeclareMathOperator{\indeg}{in-deg}
\DeclareMathOperator{\outdeg}{out-deg}

\newcommand{\del}{\textsl{\upshape del}}
\newcommand{\cut}{\textsl{\upshape cut}}
\newcommand{\Enc}{\mathrm{Enc}}

\newcommand{\calA}{\mathcal{A}}
\newcommand{\calB}{\mathcal{B}}
\newcommand{\calD}{\mathcal{D}}
\newcommand{\calP}{\mathcal{P}}
\newcommand{\calH}{\mathcal{H}}
\newcommand{\calG}{\mathcal{G}}
\newcommand{\calO}{\mathcal{O}}

\newcommand{\tOmega}{\tilde{\Omega}}

\newcommand{\tO}{\tilde{O}}

\newcommand{\EoL}{\textsc{EoL}\xspace}
\newcommand{\QEoL}{\textsc{Q-EoL}\xspace}
\newcommand{\EndofLine}{\textsc{End-of-Line}\xspace}
\newcommand{\BFP}{\textsc{BFP}\xspace}

\newcommand{\eps}{\texorpdfstring{$\epsilon$}{eps}}

\begin{document}

\newgeometry{margin=1.3in,top=1.5in,bottom=1in}

\begin{center}
{\LARGE Near-Optimal Communication Lower Bounds for\\[1mm] Approximate Nash Equilibria}
\\[1cm] \large

\setlength\tabcolsep{2em}
\begin{tabular}{cccc}
Mika G\"o\"os &
Aviad Rubinstein\\[-.5mm]
\small\slshape Harvard University&
\small\slshape Harvard University
\end{tabular}

\vspace{7mm}

\large
{\today}

\vspace{10mm}

\normalsize
\bf Abstract
\end{center}

\noindent
We prove an $N^{2-o(1)}$ lower bound on the randomized communication complexity of finding an $\epsilon$-approximate Nash equilibrium (for constant $\epsilon>0$) in a two-player $N\times N$ game.

\vspace{10mm}

\setlength{\cftbeforesecskip}{0pt}
\setcounter{tocdepth}{2}
\tableofcontents

\thispagestyle{empty}
\setcounter{page}{0}

\newpage
\restoregeometry

\section{Introduction}

How many bits of communication are needed to find an $\epsilon$-approximate
Nash equilibrium (for a small constant $\epsilon>0$) in a two-player $N\times N$ game? More precisely:
\begin{itemize}[itemsep=0pt]
\item {\bf Alice} holds her payoff matrix $A\in[0,1]^{N\times N}$ of some constant precision.
\item {\bf Bob} holds his payoff matrix $B\in[0,1]^{N\times N}$ of some constant precision.
\item {\bf Output} an\emph{ $\epsilon$-approximate Nash equilibrium}: a mixed strategy $\calA$ for Alice and a mixed strategy $\calB$ for Bob such that neither player can unilaterally change their strategy and increase their expected payoff by more than $\epsilon$. (See \autoref{sec:def-nash} for a formal definition.)
\end{itemize}
It is well known that such approximate equilibria have a concise $O(\log^2 N)$-bit description: one may assume wlog that $\calA$ and $\calB$ are supported on at most $O(\log N)$ actions~\cite{LMM}. There is a trivial upper bound of $O(N^2)$ by communicating an entire payoff matrix. Previous work \cite{BR16-CC-nash} showed that finding an $\epsilon$-approximate Nash equilibrium requires $N^{\delta}$ bits of communication for a small constant $\delta>0$. In this work, we improve this to a near-optimal $N^{2-o(1)}$ lower bound. Our main theorem is slightly more general, as it also applies for games of asymmetric dimensions. 
\begin{theorem} \label{thm:main}
There exists an $\epsilon>0$ such that for any constants $a,b>0$ the randomized communication complexity of finding an $\epsilon$-approximate Nash equilibrium in an $N^{a}\times N^{b}$ game is $N^{a+b-o\left(1\right)}$.
\end{theorem}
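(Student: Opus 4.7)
My plan is to prove \autoref{thm:main} via a chain of reductions, leveraging and strengthening the framework of \cite{BR16-CC-nash}. The overall chain is
\[
\text{Hard base problem} \;\longrightarrow\; \text{distributed } \EoL \;\longrightarrow\; \BFP \;\longrightarrow\; \Nash.
\]
Of these three reductions, the last two are essentially already present (in quantitatively weaker form) in the prior work: one reduces a two-party \EoL instance to a discrete Brouwer fixed-point problem on an $N^{a}\times N^{b}$ grid using a snake-like vector field in the spirit of Hirsch--Papadimitriou--Vavasis, and then to an $\epsilon$-approximate Nash equilibrium using a McLennan--Tourky style imitation game whose approximate equilibria are forced to localize near fixed points. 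Both of those reductions lose only polylogarithmic factors in the input size, so the real bottleneck is the first step.

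For the first step, I would prove a near-tight $N^{a+b-o(1)}$ communication lower bound on a structured \EoL problem via query-to-communication lifting. Concretely, I would start from a search relation --- a structured \EoL instance on $n$ vertices --- whose decision-tree complexity is $n^{1-o(1)}$ via a standard ``hide-a-path'' adversary argument that forces the algorithm to physically traverse the line. Composing each input bit with a constant-size two-party gadget (e.g., index or inner product) and invoking a simulation/lifting theorem then transfers this query bound into a $\tOmega(n)$ communication lower bound on the gadget-composed two-party problem, whose total input length is $n\cdot\polylog n$. Choosing $n \approx N^{a+b}$ yields a near-optimal lower bound on a two-party \EoL instance ready to be fed into the subsequent reductions.

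Assembling the pieces, combining the lifted \EoL bound with the Brouwer and Nash reductions --- appropriately adapted so that the $N^{a}\times N^{b}$ grid is distributed asymmetrically along the communication partition --- yields the desired $N^{a+b-o(1)}$ bound for finding an $\epsilon$-approximate Nash equilibrium in an $N^{a}\times N^{b}$ game.

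The main obstacle will be keeping every stage of this pipeline quantitatively tight, so that the final exponent is $a+b-o(1)$ rather than some smaller constant. In particular: (i) the lifting step must handle a PPAD-type search relation rather than a total decision problem, and the gadget must be small enough not to blow up the input; (ii) the snake-graph Brouwer construction must embed an $n$-vertex \EoL graph into a grid whose two side lengths are exactly $N^{a}$ and $N^{b}$, with the communication partition aligned to the grid axes; and (iii) the game-theoretic reduction must produce a game whose action sets have exactly the prescribed sizes $N^{a}$ and $N^{b}$, without a polynomial blow-up in dimension. Making all three of these simultaneously tight --- rather than losing polynomial factors in any single step --- is the heart of the argument.
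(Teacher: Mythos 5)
Your plan reproduces the correct high-level skeleton (\EoL $\to$ \BFP $\to$ \Nash{} via lifting), and you correctly identify the quantitative obstacles (i)--(iii) at the end. However, the proposal has several genuine gaps on exactly those three points, and in a couple of places it asserts as already-done things that are in fact the main technical contributions of the paper.

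First, you assert that the Brouwer and Nash reductions from prior work ``lose only polylogarithmic factors in the input size,'' so the only bottleneck is the lifted \EoL{} bound. This is not the case: in \cite{BR16-CC-nash}, it is precisely Step 4 that incurs a blow-up \emph{exponential} in the cost $c$ of the per-vertex protocol $\Pi_v$, and those $\Pi_v$ have cost $\Theta(\log N)$ there. To make this step polylogarithmic one needs two ideas you do not supply: (a) a ``doubly-local'' Brouwer function so that a single action need not carry the whole vector $x\in[0,1]^m$ --- the vector is distributed across the support of the mixed strategy, with each action carrying only a pseudorandom $m/\ell$-subset of coordinates --- and (b) the half-vertex trick, which splits the $\log N$-bit name of the current \EoL{} vertex between Alice's action and Bob's action. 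Point (b) is what turns an $N^{1-o(1)}$ bound into $N^{a+b-o(1)}$: if each action had to name an arbitrary vertex of an $N^{a+b}$-vertex \EoL{} graph, you would need $N^{a+b}$ actions for each player, not $N^a$ and $N^b$.

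Second, the lifting step you propose is underspecified in a way that matters quantitatively. You say ``a standard hide-a-path adversary argument'' yields query complexity $n^{1-o(1)}$, and then ``invoking a simulation/lifting theorem'' with a constant-size gadget finishes the job. But the general randomized/BPP simulation theorem of \cite{goos17bpp} requires a gadget of size $\Theta(\log n)$ bits per coordinate, which forces the per-vertex protocols $\Pi_v$ to have cost $\Theta(\log n)$ and, through the exponential blow-up in Step 4, collapses the eventual bound back to $N^{\delta}$. The paper sidesteps this by proving a \emph{critical block sensitivity} lower bound (not a generic adversary bound), which is the specific complexity measure that the constant-gadget lifting theorem of \cite{huynh12virtue,goos14communication} knows how to lift. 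You would also need to address the host graph: the $\Pi_v$ must output a pointer to $S(v),P(v)$ in $O(1)$ bits, which is impossible over a complete host graph; the paper needs a carefully chosen \emph{bounded-degree} host graph (glued butterflies) for which an $\tOmega(N)$ cbs lower bound still holds. A naive reduction of degree via binary trees would only give $\tOmega(\sqrt{N})$.

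In short, your proposal is a correct map of the territory but skips the three specific mechanisms (cbs + constant-gadget lifting, bounded-degree butterfly host graph, doubly-local $f$ plus half-vertex splitting) that actually close the gap between $N^{\delta}$ and $N^{a+b-o(1)}$; without them the pipeline as you describe it degrades to the previously known bound.
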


It is interesting to note that there is an $\tO(N)$ protocol for computing an $\epsilon$-correlated equilibrium of an $N\times N$ game~\cite{BM07-regret}. Hence our result is the first that separates approximate Nash and correlated equilibrium.

Our result also implies the first near-quadratic lower bound for finding an approximate Nash equilibrium in the weaker query complexity model, where the algorithm has black-box oracle access to the payoff matrices (previous work established such lower bounds only against deterministic algorithms \cite{FS16-query}). In this query complexity model, there is an $\tO(N)$-queries algorithm for computing an $\epsilon$-coarse correlated equilibrium of an $N\times N$ game~\cite{GR16-query}. Hence our result is the first that separates approximate Nash and coarse correlated equilibrium in the query complexity model. See \autoref{table:results} for a summary of known bounds\footnote{We thank Yakov Babichenko for his help in understanding these connections and other insightful communication.}.

\subsection{Background}

Nash equilibrium is the central solution concept in game theory. It
is named after John Nash who, more than 60 years ago, proved that
every game has an equilibrium \cite{Nash}. Once players are at an
equilibrium, they do not have an incentive to deviate. However, Nash's
theorem does not explain how the players arrive at an equilibrium
in the first place. 

Over the last several decades, many {\em dynamics}, or procedures
by which players in a repeated game update their respective strategies
to adapt to other players' strategies, have been proposed since Nash's
result (e.g., \cite{Brown51-fictitious_play,Robinson51-fictitious_play,KaLe,HM03-deterministic_dynamics,FY06-converging_dynamics}).
But despite significant effort, we do not know any plausible dynamics
that converge even to an approximate Nash equilibrium. It is thus
natural to conjecture that there are no such dynamics. However, one
has to be careful about defining ``plausible'' dynamics. The first
example of dynamics we consider {\em im}plausible, are ``players
agree a priori on a Nash equilibrium''. The {\em uncoupled dynamics}
model proposed by Hart and Mas-Collel \cite{HM03-deterministic_dynamics}
rules out such trivialities by requiring that a player's strategy
depends only on her own utility function and the history of other
players' actions. Another example of implausible dynamics that converge
to a Nash equilibrium are exhaustive search dynamics that enumerate
over the entire search space. (Exhaustive search dynamics can converge to an approximate Nash equilibrium in finite time by enumerating over an $\epsilon$-net of the search space.) We thus consider a second natural desideratum, which is that dynamics
should converge (much) faster than exhaustive search. Note that the
two restrictions (uncoulpled-ness and fast convergence) are still
very minimal---it is still easy to come up with dynamics
that satisfy both and yet do not plausibly expect to predict players'
behavior. But, since we are after an impossibility result, it is fair
to say that if we can rule out any dynamics that satisfy these two
restrictions and converge to a Nash equilibrium, we have strong evidence
against any plausible dynamics.

A beautiful observation by Conitzer and Sandholm \cite{CS} and Hart
and Mansour \cite{HM} is that the {\em communication complexity}
of computing an (approximate) Nash equilibrium, in the natural setting
where each player knows her own utility function, precisely captures
(up to a logarithmic factor) the number of rounds for an arbitrary
uncoupled dynamics to converge to an (approximate) Nash equilibrium.
Thus the question of ruling out plausible dynamics is reduced to the
question of proving lower bounds on communication complexity. There are also other good reasons to study the communication complexity
of approximate Nash equilibria; see e.g.~\cite{Roughgarden14-barriers-to-equilibria}.

\begin{table}
\centering
\renewcommand{\arraystretch}{1.15}
\begin{tabular}{l@{\hspace{3mm}}rl@{\hspace{5mm}}rl}
\toprule[.5mm]
\bf
Type of equilibrium
& \multicolumn{2}{c}{\bf Query Complexity}
& \multicolumn{2}{c}{\bf Communication Complexity} \\
\midrule
$\epsilon$-Nash equilibirum &
$\Omega(N^{2-o(1)})$ & [This paper] &
\hspace{9mm}$\Omega(N^{2-o(1)})$ & [This paper] \\
$\epsilon$-correlated equilibrium &
$\Omega(N)$ & [Folklore] &
$O(N\log N)$ & \cite{BM07-regret} \\
$\epsilon$-coarse correlated equilibrium &
$\tilde{\Theta}(N)$ & \cite{GR16-query} &
$\polylog(N)$ & [Folklore] \\
\bottomrule[.5mm]
\end{tabular}
\caption{Query and communication complexities of approximate equilibria. The oracle query model is more restrictive than the communication model. Nash equilibrium is more restrictive than correlated equilibrium, which is yet more restrictive than coarse correlated equilibrium. Hence the complexity of a problem increases as we move up and left in the table. For all problems there are trivial
bounds of $\Omega(\log N)$ and $O(N^2)$.}
\label{table:results}
\end{table}

\subsection{Related work}

The problem of computing (approximate) Nash equilibrium has been studied
extensively, mostly in three models: communication complexity, query
complexity, and computational complexity.

\paragraph{Communication complexity.}
The study of the communication complexity of Nash equilibria was
initiated by Conitzer and Sandholm \cite{CS} who proved a quadratic
lower on the communication complexity of deciding whether a game has
a pure equilibrium, even for zero-one payoff (note that for pure equilibrium
this also rules out any more efficient approximation). Hart and Mansour
\cite{HM} proved exponential lower bounds for pure and exact mixed
Nash equilibrium in $n$-player games. Roughgarden and Weinstein \cite{RW}
proved communication complexity lower bounds on the related problem
of finding an approximate Brouwer fixed point (on a grid). In \cite{BR16-CC-nash},
in addition to the lower bound for two-player game which we improved,
there is also an exponential lower bound for $n$-player games. The
same paper also posed the open problem of settling the communication
complexity of approximate correlated equilibrium in two-player games;
partial progress has been made by \cite{GS17-correlated,KS17-correlated},
but to date the problem of determining the communication complexity
of $\epsilon$-approximate correlated equilibrium remains open
(even at the granularity of $\poly(N)$ vs $\polylog(N)$).
On the algorithmic side, Czumaj et al.~\cite{CDFFJS15-0.382} gave
a polylogarithmic protocol for computing a $0.382$-approximate Nash
equilibrium in two-player games, improving upon~\cite{GP}.

\paragraph{Query complexity.}
In the query complexity model, the algorithm has black-box oracle
access to the payoff matrix of each player. Notice that this model
is strictly weaker than the communication complexity model (hence
our communication lower bound applies to this model as well). For
the {\em deterministic} query complexity of $\epsilon$-approximate
Nash equilibrium in two-player games, Fearnley et al.~\cite{FGGS}
proved a linear lower bound, which was subsequently improved to (tight)
quadratic by Fearnley and Savani~\cite{FS16-query}. For randomized
algorithms, the only previous lower bound was $\Omega(1/\epsilon^2)$,
also by \cite{FS16-query}; notice that this is only interesting for
$\epsilon=o(1)$. As mentioned earlier, Goldberg and
Roth~\cite{GR16-query} can find an approximate coarse correlated
equilibrium with $\tO(N)$ queries (two-player $N\times N$
game) or $\tO(n)$ queries ($n$-player game
with two actions per player). In the latter regime of $n$-player
and two action per player, a long sequence of works \cite{HN13,Bab12-completely_uncoupled,CCT,R16}
eventually established an $2^{\Omega(n)}$ lower bound
on the query complexity of $\epsilon$-approximate Nash equilibrium.
(This last result is implied by and was the starting point for the
exponential query complexity lower bound in \cite{BR16-CC-nash}).
Finally, some of the aforementioned result are inspired by a query
complexity lower bound for approximate fixed point due to Hirsch et
al.~\cite{HPV89} and its adaptation to $\ell_2$-norm in \cite{R16};
this construction will also be the starting point of our reduction.

\paragraph{Computational complexity.}
For computational complexity, the problem of finding an exact Nash
equilibrium in a two-player game is \PPAD-complete~\cite{DGP,CDT}. Following a sequence
of improvements \cite{KPS09-nash-0.75,DMP09-nash-0.5,DMP07-nash-0.38,BBM10-0.36392,TS08-0.33},
we know that a $0.3393$-approximate Nash equilibrium can be computed
in polynomial time. But there exists some constant $\epsilon>0$,
such that assuming the ``Exponential Time Hypothesis (ETH) for \PPAD'',
computing an $\epsilon$-approximate Nash equilibrium requires
$N^{\log^{1-o(1)}N}$ time \cite{R16}, which is essentially
tight by \cite{LMM}.

\subsection{Definition of \eps-Nash} \label{sec:def-nash}

A two-player game is defined by two utility functions (or payoff matrices) $U^A,U^B\colon S_A\times S_B\to[0,1]$. A mixed strategy for Alice (resp.\ Bob) is a distribution $\calA$ ($\calB$) over the set of actions $S_A$ ($S_B$). We say that $(\calA,\calB)$ is an \emph{$\epsilon$-approximate Nash equilibrium} ($\epsilon$-ANE) if every alternative Alice-strategy $\calA'$ performs at most $\epsilon$ better than $\calA$ against Bob's mixed strategy $\calB$, and the same holds with roles reversed. Formally, the condition for Alice is
\[
\E_{\substack{a\sim\calA \\ b\sim\calB}}\left[U^A(a,b)\right]~\geq~
\max_{\calA'\text{ over }S_A}\enspace\E_{\substack{a'\sim\calA' \\ b\sim\calB}}\left[U^A(a',b)\right]-\epsilon.
\]

\section{Technical Overview}

Our proof follows the high-level approach of \cite{BR16-CC-nash}; see also the lecture notes~\cite{Roughgarden18-barbados} for exposition. The approach of \cite{BR16-CC-nash} consist of four steps. 
\begin{enumerate}[noitemsep]
\item Query complexity lower bound for the well-known \PPAD-complete \EoL problem.
\item Lifting of the above into a communication lower bound for a two-party version of \EoL.
\item Reduction from \EoL to $\epsilon$-\BFP, a problem of finding an (approximate) Brouwer fixed point.
\item Constructing a hard two-player game that combines problems from both Step 2 and Step 3.
\end{enumerate}
In this paper we improve the result from~\cite{BR16-CC-nash} by optimizing Steps 1, 2, and 4.

\subsection{Steps 1--2: Lower bound for End-of-Line} \label{sec:steps-12}

The goal of Steps 1--2 is to obtain a randomized communication lower bound for the \EndofLine (or \EoL for short) problem: Given an implicitly described graph on $[N]$ where a special vertex $1\in[N]$ is the start vertex of a path, find an end of a path or a non-special start of a path. The following definition is a ``template'' in that it does not yet specify the protocols $\Pi_v$.
\begin{mdframed} \label{def:eol}
\begin{center}
\large \EoL template
\end{center}
\begin{itemize}[leftmargin=2mm]
\item {\bf Input:}~~Alice and Bob receive inputs $\alpha$ and $\beta$ that implicitly describe \emph{successor} and \emph{predecessor} functions $S,P\colon[N]\to[N]$. Namely, for each $v\in[N]$ there is a ``low-cost'' protocol $\Pi_v(\alpha,\beta)$ to compute the pair $(S(v),P(v))$.
\item {\bf Output:}~~Define a digraph $G=([N],E)$ where $(v,u)\in E$ iff $S(v)=u$ and $P(u)=v$. The goal is to output a vertex $v\in[N]$ such that either
\begin{itemize}[noitemsep,topsep=0pt,label=$-$]
\item $v=1$ and $v$ is a non-source or a sink in $G$; or
\item $v\neq1$ and $v$ is a source or a sink in $G$.
\end{itemize}
\end{itemize}
\end{mdframed}

The prior work~\cite{BR16-CC-nash} proved an $\tOmega(N^{1/2})$ lower bound for a version of $\EoL$ where the $\Pi_v$ had communication cost $c\coloneqq \Theta(\log N)$. The cost parameter $c$ is, surprisingly, very important: later reductions (in Step 4) will incur a blow-up in input size---and hence a quantitative reduction in the eventual lower bound---that is \emph{exponential} in $c$. (Namely, when constructing payoff matrices in Step 4, the data defining a strategy for Alice will include a $c$-bit transcript of some $\Pi_v$.)

\bigskip\noindent
In this work, we obtain an optimized lower bound:
\begin{theorem}\label{thm:eol}
There is a version of \EoL with randomized communication complexity $\tOmega(N)$ where the $\Pi_v$ have constant cost $c=O(1)$.
\end{theorem}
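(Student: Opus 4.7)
The plan is to decompose the bound into a near-optimal randomized query complexity lower bound for \EoL and a lifting step to communication, both tuned so that only polylogarithmic factors are lost, and so that each $\Pi_v$ costs only $O(1)$ bits.

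First I would establish an $\tilde{\Omega}(N)$ randomized query lower bound for \EoL. The canonical hard distribution is a directed path of length $\Theta(N)$ starting at vertex $1$ and embedded as a uniformly random injection into $[N]$, with the off-path vertices arranged as self-loops or trivial $2$-cycles so that they carry no useful information. An adversary argument (or a direct $\chi^2$/KL computation against Yao's distribution) shows that after any algorithm has queried fewer than $N/10$ vertices' $(S, P)$ values, the identity of the sink has essentially full entropy over a set of size $\Theta(N)$, so it cannot be named with constant probability. This yields the $\tilde{\Omega}(N)$ randomized query bound.

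Next I would compose each query with a \emph{constant}-size communication gadget so that queries become constant-cost protocols. Concretely, each vertex $v$ is associated with local labels $\alpha_v \in \Sigma_A$, $\beta_v \in \Sigma_B$ from constant-size alphabets, and $(S(v), P(v))$ is determined by $(v, \alpha_v, \beta_v)$ via a fixed decoding rule; the protocol $\Pi_v$ simply has the parties exchange $\alpha_v$ and $\beta_v$ for $O(1)$ bits. It remains to prove a simulation statement: any randomized communication protocol for this lifted problem of cost $T$ yields a randomized decision tree for base \EoL making $\tilde{O}(T)$ queries. Combined with the query bound, this gives $\tilde{\Omega}(N)$ communication.

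The main obstacle is the simulation with a constant-size gadget. Standard Raz--McKenzie-style lifting requires gadgets of size $\Theta(\log N)$, which is exactly the regime we must avoid---it would reintroduce $c = \Theta(\log N)$ and wipe out the gains in Step 4. I would therefore exploit the specific structure of \EoL rather than invoke a black-box simulation theorem: the hard distribution has local entropy at every vertex, the output carries a short local certificate, and the path is sparse (only $\Theta(N)$ of the $N$ vertices are ``active'' and each vertex's type is near-uniform over a constant-size alphabet conditioned on its global position). I would maintain, throughout the protocol, a ``live'' combinatorial rectangle of path embeddings consistent with the current transcript and show, via an amortized argument, that each communication bit can pin down the local type of at most $O(1)$ new vertices---equivalently, reduce the conditional entropy of the sink by only $O(1)$ bits---so $\tilde{\Omega}(N)$ bits are needed. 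The delicate point, and the part I expect to require the most care, is the averaging step that converts the per-bit entropy decrement into a distributional lower bound robust to the $O(N)$-way branching the protocol can induce; an alternative route is to invoke recent tight lifting theorems for XOR- or Index-type constant-size gadgets after choosing a gadget whose communication matrix satisfies the required pseudorandomness (discrepancy / block-sensitivity) conditions.
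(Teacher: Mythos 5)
Your high-level decomposition---a near-optimal query-type lower bound for \EoL followed by composition with a constant-size gadget and a lifting/simulation step---matches the paper's architecture, but there are two concrete gaps that prevent the proposal from going through as written.

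First, the quantity you lower bound in the first half must be \emph{critical block sensitivity}, not randomized query complexity. The only known lifting theorem that tolerates a constant-size gadget (\cite{huynh12virtue,goos14communication}) lifts $\cbs$ specifically; your ``alternative route'' gestures at this, but your first half proves a distributional randomized query bound via an adversary/entropy argument, which does not by itself give a $\cbs$ bound. Proving $\cbs(\QEoL_H)\geq\tOmega(N)$ is genuinely different: one must exhibit, for \emph{every} function $f$ consistent with the search relation, a critical input $x$ with $\Omega(N)$ disjoint sensitive blocks. The paper does this with a case analysis on whether $f$ prefers canonical or non-canonical solutions on a bicritical distribution, using deletion blocks in one case and shortcutting blocks in the other. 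Your custom ``live rectangle + amortized entropy'' simulation argument, if it worked for arbitrary constant-size gadgets and arbitrary randomized query bounds, would settle a well-known open problem in lifting theory; you rightly flag it as the delicate step, but there is no known way to make it go through in general, so you must commit to the $\cbs$ route.

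Second, and independently, your hard distribution lives on a complete host graph (a uniformly random injection of a directed path into $[N]$ with $\Theta(\log N)$-bit successor/predecessor pointers). In that setup, the protocol $\Pi_v$ that announces $(S(v),P(v))$ inherently costs $\Theta(\log N)$ bits---it has to name arbitrary vertices of $[N]$---which defeats the whole point of the theorem. The paper instead fixes a \emph{bounded-degree} host graph $H$ (a double-butterfly raised to a $\log N$ power and then degree-reduced via a replacement product) so that $\Pi_v$ only needs $O(1)$ bits to identify which of $v$'s $O(1)$ host-edges is present. Proving the $\tOmega(N)$ $\cbs$ bound on this bounded-degree host is where the real work is: one needs the host to support near-oblivious edge-disjoint routing of arbitrary bounded-degree digraphs (Claims 3.4--3.6 in the paper) so that the deletion/shortcutting block constructions from the complete-graph warm-up can be pushed through. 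A naive degree reduction (e.g., binary trees hanging off each vertex) only gives $\Omega(N^{1/2})$, which the paper explicitly notes is insufficient. Your proposal does not address the choice of host graph at all, and this is not a detail---it is where the quantitative improvement over \cite{BR16-CC-nash} comes from.
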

\emph{Note:}~~Since we consider $c=O(1)$, a $c$-bit transcript of an $\Pi_v$ cannot even name arbitrary $\log(N)$-bit vertices in $[N]$. Thus we need to clarify what it means for $\Pi_v$ to ``compute'' $(S(v),P(v))$. The formal requirement is that the pair $(S(v),P(v))$ is some prescribed function of both $v$ and the $c$-bit transcript $\Pi_v(\alpha,\beta)$. Concretely, we will fix some \emph{bounded-degree host graph} $H=([N],E)$ independent of $(\alpha,\beta)$, and define graphs $G$ as subgraphs of $H$. For example, we can let $\Pi_v$ announce $S(v)$ as ``the $i$-th out-neighbor of $v$ in $H$'', which takes only $O(1)$ bits to represent.

As in \cite{BR16-CC-nash}, our lower bound is obtained by first proving an analogous result for \emph{query complexity}, and then applying a \emph{lifting theorem} that escalates the query hardness into communication hardness. A key difference is that instead of a generic lifting theorem~\cite{goos16rectangles,goos17bpp}, as used by~\cite{BR16-CC-nash}, we employ a less generic, but quantitatively better one~\cite{huynh12virtue,goos14communication}.

\paragraph{Step 1: Query lower bound.}
The query complexity analogue of $\EoL$ is defined as follows.
\begin{mdframed}
\begin{center}
\large $\QEoL_H$~~for host digraph $H=([N],E)$
\end{center}
\begin{itemize}[leftmargin=2mm]
\item {\bf Input:}~~An input $x\in\{0,1\}^E$ describes a (spanning) subgraph $G=G_x$ of $H$ consisting of the edges $e$ such that $x_e=1$.
\item {\bf Output:}~~Find a vertex $v\in [N]$ such that either
\begin{itemize}[noitemsep,topsep=0pt,label=$-$]
\item $v=1$ and $\indeg(v)\neq 0$ or $\outdeg(v)\neq 1$ in $G$; or
\item $v\neq1$ and $\indeg(v)\neq 1$ or $\outdeg(v)\neq 1$ in $G$.
\end{itemize}
\end{itemize}
\end{mdframed}

We exhibit a bounded-degree host graph $H$ such that any randomized decision tree needs to make $\tOmega(N)$ queries to the input $x$ in order to solve $\QEoL_H$. Moreover, the lower bound is proved using \emph{critical block sensitivity} (cbs), a measure introduced by Huynh and Nordstr{\"o}m~\cite{huynh12virtue} that lower bounds randomized query complexity (among other things); see \autoref{sec:def-cbs} for definitions.
\begin{restatable}{lemma}{cbslemma} \label{lem:cbs-lb}
There is a bounded-degree host graph $H=([N],E)$ such that $\cbs(\QEoL_H)\geq\tOmega(N)$.
\end{restatable}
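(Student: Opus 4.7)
My plan is to construct a bounded-degree host graph $H$ and a hard distribution over $\QEoL_H$ inputs, then lower-bound cbs by reduction from a combinatorial problem with known $\tOmega(N)$ critical block sensitivity---the natural candidates being Tseitin on an expander or a pigeonhole-principle variant, both of which have matching $\tOmega(N)$ cbs lower bounds in the literature (e.g.~\cite{huynh12virtue}).

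First I would fix a constant-degree expander $G_0$ on $\Theta(N)$ vertices, and build $H$ by replacing each vertex of $G_0$ with a small $O(1)$-sized ``gating'' gadget. The gadget is designed so that any subgraph $G \subseteq H$ corresponds (via the gadget encoding) to an assignment of the source problem's variables, and so that in/out-degree imbalances in $\QEoL_H$ correspond to local violations of the source problem (e.g., a Tseitin parity violation or a PHP collision). The start vertex $1$ of $\QEoL_H$ maps to a distinguished ``initialization'' gadget that anchors the implicit EoL path.

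Next I would map the hard source distribution onto $\QEoL_H$ inputs: critical source inputs become critical QEoL inputs, and any small block flip on the source side corresponds to a small block of edge flips in $H$ with the same net effect on solutions. Since each gadget has $O(1)$ size, $k$-bit source blocks correspond to $O(k)$-bit $H$-blocks, and disjointness is preserved. Consequently, every $\tOmega(N)$ lower bound on source cbs transports to a $\tOmega(N)$ cbs for $\QEoL_H$, up to constant factors.

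The main technical obstacle is bridging the parity-style nature of Tseitin (a PPA-type problem) with the in/out-matching nature of $\QEoL_H$ (PPAD). This may be overcome by choosing a source problem that directly uses PPAD-style constraints (for example a suitable restricted pigeonhole variant), or by a direct construction via random paths on expanders as in~\cite{BR16-CC-nash} but quantitatively sharpened. Expansion of $G_0$---in particular the availability of many edge-disjoint rerouting paths---will be the key ingredient for establishing the multitude of disjoint ``sensitive blocks'' that witness a large critical block sensitivity; a secondary subtlety is preserving the distinguished role of vertex $1$ under reductions from symmetric source problems, which I would handle by an explicit anchor gadget as mentioned above.
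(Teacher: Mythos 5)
Your proposal diverges from the paper's actual proof in a way that creates a real gap: the paper does \emph{not} prove \autoref{lem:cbs-lb} by reduction from Tseitin, PHP, or any other source problem with a known cbs bound. It proves the bound \emph{directly} by a two-case argument on distributions over critical and bicritical $\QEoL$ inputs, and the obstacle you correctly identify as the ``main technical obstacle'' is in fact fatal to the reduction route. Concretely: Tseitin is $\PPA$-complete and $\QEoL$ is a $\PPAD$-style problem, with $\PPAD\subseteq\PPA$. A cbs lower bound transfers from a source problem to a target problem only if there is a \emph{local} reduction from source to target (mapping critical inputs to critical inputs and disjoint sensitive blocks to disjoint sensitive blocks), and a reduction from Tseitin to $\QEoL$ would go in the wrong direction relative to the containment $\PPAD\subseteq\PPA$; no such reduction is known or expected. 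The paper notes exactly this: its $\QEoL$ bound is \emph{strictly stronger} than the known Tseitin bound of \cite{goos14communication}, precisely because $\PPAD\subseteq\PPA$, so the bound cannot be derived from it. Your suggested fix of finding ``a suitable restricted pigeonhole variant'' with known $\tOmega(N)$ cbs is circular --- no such bounded-degree $\PPAD$-type source problem with a matching cbs lower bound was known prior to this paper; establishing one is exactly the content of this lemma.

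The part of your proposal that is close in spirit to what the paper does is the last clause: a ``direct construction via random paths'' with good routing. The paper's argument is: (i) prove an $\Omega(N)$ cbs bound for the \emph{complete} host graph by a case analysis --- given a function $f$ solving $\QEoL$, either $f$ tends to output the canonical end-of-line on bicritical inputs (in which case edge \emph{deletions} on a random critical path give $\Omega(N)$ disjoint sensitive blocks) or it tends to output a non-canonical endpoint (in which case \emph{shortcuttings} pairing edge $i$ with edge $N-i$ do the job); (ii) carry this over to a bounded-degree host by choosing a graph that supports near-\emph{oblivious} edge-disjoint routing, so that a random critical/bicritical path structure on $[N]$ embeds as a random low-congestion collection of paths inside the host, and the deletion/shortcut blocks remain disjoint; (iii) reduce degree from $O(\log N)$ to $O(1)$ via a replacement product at the very end. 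Two concrete points where your sketch would need repair: you propose a generic expander $G_0$, but the paper instead uses a double \emph{butterfly} graph, for which minimal-length routes through a uniformly random middle-layer vertex give a clean oblivious-routing distribution with congestion $o(\log N)$ whp --- a generic expander would require a substantially harder multi-commodity routing argument and it is not clear the near-obliviousness (needed to argue the deletion/shortcut coupling between the critical and bicritical distributions, \autoref{cl:dist} and \autoref{cl:dist2}) survives. Also, your ``$O(1)$-sized gating gadget'' is too small: the paper's replacement-product gadget $K$ has $O(\log^2 N)$ vertices (bipartite graph plus binary trees), which is why the final host has $\tilde O(N)$ vertices rather than $\Theta(N)$; this is exactly why the bound is $\tOmega(N)$ rather than $\Omega(N)$.
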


It is not hard to prove an $\Omega(N)$ bound for a \emph{complete} host graph (equipped with successor/predecessor pointers), nor an $\Omega(N^{1/2})$ bound for a bounded-degree host graph (by reducing degrees in the complete graph via binary trees). But to achieve both an $\tOmega(N)$ bound and constant degree requires a careful choice of a host graph that has good enough routing properties. Our construction uses \emph{butterfly} graphs.

Prior to this work, a near-linear randomized query lower bound was known for a bounded-degree \emph{Tseitin} problem~\cite{goos14communication}, a canonical \PPA-complete search problem. Since $\PPAD\subseteq\PPA$, our new lower bound is qualitatively stronger (also, the proof is more involved).

\paragraph{Step 2: Communication lower bound.}
Let $R\subseteq\{0,1\}^N\times\calO$ be a query search problem (e.g., $R=\QEoL_H$), that is, on input $x\in\{0,1\}^N$ the goal is to output some $o\in\calO$ such that $(x,o)\in R$. Any such $R$ can be converted into a communication problem via \emph{gadget composition}. Namely, fix some two-party function $g\colon \Sigma\times\Sigma\to\{0,1\}$, called a \emph{gadget}. The composed search problem $R\circ g$ is defined as follows: Alice holds $\alpha\in\Sigma^N$, Bob holds $\beta\in\Sigma^N$, and their goal is to find an $o\in\calO$ such that $(x,o)\in R$ where
\[
x~\coloneqq~g^N(\alpha,\beta)~=~(g(\alpha_1,\beta_1),\ldots,g(\alpha_N,\beta_N)).
\]
It is generally conjectured that the randomized communication complexity of $R\circ g$ is characterized by the randomized query complexity of $R$, provided the gadget $g$ is chosen carefully. This was proved in~\cite{goos17bpp}, but only for a \emph{non-constant-size} gadget where Alice's input is $\Theta(\log N)$ bits. This is prohibitively large for us, since we seek protocols $\Pi_v$ of \emph{constant} communication cost. We use instead a more restricted lifting theorem due to \cite{goos14communication} (building on \cite{huynh12virtue}) that works for a constant-size gadget, but can only lift critical block sensitivity bounds.
\begin{lemma}[\cite{goos14communication}] \label{lem:lift}
There is a fixed gadget $g\colon \Sigma\times\Sigma\to\{0,1\}$ such that for any $R\subseteq\{0,1\}^N\times\calO$ the randomized communication complexity of $R\circ g$ is at least $\Omega(\cbs(R))$.
\end{lemma}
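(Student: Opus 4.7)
\textbf{Proof plan for Lemma~\ref{lem:lift}.}
The plan is to follow the standard ``query-to-communication simulation'' paradigm: transform any short randomized protocol for $R\circ g$ into a randomized decision tree for $R$ whose critical block sensitivity is bounded by the communication cost. First I would fix a constant-size gadget $g\colon\Sigma\times\Sigma\to\{0,1\}$ (over a moderate constant-size alphabet) with a strong pseudo-random property: for every combinatorial rectangle $A\times B\subseteq\Sigma^N\times\Sigma^N$ that is not too sparse, the induced distribution of $g^N(\alpha,\beta)$ under uniform $(\alpha,\beta)\in A\times B$ is close to uniform on $\{0,1\}^N$ outside a small ``structured'' set of coordinates that are already essentially fixed by the rectangle. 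This is the gadget property isolated by Huynh--Nordstr\"om~\cite{huynh12virtue} and refined in~\cite{goos14communication}, realized, e.g., by a two-source-extractor-style gadget over a constant alphabet.

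By Yao's principle, I would fix a distribution $\mu$ achieving $\cbs(R)$, pull it back to the product input space via $g^N$, and reduce to a deterministic protocol of cost $C$. Its transcripts partition $\Sigma^N\times\Sigma^N$ into at most $2^C$ combinatorial rectangles, each labeled with an output $o\in\calO$. The heart of the argument is a structural lemma asserting that for every surviving rectangle $A\times B$ there is a ``fixed-coordinates'' set $I=I(A\times B)\subseteq[N]$ of size $O(C)$ such that (i) a $(1-o(1))$-fraction of the inputs in the rectangle agree with a specific string on $I$, and (ii) the rectangle's label $o$ is a correct $R$-output as a function of $x|_I$ alone. The lemma is proved by iteratively identifying coordinates on which the rectangle reveals non-trivial information about $x_i$, adding them to $I$, restricting the rectangle accordingly, and stopping once the gadget's pseudo-random property takes over on the remaining coordinates.

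Given this structural lemma, the simulation is routine: on input $x\sim\mu$, the decision tree samples a preimage $(\alpha,\beta)$ with $g^N(\alpha,\beta)=x$ using public randomness, runs $\Pi$ to identify the enclosing rectangle $A\times B$, queries only the coordinates $x|_{I(A\times B)}$, and outputs the label. The resulting randomized decision tree solves $R$ with the same error as $\Pi$ while touching only $O(C)$ coordinates, so in particular its critical block sensitivity is $O(C)$. Since $\cbs(R)$ is by definition a lower bound on any such decision tree, we conclude $C\geq\Omega(\cbs(R))$.

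The technically delicate step is the size bound $|I|=O(C)$. With a $\Theta(\log N)$-size gadget one could simply ``decode'' a query index from a single gadget instance (the Raz--McKenzie paradigm underlying~\cite{goos17bpp}), but a constant-size gadget reveals too little per instance, so one must amortize: each bit of communication must either shrink the rectangle's density by a constant factor or extend $I$ by one coordinate. The main obstacle is to arrange the potential function so that this linear trade-off survives down to rectangles of density as small as $2^{-\Omega(N)}$; this is precisely where the gadget's ``thickness'' lemmas from~\cite{huynh12virtue,goos14communication} do the heavy lifting, and the reason the conclusion is limited to $\cbs$ rather than general randomized query complexity.
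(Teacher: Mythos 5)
The paper does not reprove this lemma; it cites it as a black box from~\cite{goos14communication}, so there is no in-paper proof to compare against. Judged on its own, your sketch has a genuine gap.

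The structural lemma you posit is too strong. You claim that every surviving rectangle $A\times B$ admits a set $I$ of $O(C)$ coordinates such that the rectangle's label is a correct $R$-output as a function of $x|_I$ alone, and you then use this to build a randomized decision tree of depth $O(C)$ for $R$. If that were true, you would have proved the \emph{full} query-to-communication lifting theorem for a constant-size gadget: randomized communication of $R\circ g$ lower-bounded by the randomized query complexity of $R$. That is strictly stronger than the $\cbs$ bound and is precisely what is \emph{not} known for any constant-size gadget --- the surrounding text of this paper emphasizes that the full lifting of~\cite{goos17bpp} needs a $\Theta(\log N)$-bit gadget, which is ``prohibitively large'' here, and that one must therefore fall back to the weaker $\cbs$ lifting. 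You flag this tension in your last paragraph, but the outline never shows where the simulation actually breaks down below general query complexity; as written the argument proves too much, which means at least one of its steps cannot go through as stated.

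The actual route in~\cite{huynh12virtue,goos14communication} does not build a decision tree at all. After Yao reduces to a deterministic protocol $\Pi$ of cost $C$, one directly exhibits a total solution $f\subseteq R$ induced by $\Pi$ (roughly, $f(x)$ is $\Pi$'s answer on a random preimage of $x$, with an arbitrary completion to keep $f\subseteq R$) and then bounds $\bs(f,x^*)$ at a critical $x^*$ via a rectangle-corruption/information argument: each sensitive block $B_i$ at $x^*$ forces $\Pi$ to distinguish preimages of $x^*$ from preimages of $(x^*)^{B_i}$, and the gadget's extractor-like behavior on large rectangles makes distinguishing $b$ disjoint blocks cost $\Omega(b)$ bits. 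Criticality is doing real work in that argument: because $x^*$ has a unique solution and every block-flip lands on an input with a different answer, every sensitive block demands a genuine distinction by the protocol, and one never needs the too-strong claim that a rectangle's label is a function of $x|_I$ for \emph{all} inputs in it. Your proposal never invokes criticality at all, which is the tell that it is aiming at the wrong (stronger) theorem and therefore that the key structural lemma needs to be replaced, not merely tightened.
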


\autoref{thm:eol} now follows by combining \autoref{lem:cbs-lb} and \autoref{lem:lift}. We need only verify that the composed problem $\QEoL\circ g$ fits our \EoL template. For $v\in[N]$ consider the protocol $\Pi_v$ that computes as follows on input $(\alpha,\beta)\in\Sigma^E\times\Sigma^E$:
\begin{enumerate}[noitemsep]
\item Alice sends all symbols $\alpha_e\in\Sigma$ for $e$ incident to $v$.
\item Bob privately computes all values $x_e=g(\alpha_e,\beta_e)$ for $e$ incident to $v$.
\item Bob announces $S(v)$ as the first out-neighbor of $v$ in the subgraph determined by $x$ if such an out-neighbor exists; otherwise Bob announces $S(v)\coloneqq v$. Similarly for $P(v)$.
\end{enumerate}
This protocol has indeed cost $c=O(1)$ because $H$ is of bounded degree and $|\Sigma|$ is constant.

\subsection{Step 3: Reduction to \eps-BFP}

By Brouwer's fixed point theorem, any continuous function $f\colon [0,1]^m \rightarrow [0,1]^m$ has a fixed point, that is, $x^*$ such that $f(x^*) = x^*$. 
The BFP query problem is to find such a fixed point, given oracle access to $f$. We will consider the easier $\epsilon$-BFP problem, where we merely have to find an $x$ such $f(x)$ is $\epsilon$-close to $x$.

A theorem of~\cite{R16} reduces \QEoL~to $\epsilon$-BFP with $m = O(\log(N))$.
For our purposes, there are two downsides to using this theorem. First, it is a reduction between query complexity problems, which seems to undermine the lifting to communication we obtained in Step 2. (This obstacle was already encountered in~\cite{RW} and resolved in~\cite{BR16-CC-nash}.) 

The second issue with~\cite{R16}'s reduction is that it blows up the search space. We can discretize $[0,1]$ to obtain a finite search space. But even if the discretization used one bit per coordinate (and in fact we need a large constant number of bits), the dimension $m$ is still larger than $\log_2N$ by yet another constant factor due to the seemingly-unavoidable use of error correcting codes. All in all we have a polynomial blow-up in the size of the search space, and while that was a non-issue for~\cite{R16,BR16-CC-nash}, it is crucial for our fine-grained result.

Our approach for both obstacles is to postpone dealing with them to Step 4.
But for all the magic to happen in Step 4, we need to properly set up some infrastructure before we conclude Step~3. 
Concretely, without changing the construction of $f$ from~\cite{R16}, we observe that it can be computed in a way that is ``local'' in two different ways (we henceforth say that $f$ is {\em doubly-local}). Below is an informal description of what this means; see \autoref{subsec:Local-brouwer} for details. 
\begin{itemize}
\item First, every point $x \in [0,1]^m$ corresponds to a vertex $v$ from the host graph of the \QEoL problem%
\footnote{In fact, each $x$ corresponds to zero, one, or two vertices from the host graph, where the two vertices are either identical or neighbors. For simplicity, in this informal discussion we refer to ``the corresponding vertex''.}. We observe that in order to compute $f(x)$, one only needs local access to the neighborhood of $v$ of the \QEoL (actual, not host) graph. A similar sense of locality was used in~\cite{BR16-CC-nash}.
\item Second, if we only want to compute the $i$-th coordinate of $f(x)$, we do not even need to know the entire vector $x$. Rather, it suffices to know $x_i$, the values of $x$ on a random subset of the coordinates, and the local information of the \QEoL graph described in the previous bullet (including $v$). This is somewhat reminiscent of the local decoding used in~\cite{R16} (but our locality is much simpler and does not require any PCP machinery).
\end{itemize} 

\begin{theorem}[{\QEoL~to $\epsilon$-BFP, informal~\cite{R16}}]
There is a reduction from \QEoL over $N$ vertices to $\epsilon$-\BFP on a function $f\colon [0,1]^m \rightarrow [0,1]^m$, where $f$ is ``doubly-local''.
\end{theorem}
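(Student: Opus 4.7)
The plan is to import the reduction of~\cite{R16} essentially verbatim and verify, by direct inspection, that the Brouwer function $f$ it produces is doubly-local; there is no need to redesign $f$. Recall the shape of Rubinstein's construction: one fixes a binary error-correcting code $c\colon[N]\to\{0,1\}^m$ with $m=O(\log N)$ and constant relative distance, identifies each vertex $v$ of the \QEoL graph with the codeword $c(v)\in[0,1]^m$, and defines $f$ as the identity plus a small displacement field. Near $c(v)$ the displacement points toward $c(S(v))$; along the ``edge tubes'' between $c(v)$ and $c(u)$ the displacements interpolate; and on the rest of the cube a standard retracting field pushes $x$ back toward the $1$-skeleton of the embedded graph. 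Endpoints of $G$ are exactly where the outgoing displacement vanishes, so $\epsilon$-fixed points of $f$ correspond to \QEoL solutions---this correctness argument is inherited from~\cite{R16} unchanged.

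For the first (vertex) locality I would argue as follows. Decoding $x$ to the nearest codeword (or, inside an edge tube, the nearest pair of adjacent codewords) depends only on the code $c$ and not on $G$; it returns some vertex $v\in[N]$ and at most one candidate neighbor $u$. Given $(v,u)$, the displacement at $x$ is a fixed closed-form expression in $c(v),c(u)$ and $x$, using no further global information about $G$. In the communication setting, the edge $(v,u)$ itself is reported by a constant-cost call to the $\Pi_v$ (and $\Pi_u$) protocols from \autoref{thm:eol}, so $O(1)$ such calls in the host-graph neighborhood of $v$ suffice to determine $f(x)$ in full. This matches the locality argument already used in~\cite{BR16-CC-nash}.

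For the second (per-coordinate) locality, the key observation is that, inside a given vertex region, the $i$-th coordinate of the displacement is a constant-complexity function of only $x_i$, $c(v)_i$ and $c(u)_i$; once $(v,u)$ is known, computing $f(x)_i$ needs only two code bits and one coordinate of $x$. To identify $(v,u)$ without reading all of $x$, sample a random subset $T\subseteq[m]$ of size $\polylog(N)$, look up $\{x_j\}_{j\in T}$, and select the unique $v$ whose codeword $c(v)$ agrees with $x$ on $T$ up to the allowed quantization slack (and take $u=S(v)$ via $\Pi_v$). The constant relative distance of the code guarantees, with high probability, the uniqueness of this $v$, and the procedure is one random check rather than a local-decoding query---this is precisely what the excerpt means by not needing any PCP machinery.

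The main obstacle I expect is handling the transition regions, namely the edge tubes and the boundaries between adjacent vertex regions, where $x$ is near more than one codeword. The cleanest route is to arrange, as~\cite{R16} essentially already does, that the interpolation inside each edge tube be coordinate-wise affine in $x$, so that once the ordered pair $(v,u)$ is identified the $i$-th displacement remains a closed-form function of $c(v)_i,c(u)_i,x_i$ together with a one-dimensional ``progress'' parameter along the tube; the latter can be estimated from the same $\polylog(N)$-size sample $\{x_j\}_{j\in T}$ by a sample mean, so no additional reads are needed. Subject to verifying this affineness, the doubly-local presentation of $f$ follows mechanically and is ready to be plugged into the payoff construction of Step~4.
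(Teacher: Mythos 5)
Your high-level read of R16's reduction is roughly right, but the proposal has two genuine gaps that make the verification of double locality fail precisely where it matters for Step~4.

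First, the sample $T$ is far too small and comes from the wrong family. You propose a uniformly random subset $T\subseteq[m]$ of size $\polylog(N)$. The paper instead uses subsets of size $m/\ell$ with $\ell=\sqrt{\log n}$ --- nearly linear in $m$ --- drawn from a small ($\ell^{k+1}$-size) $k$-wise independent family of ``column'' subsets $\sigma_j$ and ``row'' subsets $\tau_j$ engineered so that every bichromatic intersection has size exactly $m/\ell^2$. This is not an incidental parameter choice. In Step~4 Alice's action carries $\mathbf{x}^{(\mathbf{a})}\mid_{\sigma_{j^{(\mathbf{a})}}}$, Bob's carries $\hat{\mathbf{x}}^{(\mathbf{b})}\mid_{\tau_{j^{(\mathbf{b})}}}$, and their payoffs are defined on the intersection $\sigma_{j^{(\mathbf{a})}}\cap\tau_{j^{(\mathbf{b})}}$. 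With $\polylog(N)$-size uniformly random subsets this intersection is typically empty, so the payoff comparison underlying the entire equilibrium analysis evaporates. The family must also be small enough that an index $j\in[\ell^{k+1}]$ fits inside an action; a uniformly random subset family is exponentially larger.

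Second, the ``coordinate-wise affine'' claim is not accurate for R16's displacement. The displacement $\hat{g}(\mathbf{x})$ near a Brouwer line segment (equations \eqref{eq:line-displacement} and \eqref{eq:vertex-displacement}) is selected by a case analysis on the \emph{perpendicular distance} $\left\Vert\mathbf{x}-\mathbf{z}\right\Vert_2$ and, near a vertex, on the interpolation parameter $\psi$ --- both are global, nonlinear functionals of $\mathbf{x}$, not just of a single ``progress'' scalar. The paper's \autoref{lem:doubly-local} handles this by estimating $\tau$, $\psi$, \emph{and} $\left\Vert\mathbf{x}-\mathbf{z}\right\Vert_2$ from the sampled block $\mathbf{x}\mid_{[4]\times T}$ via the $k$-wise Chernoff bound (\autoref{thm:chernoff-k-wise}); your argument only estimates the progress parameter. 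Relatedly, your sketch of the embedding (``near $c(v)$ the displacement points toward $c(S(v))$'') omits R16's essential four-block structure (current vertex, next vertex, compute-vs-copy bit, default direction) through the waypoints $\mathbf{x}^1,\dots,\mathbf{x}^5$. Without that structure, consecutive line segments are not orthogonal and may be close to each other, so the correctness argument you are claiming to inherit ``unchanged'' does not in fact carry over.
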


\subsection{Step 4: Reduction to \eps-Nash}

The existence of a Nash equilibrium is typically proved using Brouwer's fixed point theorem. McLennan and Tourky~\cite{MT05-imitation} proved the other direction, namely that the existence of a Nash equilibrium in a special {\em imitation game} implies an existence of a fixed point. Viewed as a reduction from Brouwer fixed point to Nash equilibrium, it turns out to be (roughly) approximation-preserving, and thus extremely useful in recent advances on hardness of approximation of Nash equilibrium in query complexity~\cite{B13, CCT, R16}, computational complexity~\cite{R15, R16}, and communication complexity~\cite{RW, BR16-CC-nash}.

In the basic imitation game, we think of Alice's and Bob's action space as $[0,1]^m$, and define their utility functions as follows. First, Alice chooses $x^{(\mathbf{a})} \in [0,1]^m$ that should imitate the $x^{(\mathbf{b})} \in [0,1]^m$ chosen by Bob:
$$ U^A\left(x^{(\mathbf{a})}; x^{(\mathbf{b})}\right)
 \coloneqq - \left\Vert x^{(\mathbf{a})} - x^{(\mathbf{b})}\right\Vert_2^2.$$
Notice that Alice's expected utility decomposes as 
$$ \E_{x^{(\mathbf{b})}}\left[U^A\left(x^{(\mathbf{a})}; x^{(\mathbf{b})}\right) \right]
 = - \left\Vert x^{(\mathbf{a})} - \E\left[x^{(\mathbf{b})}\right]\right\Vert_2^2 -\Var\left[x^{(\mathbf{b})}\right],$$
 where the second term does not depend on Alice's action at all. This significantly simplifies the analysis because we do not need to think about Bob's mixed strategy: in expectation, Alice just tries to get as close as possible to $\E\left[x^{(\mathbf{b})}\right]$. 
Similarly, Bob's utility function is defined as:
$$ U^B\left(x^{(\mathbf{b})}; x^{(\mathbf{a})}\right)
 \coloneqq - \left\Vert f\left(x^{(\mathbf{a})}\right) - x^{(\mathbf{b})}\right\Vert_2^2.$$
It is not hard to see that in every Nash equilibrium of the game, $x^{(\mathbf{a})} = x^{(\mathbf{b})} = f\left(x^{(\mathbf{a})}\right)$.

For our reduction, we need to make some modifications to the above imitation game.
First, observe that Bob's utility must not encode the entire function $f$---otherwise Bob could find the fixed point (or Nash equilibrium) with zero communication from Alice!
Instead, we ask that Alice's action specifies a vertex $v^{(\mathbf{a})}$, as well as her inputs to the lifting gadgets associated to (edges adjacent to) $v^{(\mathbf{a})}$. If $v^{(\mathbf{a})}$ is indeed the vertex corresponding to $x^{(\mathbf{a})}$, Bob can use his own inputs to the lifting gadgets to locally compute $f\left(x^{(\mathbf{a})}\right)$ (this corresponds to the first type of ``local''). 

The second issue is that for our fine-grained reduction, we cannot afford to let Alice's and Bob's actions specify an entire point $x \in [0,1]^m$. 
Instead, we force the equilibria of the game to be strictly mixed, where each player chooses a small (pseudo-)random subset of coordinates~$[m]$. 
Then, each player's {\em mixed strategy} represents $x \in [0,1]^m$, but each action only specifies its restriction to the corresponding subset of coordinates. 
By the second type of ``local'', Bob can locally compute the value of $f(x)$ on the intersection of subsets.
Inconveniently, the switch to mixed strategies significantly complicates the analysis: we have to make sure that Alice's mixed strategy is consistent with a single $x \in [0,1]^m$, deal with the fact that in any approximate equilibrium she is only approximately randomizing her selection of subset, etc.

Finally, the ideas above can be combined to give an $N^{1-o(1)}$ lower bound on the communication complexity (already much stronger than previous work). The bottleneck to improving further is that while we are able to distribute the vector $x$ across the support of Alice's mixed strategy, we cannot do the same with the corresponding vertex $v$ from the \EoL graph. 
The reason is that given just a single action of Alice (not her mixed strategy), Bob must be able to compute his own utility; for that he needs to locally compute $f(x)$ (on some coordinates); and even with the doubly-local property of $f$, that still requires knowing the entire $v$. 
Finally, even with the most succinct encoding, if Alice's action represents an arbitrary vertex, she needs at least $N$ actions. 
To improve to the desired $N^{2-o(1)}$ lower bound, we observe that when Bob locally computes his utility he does have another input: his own action. We thus split the encoding of $v$ between Alice's action and Bob's action, enabling us to use an \EoL host graph over $N^2$ vertices. (More generally, for an asymmetric $N^a\times N^b$ game we can split the encoding unevenly.)

\section{Critical Block Sensitivity of EoL} \label{sec:eol}

In this section, we define \emph{critical block sensitivity}, and then prove \autoref{lem:cbs-lb}, restated here:
\cbslemma*
Our construction of $H$ will have one additional property, which will be useful in \autoref{sec:game}.
\begin{fact} \label{fact:labels}
In \autoref{lem:cbs-lb}, we can take $V(H)\coloneqq \{0,1\}^n$ (where $2^n=N$) such that the labels of any two adjacent vertices differ in at most $O(1)$ coordinates.
\end{fact}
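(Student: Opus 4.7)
The plan is to exploit the explicit layered structure of the butterfly-based host graph $H$ alluded to in the paragraph before \autoref{lem:cbs-lb}. Recall that a $k$-dimensional butterfly has vertex set $\{0,1,\ldots,k\} \times \{0,1\}^k$, with each vertex $(i,x)$ at level $i<k$ connected to $(i+1, x)$ and $(i+1, x \oplus e_{i+1})$ at level $i+1$. Along any butterfly edge the level shifts by one and the position word $x$ flips in at most one coordinate; hence once we also encode the level in a Hamming-friendly way, the natural labeling of positions by their defining $k$-bit strings takes care of the rest automatically.

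Concretely, the labeling I would propose is the concatenation $g(i)\,\|\,x$, where $g\colon\{0,1,\ldots,k\}\to\{0,1\}^\ell$ with $\ell=\lceil\log_2(k+1)\rceil$ is the standard binary-reflected Gray code. This uses $n\coloneqq \ell+k$ bits per vertex and yields $2^n=\Theta(N)$ for the $N=(k+1)\cdot 2^k$ vertices of the butterfly. To land on an exact power of two, I would pad with $2^n-(k+1)2^k$ isolated dummy vertices; these do not affect the $\tOmega(N)$ critical block sensitivity bound and are vacuously consistent with the adjacency condition. With this labeling, any butterfly edge connects consecutive levels $i$ and $i+1$, so the level-prefix flips in exactly one bit (Gray code property) and the position-suffix flips in at most one bit, giving total Hamming distance at most $2$.

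The main obstacle I anticipate is that the actual host graph appearing in \autoref{lem:cbs-lb} will not be a pure butterfly: to make it a valid $\QEoL$ instance it must also carry successor/predecessor orientation, designate the start vertex~$1$, and connect the first and last butterfly levels into a single directed path (perhaps via a matching or a few short ``gadget'' edges). However, all such modifications are local and add only $O(1)$ edges per vertex, and they can be absorbed by reserving a constant number of dedicated ``bookkeeping'' bit positions in the label so that each added edge flips only $O(1)$ of them. Modulo this routine bookkeeping, the Hamming distance between any two adjacent vertices remains $O(1)$, which is exactly what the fact asserts.
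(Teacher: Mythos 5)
The core idea you use for the butterfly part — Gray-code the level index and keep the position word as the raw bit string, then concatenate — is exactly what the paper does for the two-glued-butterfly graph $H$, viewed as $\{0,1\}^n\times[2n]$. So that part is on target. But your proposal never grapples with the actual host graph of \autoref{lem:cbs-lb}, and your closing hand-wave is not enough to close the gap.

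Specifically, the graph proved hard in \autoref{lem:cbs-lb} is not the glued butterfly $H$. It is $H'$, obtained from $H$ by first duplicating every edge $d=\Theta(\log N)$ times (yielding the multigraph $H^d$) and then taking the \emph{replacement product} with a bounded-degree graph $K$ that emulates a complete bipartite graph on $2d$ left and $2d$ right vertices via two glued binary trees. So $V(H')=V(H^d)\times V(K)$, and $|V(K)|=\Theta(d^2)$ with $\Theta(\log d)=\Theta(\log\log N)$ layers. Your plan to ``absorb'' this with a constant number of reserved bookkeeping positions cannot work: the $K$-component contributes $\Theta(\log\log N)$ bits to the label, not $O(1)$, and the adjacency structure within $K$ — a layered binary-tree/bipartite graph — requires its own argument that only $O(1)$ of those bits flip per edge. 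The paper handles this by observing that $H'$-adjacency forces both the $H$-part and the $K$-part to be adjacent in their respective graphs, so one can concatenate a difference-property labeling of $H$ (your Gray-code construction) with a difference-property labeling of $K$ (obtained by the same trick: view $V(K)\subseteq\{0,1\}^\delta\times\{0,1\}^\delta\times[2\delta+1]$ with $\delta=\log 2d$, where an edge moves one layer and flips at most one tree bit, and Gray-code the layer index). That second labeling is the missing piece in your proof; ``a constant number of bookkeeping bits'' is not a substitute for it.

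One smaller point: your Gray code is applied to a single butterfly's level set $\{0,\dots,k\}$, but the host $H$ is two butterflies glued at the middle and end layers, so the relevant index set is $[2n]$ (arranged cyclically); this is an easy fix but worth getting right since it changes the vertex count from $(k+1)2^k$ to $2n\cdot 2^n$.
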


\subsection{Definitions} \label{sec:def-cbs}
Let $R\subseteq\{0,1\}^N\times\calO$ be a search problem, that is, on input $x\in\{0,1\}^N$ the goal is to output some $o\in\calO$ such that $(x,o)\in R$. We call an input $x$ \emph{critical} if it admits a unique solution. Let $f\subseteq R$ be some total function $\{0,1\}^N\to\calO$ that solves the search problem, that is, $(x,f(x))\in R$ for all $x$. The \emph{block sensitivity} of $f$ at input $x$, denoted $\bs(f,x)$, is the maximum number $b$ of pairwise-disjoint blocks $B_1,\ldots,B_b\subseteq [N]$ each of which is \emph{sensitive} for $x$, meaning $f(x)\neq f(x^{B_i})$ where $x^{B_i}$ is $x$ but with bits in $B_i$ flipped. The \emph{critical block sensitivity} of $R$~\cite{huynh12virtue} is defined as
\[
\cbs(R)~\coloneqq~\min_{f\subseteq R}\max_{\text{critical}~x}\bs(f,x).
\]

\subsection{Unbounded degree} \label{sec:unbounded-deg}

As a warm-up, we first study a simple version of the \QEoL problem relative to a \emph{complete} host graph, which is equipped with successor and predecessor \emph{pointers}. The input is a string $x\in\{0,1\}^{N'}$ of length $N'\coloneqq 2N\log N$ describing a digraph $G_x$ of in/out-degree $\leq 1$ on the vertex set~$[N]$. Specifically, for each $v \in[N]$, $x$ specifies a $\log N$-bit predecessor pointer and a $\log N$-bit successor pointer. We say there is an edge $(v,u)$ in $G_x$ iff $u$ is the successor of $v$ and $v$ is the predecessor of~$u$. The search problem is to find a vertex $v\in[N]$ such that either
\begin{itemize}[noitemsep,label=$-$]
\item $v=1$ and $v$ is a non-source or a sink in $G_x$; or
\item $v\neq1$ and $v$ is a source or a sink in $G_x$.
\end{itemize}

Let $f\colon\{0,1\}^{N'}\to[N]$ be any function that solves the search problem. Our goal is to show that there is a critical input $x$ such that $\bs(f,x)\geq \Omega(N)$.

\paragraph{Two examples.}
It is instructive to first investigate two extremal examples of $f$. What does~$f$ output on a \emph{bicritical} input consisting of two disjoint paths (one starting at the special vertex $1$)? Any function $f$ must make a choice between the unique \emph{canonical} solution (end of the path starting at vertex 1) and the other two \emph{non-canonical} solutions:
\begin{center}
\begin{lpic}[t(-5mm),b(-2mm)]{figs/bicritical(.45)}
\lbl[c]{20,20;\color{white}\boldmath$1$}
\small\itshape
\lbl[c]{140,40;canonical}
\lbl[c]{200,40;non-canonical}
\lbl[c]{280,40;non-canonical}
\end{lpic}
\end{center}

\emph{Example 1:} Suppose $f$ \emph{always} outputs the canonical solution on a bicritical input. Consider any critical input $x$ (a single path starting at vertex 1) and define a system of $N-1$ disjoint blocks $B^\del_1,\ldots,B^\del_{N-1}\subseteq[N']$ such that $x^{B^\del_i}$ is $x$ but with the $i$-th edge $(v,u)$ \emph{deleted} (say, by assigning null pointers as the successor of $v$ and as the predecessor of to $u$).
\begin{center}
\begin{lpic}[t(-4mm),b(-2mm)]{figs/bicritical-canonical(.45)}
\lbl[c]{40,20;\color{white}\boldmath$1$}
\small
\lbl[c]{160,20;$v$}
\lbl[c]{200,20;$u$}
\lbl[c]{180,40;\itshape $i$-th block}
\end{lpic}
\end{center}
Each $B^\del_i$ is sensitive for $x$ since $f(x)$ is the end of the path $x$ whereas $f(x^{B^\del_i})$ is the newly created canonical solution $v$. This shows $\bs(f,x)\geq N-1$.

\emph{Example 2:} Suppose $f$ \emph{always} outputs a non-canonical solution on a bicritical input. Consider any critical input $x$. Pair up the $i$-th and the $(N-i)$-th edge of $x$. We use these edge pairs to form $N/2-1$ blocks $B^\cut_1,\ldots,B^\cut_{N/2-1}\subseteq[N']$ (assume $N$ is even). Namely, $x^{B^\cut_i}$ is the bicritical input obtained from $x$ by \emph{shortcutting} the $i$-th pair of edges: delete the $i$-th edge pair, call them $(v,u)$ and $(v',u')$, and insert the edge $(v,u')$ (say, by assigning null pointers as the predecessor of $u$ and as the successor of $v'$, and making $(v,u')$ a predecessor--successor pair).
\begin{center}
\begin{lpic}[t(-2mm),b(-2mm)]{figs/bicritical-non-canonical(.45)}
\lbl[c]{20,60;\color{white}\boldmath$1$}
\lbl[c]{220,60;\color{white}\boldmath$1$}
\lbl[c]{182,52;$x^{B^\cut_i}$}
\lbl[c]{180,40;\scalebox{2.8}{\color{gray} $\leadsto$}}
\small
\lbl[c]{60,60;$v$}
\lbl[c]{100,60;$u$}
\lbl[c]{60,20;$u'$}
\lbl[c]{100,20;$v'$}
\lbl[c]{260,60;$v$}
\lbl[c]{300,60;$u$}
\lbl[c]{260,20;$u'$}
\lbl[c]{300,20;$v'$}
\lbl[c]{80,40;\itshape $i$-th block}
\end{lpic}
\end{center}
Each $B^\cut_i$ is sensitive for $x$ since $f(x)$ is the end of the path $x$ whereas $f(x^{B^\cut_i})$ is one of the newly created non-canonical solutions $u$ and $v'$. This shows $\bs(f,x)\geq N/2-1$.

\paragraph{General case.}
A general $f\colon\{0,1\}^{N'}\to[N]$ need not fall into either example case discussed above: on bicritical inputs, $f$ can decide whether or not to output a canonical solution based on the path lengths and vertex labels. However, we can still classify any $f$ according to which decision (canonical vs.\ non-canonical) it makes on \emph{most} bicritical inputs. Indeed, we define two distributions:

\begin{itemize}[itemsep=2pt]
\item \emph{Critical:} Let $\calD_1$ be the uniform distribution over critical inputs. That is, generate a directed path of length $N-1$ with vertex labels picked at random from $[N]$ (without replacement) subject to the start vertex being $1$.
\item \emph{Bicritical:} Let $\calD_2$ be a distribution over bicritical inputs generated as follows: choose an even number $\ell\in \{2,4,\ldots N-2\}$ uniformly at random, and output a graph consisting of two directed paths, one having $\ell$ vertices, the other having $N-\ell$ vertices. The vertex labels are a picked at random from~$[N]$ (without replacement) subject to the start vertex of the first path being $1$.
\end{itemize}

Given a sample $x\sim\calD_1$ we can generate a sample $y\sim\calD_2$ by either deleting (Example 1) or shortcutting (Example 2) edges of $x$. Specifically:
\begin{enumerate}[noitemsep]
\item \emph{Deletion:} Let $x\sim\calD_1$ and choose $i\in\{2,4,\ldots,N-2\}$ at random. Output $y\coloneqq x^{B^\del_i}$.
\item \emph{Shortcutting:} Let $x\sim\calD_1$ and choose $j\in[N/2-1]$ at random. Output $y\coloneqq x^{B^\cut_j}$.
\end{enumerate}
We have two cases depending on whether or not $f$ prefers canonical solutions on input $y\sim\calD_2$. Indeed, consider the probability
\[
\Pr_{y\sim\calD_2}\big[\,f(y) \text{ is canonical}\,\big]~=~
\begin{cases}\label{eq:prob}
~~\Pr_{x\sim\calD_1,i\in\{2,4,\ldots,N-2\}}\big[\,f(x^{B^\del_i}) \text{ is canonical}\,\big],\\
~~\Pr_{x\sim\calD_1,j\in[N/2-1]}\big[\,f(x^{B^\cut_j}) \text{ is canonical}\,\big].
\end{cases}
\]
\emph{Case ``\,$\geq 1/2$'':} Here $\Pr_{x,i}[\,f(x^{B^\del_i})\neq f(x)]\geq 1/2$ since $f(x)$ is non-canonical for $x^{B^\del_i}$. By averaging, there is some fixed critical input $x$ such that $\Pr_i[f(x^{B^\del_i})\neq f(x)]\geq 1/2$. But this implies $\bs(f,x)\geq\Omega(N)$, as desired.
\emph{Case ``\,$\leq 1/2$'':} Here $\Pr_{x,j}[f(x^{B^\cut_j})\neq f(x)]\geq 1/2$ since $f(x)$ is canonical for $x^{B^\cut_j}$. By averaging, there is some fixed critical input $x$ such that $\Pr_j[f(x^{B^\cut_j})\neq f(x)]\geq 1/2$. But this implies $\bs(f,x)\geq\Omega(N)$, concluding the proof (for unbounded degree).

\subsection{Logarithmic degree} \label{sec:log-degree}

Next, we prove an $\tOmega(N)$ query lower bound for $\QEoL_H$ where the host graph $H=([N],E)$ has degree $O(\log N)$. As a minor technicality, in this section, we relax the rules of the \QEoL problem (as originally defined in \autoref{sec:steps-12}) by allowing many paths to pass through a single vertex; we will un-relax this in \autoref{sec:constant-deg}. Namely, an input $x\in\{0,1\}^{E(H)}$ describes a subgraph $G_x$ of $H$ as before. The problem is to find a vertex $v\in V(H)$ such that either
\begin{itemize}[noitemsep,label=$-$]
\item $v=1$ and $\outdeg(v)\neq\indeg(v)+1$ in $G_x$; or
\item $v\neq1$ and $\outdeg(v)\neq\indeg(v)$ in $G_x$.
\end{itemize}

\paragraph{Host graph.}
For convenience, we define our host graph as a \emph{multigraph}, allowing parallel edges. We describe below a \emph{simple bounded-degree} digraph $H=([N'],E)$. The actual host graph is then taken as~$H^d$, $d\coloneqq \log N'$, defined as the graph $H$ but with each edge repeated $d$ times.

The digraph $H$ is constructed by glueing together two \emph{buttefly graphs}. The $n$-th \emph{butterfly graph} is a directed graph with $n+1$ layers, each layer containing $2^n$ vertices: the vertex set is $\{0,1\}^n\times [n+1]$ and each vertex $(z,i)$, $i\leq n$, has two out-neighbours, $(z,i+1)$ and $(z^i,i+1)$, where $z^i$ is $z$ but with the $i$-th bit flipped. Let $F_0$ and $F_1$ be two copies of the $n$-th butterfly graph. To construct~$G$, we identify the last layer of~$F_b$, $b=0,1$, with the first layer of~$F_{1-b}$. Thus $G$ has altogether $N'\coloneqq 2Nn$ vertices where $N\coloneqq 2^n$. We rename the first layer of $F_0$ (i.e., last layer of~$F_1$) as $[N]$ and the remaining vertices arbitrarily so that $V(H)=[N']$.
\begin{center}
\begin{lpic}[t(-3mm)]{figs/butterfly(.4)}
\Large
\lbl[r]{-7,62.5;$H~~\coloneqq$}
\normalsize
\lbl[c]{60,2;$F_0$}
\lbl[c]{140,2;$F_1$}
\small
\lbl[c]{20,100;\color{white}\boldmath$1$}
\lbl[c]{20,75;$2$}
\lbl[c]{20,50;$3$}
\lbl[c]{20,25;$4$}
\lbl[c]{180,100;\color{white}\boldmath$1$}
\lbl[c]{180,75;$2$}
\lbl[c]{180,50;$3$}
\lbl[c]{180,25;$4$}
\end{lpic}
\end{center}

\paragraph{Oblivious routing.}
To prove a critical block sensitivity lower bound we proceed analogously to the unbounded-degree proof in \autoref{sec:unbounded-deg}. The key property of $H^d$ that we will exploit is that we can embed inside $H^d$ any bounded-degree digraph $G$ on the vertex set $V(G)=[N]$. Namely, we can embed the vertices via the identity map $[N]\to[N']$, and an edge $(v,u)$ of $G$ as a $(v,u)$-path in $H^d$ (left-to-right path in the above figure) in such a way that any two edges of $G$ map to \emph{edge-disjoint} paths. Moreover, such routing can be done nearly \emph{obliviously}: each path can be chosen independently at random, and the resulting paths can be made edge-disjoint by ``local'' rearrangements. Let us make this formal.

Define $\calP_{(v,u)}$, where $(v,u)\in[N]^2$, as the uniform distribution over $(v,u)$-paths in $H$ of the minimum possible length, namely $2n$. One way to generate a path $p\sim\calP_{(v,u)}$ is to choose a vertex $w$ from $H$'s middle layer (last layer of $F_0$) uniformly at random and define $p$ as the concatenation of the \emph{unique} $(v,w)$-path in $F_0$ and the \emph{unique} $(w,u)$-path in $F_1$. Another equivalent way is to generate a random length-$n$ path starting from $v$ (in each step, choose a successor according to an unbiased coin) and then following the unique length-$n$ path from the middle layer to $u$.

Let $G=([N],E)$ be a bounded-degree digraph. We can try to embed $G$ inside $H$ by sampling a collection of paths from the product distribution $\calP_G\coloneqq \prod_{e\in E}\calP_e$. The resulting paths are likely to overlap, but not by \emph{too} much. Indeed, for an outcome $\vec{p}\in\supp\calP_G$, define the \emph{congestion} of $\vec{p}$ as the maximum over $v\in V(H)$ of the number of paths in $\vec{p}$ that touch $v$.
\begin{claim} \label{cl:congestion}
Let $G=([N],E)$ be a bounded-degree digraph. With probability $1-o(1)$ over $\vec{p}\sim\calP_G$ the congestion is $o(\log N)$.
\end{claim}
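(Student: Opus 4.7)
The plan is to fix an arbitrary $y \in V(H)$, let $X_y \coloneqq \sum_{e\in E}\mathbf{1}[p_e \ni y]$ count the paths in $\vec{p}$ that touch $y$, show that $\E[X_y] = O(1)$ uniformly in $y$, and then apply a Chernoff bound followed by a union bound over the $|V(H)| = 2Nn = O(N\log N)$ vertices of $H$. Crucially, because the paths $p_e \sim \calP_e$ are sampled independently across edges $e \in E$, the indicators $\mathbf{1}[p_e \ni y]$ are independent across $e$, so each $X_y$ is a sum of independent Bernoullis and the Chernoff step applies directly.

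For the expectation I would split into three cases based on where $y$ lives. If $y \in [N]$ (the boundary layer shared by $F_0$ and $F_1$), then $y$ is only visited as the start- or end-vertex of a path, so $X_y$ is bounded deterministically by the in-plus-out-degree of $y$ in $G$, which is $O(1)$. If $y$ is a middle vertex (last layer of $F_0$), then $p_e$ passes through $y$ iff the random midpoint $w$ sampled for $e$ equals $y$, an event of probability $1/N$; summing over $|E| = O(N)$ edges gives $\E[X_y] = O(1)$. Finally, if $y = (z,i)$ is an interior vertex of $F_0$ at layer $0 < i < n$, then the unique $(v,w)$-path in $F_0$ visits $(z,i)$ iff $z_j = v_j$ for all $j \geq i$ and $z_j = w_j$ for all $j < i$; the first condition restricts $v$ to a ``cone'' of $2^i$ source vertices, contributing only $O(2^i)$ edges of $G$ (by bounded degree), while the second is a $2^{-i}$ event over the uniformly random midpoint. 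The two exponential factors cancel to give $\E[X_y] = O(1)$, and the interior-of-$F_1$ case is symmetric with the roles of cone size $2^{n-i}$ and probability $2^{-(n-i)}$ swapped.

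With $\mu \coloneqq \E[X_y] = O(1)$, the standard Chernoff tail bound for a sum of independent $\{0,1\}$ variables gives $\Pr[X_y \geq t] \leq (e\mu/t)^t$. Taking $t \coloneqq C\log N/\log\log N$ for a sufficiently large constant $C$, we have $\log(t/e\mu) = \Theta(\log\log N)$, so $(e\mu/t)^t = \exp(-\Theta(t \log\log N)) = \exp(-\Theta(\log N)) \leq N^{-3}$. A union bound over the $O(N\log N)$ vertices of $H$ then yields $\Pr[\max_y X_y \geq t] = o(1)$, and since $t = O(\log N/\log\log N) = o(\log N)$ this is exactly the claim.

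The single delicate step is the interior expectation: one has to match the size-$2^i$ $v$-cone around $(z,i)$ against the probability $2^{-i}$ that the random midpoint's first $i$ bits coincide with $z$, and see that in either butterfly the cone size times the hit probability is a constant. This is essentially the classical oblivious-routing observation for the butterfly (à la Valiant); what is particular to our setting is only that the sources and sinks of the routed paths come from the edges of a bounded-degree digraph on $[N]$ rather than from a permutation, which is what keeps the ``number of edges in the cone'' under control. Once that accounting is done, the rest is textbook independent-sum concentration and a union bound.
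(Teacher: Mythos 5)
Your proof is correct and takes essentially the same approach as the paper's: a balls-into-bins / max-load analysis showing each vertex's expected congestion is $O(1)$, then concentration and a union bound to get $O(\log N/\log\log N)=o(\log N)$ with high probability. The only difference is that you spell out the per-layer expectation calculation explicitly (using the cone-size vs.\ hit-probability cancellation in the butterfly), whereas the paper first reduces to in/out-degree $1$, cites a balls-into-bins lemma for the middle layer, and disposes of the other layers by the observation that they are the last layer of a smaller butterfly; your explicit accounting is a valid and arguably more self-contained way to say the same thing.
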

\begin{proof}
We may assume that the in/out-degree of every vertex in $G$ is at most $1$, since every bounded-degree graph is a union of constantly many such graphs. The congestion of the vertices on the last layer of $F_0$ is described by the usual balls-into-bins model, namely, $|E|\leq N$ many balls are randomly thrown into $N$ bins. It is a basic fact (e.g.,~\cite[Lemma 5.1]{mitzenmacher05probability}) that the congestion of each of these vertices is~$O(\log N/\log\log N)=o(\log N)$ with probability at least $1-1/N^2$. Similar bounds hold for vertices in the other layers, as they can be viewed as being on the last layer of some smaller butterfly graph. The claim follows by a union bound over all the vertices of $H$.
\end{proof}

Suppose $\vec{p}\in\supp\calP_G$ has congestion bounded by $d=\log N$ (which happens for $\vec{p}\sim\calP_G$ whp by \autoref{cl:congestion}). There is a natural way to embed all the paths $\vec{p}$ inside $H^d$ in an \emph{edge-disjoint} fashion. Since each edge $e$ of $H$ is used by only $\ell\leq d$ paths, say $p_1,\ldots,p_\ell$, there is room to use the $d$ parallel edges $e_1,\ldots,e_d$ corresponding to $e$ in $H^d$ to route the $p_i$ along distinct $e_i$. Such a local routing is fully specified by some injection $\pi_e\colon[\ell]\to[d]$.

We are now ready to formalize how $G$ embeds into $H^d$ via edge-disjoint paths. Namely, $G$ embeds as a distribution $\calH_G$ over subgraphs of $H^d$ (and $\bot$ for failure) defined as follows.
\begin{enumerate}[noitemsep,label=\itshape\arabic*.]
	\item Sample $\vec{p}\sim\calP_G$.
	\item If $\vec{p}$ has congestion $>d$, then output $\bot$; otherwise:
	\begin{enumerate}[noitemsep,label=\itshape\roman*.]
\item embed $\vec{p}$ \emph{randomly} into $H^d$ by choosing all the injections $(\pi_e)_{e\in E(H)}$ uniformly at random;
\item output the resulting subgraph of $H^d$.
	\end{enumerate}
\end{enumerate}
Note that $\Pr_{x\sim\calH_G}[x=\bot]\leq o(1)$ by \autoref{cl:congestion}.
\paragraph{Lower bound proof.}
Let $f\colon\{0,1\}^{E(H^d)}\to V(H^d)$ be a function that solves the \QEoL search problem relative to host graph $H^d$. Recall the distributions $\calD_1$ and $\calD_2$ over critical and bicritical graphs from \autoref{sec:unbounded-deg}. We can extend $\calD_2$ (or $\calD_1$) to a distribution over bicritical subgraphs of $H^d$ by defining $\calH_2\coloneqq\calH_{\calD_2}$, that is, $y\sim\calH_2$ is obtained by first sampling $G_2\sim\calD_2$ and then sampling $y$ from $\calH_{G_2}$. The \QEoL solutions of $y$ can be classified as canonical/non-canonical in the natural way (which respects the embedding). We again have two cases depending on whether $f$ prefers canonical solutions on input $y\sim\calH_2$. That is, consider the probability
\[
\Pr_{y\sim\calH_2}\big[\,y\neq\bot\enspace\text{and}\enspace f(y)\text{ is canonical}\,\big].
\]

\medskip\noindent
\emph{Case ``\,$\geq 1/2$'':}
Define a distribution $\calH_2'$ as follows.
\begin{enumerate}[noitemsep]
\item Sample $G_1\sim\calD_1$, and then $x\sim\calH_{G_1}$.
\item If $x=\bot$, output $y\coloneqq\bot$; otherwise:
\begin{enumerate}[noitemsep,label=\itshape\roman*.]
\item Sample an even $i\in\{2,4,\ldots,N-2\}$.
\item Output $y\coloneqq x^{B^\del_i}$, that is, $x$ but with the $i$-th path (image of $i$-th edge of $G_1$) deleted.
\end{enumerate}
\end{enumerate}
\begin{claim}\label{cl:dist}
Distributions $\calH_2$ and $\calH_2'$ are within $o(1)$ in statistical distance.
\end{claim}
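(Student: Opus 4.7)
The plan is to prove $d_{TV}(\calH_2, \calH_2') = o(1)$ via an explicit coupling that yields identical outputs except on an event of probability $o(1)$. I would first sample $G_1 \sim \calD_1$, a uniform even index $i \in \{2, 4, \ldots, N-2\}$, a path profile $\vec{p} \sim \calP_{G_1}$, and independent uniform local routings $(\pi_e)_{e \in E(H)}$; this is the joint randomness of an $\calH_2'$-sample. From it I would derive an $\calH_2$-sample by setting $G_2 := G_1^{B^\del_i}$ and $\vec{p}_2 := \vec{p}|_{E(G_2)}$. By the deletion identity reviewed in Section~\ref{sec:unbounded-deg}, $G_2 \sim \calD_2$ and $\vec{p}_2 \sim \calP_{G_2}$, so the induced pair is a valid draw for $\calH_2$.

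Next I would compare when each side outputs $\bot$. The procedure $\calH_2$ fails iff the congestion of $\vec{p}_2$ exceeds $d$, whereas $\calH_2'$ fails iff the congestion of the larger profile $\vec{p}$ exceeds $d$. Since $\vec{p}_2 \subseteq \vec{p}$, a $\calH_2$-failure forces a $\calH_2'$-failure, so the only possible disagreement in failure status is that $\calH_2'$ fails while $\calH_2$ does not. Claim~\ref{cl:congestion}, applied to the bounded-degree graph $G_1$, bounds the probability of this event by $o(1)$.

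On the event that neither side fails, I would show the two embedded subgraphs of $H^d$ coincide in distribution. Fix any $e \in E(H)$ and let $S_e$ denote the set of paths in $\vec{p}_2$ passing through $e$; the analogous set for $\vec{p}$ is $S_e$ or $S_e \cup \{p_i\}$ depending on whether $e \in p_i$. The parallel copies of $e$ that survive deletion in $\calH_2'$ are indexed by the image $\pi_e(S_e)$, while in $\calH_2$ the copies used are indexed by the image of an independent uniform injection $\pi_e' \colon S_e \to [d]$. The elementary fact that the restriction of a uniform random injection to a sub-domain is itself a uniform random injection gives that $\pi_e|_{S_e}$ has the same distribution as $\pi_e'$; independence across $e \in E(H)$ then yields matching joint distributions for the two output subgraphs.

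The only subtle point---and the place where I expect the main (if modest) obstacle---is the third step: verifying that deleting the routing of the $i$-th path post hoc really produces the same distribution as never routing it at all. Once this marginalization argument is spelled out, the three ingredients combine to give $d_{TV}(\calH_2, \calH_2') \leq o(1)$, as claimed.
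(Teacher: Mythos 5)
Your proof is correct and takes essentially the same coupling approach as the paper, just run in the reverse direction: you sample the $\calH_2'$ tuple and restrict to get an $\calH_2$ sample, whereas the paper samples $y\sim\calH_2$ and adds a freshly routed path $p\sim\calP_e$ to obtain $y'\sim\calH_2'$. The "subtle point" you flag at the end is exactly what you already dispatch with the observation that the restriction of a uniform injection to a subdomain is again uniform---this is precisely the fact the paper is leaning on (implicitly) when it writes "$y$ with the path $p$ embedded and then immediately deleted."
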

\begin{proof}
Given an $y\sim\calH_2=\calH_{G_2}$ (where $G_2\sim\calD_2$ and the embedding is according to $\vec{p}\sim\calP_{G_2}$) we can generate a sample $y'\sim\calH_2'$ as follows: (1) if $y=\bot$, output $y'\coloneqq\bot$; (2) otherwise, let $e$ be the unique edge such that $G_2+e$ is critical; (3) sample $p\sim\calP_e$; (4) if $(\vec{p},p)$ does not exceed the congestion threshold $d$, output $y'\coloneqq y$ (which equals $y$ with the path $p$ embedded and then immediately deleted!); otherwise $y'\coloneqq\bot$. Here we used the \emph{oblivious routing} property: in embedding $G_2+e=G_1$ we can first embed all of $G_2$ and then the edge $e$. By \autoref{cl:congestion} we have that $y'=y$ with probability $1-o(1)$, which proves the claim.
\end{proof}

From the definition of $\calH_2'$ and \autoref{cl:dist}, we have $\Pr_{x,i}[x\neq\bot\enspace\text{and}\enspace f(x^{B^\del_i})\neq f(x)]\geq 1/2-o(1)$. By averaging, there is some fixed critical input $x$ such that $\Pr_i[f(x^{B^\del_i})\neq f(x)]\geq 1/2-o(1)$. But this implies $\cbs(f,x)\geq \Omega(N)$, as desired.

\bigskip\noindent
\emph{Case ``\,$\leq 1/2$'':}
Let $\calG$ be a distribution over graphs as illustrated below (for $N=10$) with vertex labels randomly chosen from $[N]$ subject to special vertex $1\in[N]$ being as depicted:
\begin{center}
\begin{lpic}[t(-3mm),b(-4mm)]{figs/embedded(.4)}
\lbl[c]{20,60;\color{white}\boldmath$1$}
\end{lpic}
\end{center}
Let $x\sim\calH_\calG$ be an embedding of a random graph from $\calG$. Assuming $x\neq\bot$, we write $\bar{x}$ for the critical input that is the subgraph of $x$ consisting of (the embeddings of) the \emph{solid} edges. Write also $\bar{x}^{B^\cut_j}$ for the bicritical input consisting of $\bar{x}$ minus its $j$-th and $(N-j)$-th edges plus the $j$-th \emph{dashed} edge. Note that $B^\cut_1,\ldots,B^\cut_{N/2-1}\subseteq E(H^d)$ is a system of pairwise-disjoint edge flips.

Define a distribution $\calH_2''$ as follows.
\begin{enumerate}[noitemsep]
\item Sample $x\sim\calH_{\calG}$.
\item If $x=\bot$, output $y\coloneqq\bot$; otherwise:
\begin{enumerate}[noitemsep,label=\itshape\roman*.]
\item Sample $j\in[N/2-1]$.
\item Output $y\coloneqq \bar{x}^{B^\cut_j}$.
\end{enumerate}
\end{enumerate}
The following claim is proved analogously to \autoref{cl:congestion}.
\begin{claim} \label{cl:dist2}
Distributions $\calH_2$ and $\calH_2''$ are within $o(1)$ in statistical distance.\qed
\end{claim}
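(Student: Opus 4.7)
The plan is to couple $\calH_2$ and $\calH_2''$ on a common probability space, analogously to \autoref{cl:dist}. First I would verify the abstract (unembedded) graphs match: for $G\sim\calG$ and uniform $j\in[N/2-1]$, the bicritical $\bar{G}^{B^\cut_j}$ is distributed exactly as $\calD_2$. Indeed, writing the edges of the critical path in $G$ as $e_1,\ldots,e_{N-1}$, removing $e_j,e_{N-j}$ and inserting the $j$-th dashed shortcut produces two disjoint paths on $2j$ and $N-2j$ vertices, with the special vertex $1$ starting the first; as $j$ varies uniformly, the length pair $(2j,N-2j)$ realizes exactly the even-length decomposition required by $\calD_2$, and the labels remain a uniform permutation of $[N]$ with $1$ fixed.

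Next I would build the coupling. Sample $G\sim\calG$, independent paths $p_{e'}\sim\calP_{e'}$ for all $e'\in E(G)$, uniform injections $\pi_e\colon[\ell_e^G]\to[d]$ at each $e\in E(H)$ (with $\ell_e^G$ counting paths through $e$), and an independent $j\in[N/2-1]$; set $G_2\coloneqq\bar{G}^{B^\cut_j}$. On the $\calH_2''$-side, output $\bot$ if $\ell_e^G>d$ for some $e$, else the bicritical subgraph of $H^d$ cut out by $\vec{p}|_{G_2}$ and $\pi_e$. On the $\calH_2$-side, output $\bot$ if $\ell_e^{G_2}>d$ for some $e$, else the embedding given by $\vec{p}|_{G_2}$ together with the restriction $\pi_e|_{\text{bicritical indices}}$. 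Because $\calP_\calG$ is a product distribution, because $G_2\sim\calD_2$ abstractly, and because the restriction of a uniformly random injection $[\ell_e^G]\to[d]$ to any fixed subset of indices is again a uniformly random injection, the $\calH_2$-marginal here is exactly $\calH_{G_2}$ with $G_2\sim\calD_2$, as required.

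Under this coupling the two outputs agree whenever $\ell_e^G\le d$ for all $e\in E(H)$. The main technical step, proved exactly as in \autoref{cl:congestion}, is showing this event has probability $1-o(1)$. The point to check is that the underlying graph of $G$ has $O(N)$ edges (the critical path plus $N/2-1$ shortcuts) and constant maximum degree at most $3$ (each shortcut adds at most one extra outgoing and one extra incoming edge per vertex), so the balls-into-bins union bound of \autoref{cl:congestion} applies verbatim, yielding congestion $o(\log N)\ll d$ at every vertex of $H$ with probability $1-o(1)$. Hence the coupling disagrees with probability $o(1)$, and $\calH_2,\calH_2''$ are within $o(1)$ in statistical distance.
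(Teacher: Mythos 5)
Your overall approach is correct and matches what the paper intends (the paper states the claim is proved analogously to the coupling argument in \autoref{cl:dist}). Your verification that $\bar{G}^{B^\cut_j}$ for $G\sim\calG$ and uniform $j$ is distributed exactly as $\calD_2$ is right, and your observation that the underlying graph of $\calG$ has $O(N)$ edges and maximum degree at most $3$ (so \autoref{cl:congestion} applies) is correct and is the content the paper glosses over.

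There is, however, one technical slip in the way you set up the coupling. You sample the injections $\pi_e\colon[\ell_e^G]\to[d]$ for every $e\in E(H)$ first, and then restrict them to the bicritical paths to get the $\calH_2$-side output. But when $\ell_e^G>d$ no such injection exists, and in the (possible) sub-event where nevertheless $\ell_e^{G_2}\le d$ for all $e$, your $\calH_2$-side is supposed to output a valid non-$\bot$ sample from $\calH_{G_2}$ — yet your construction provides none. Consequently the assertion that ``the $\calH_2$-marginal here is exactly $\calH_{G_2}$'' is not justified as stated. The fix is to sample in the opposite order, which is exactly what the paper does in its proof of \autoref{cl:dist}: first sample the smaller injections $\pi_e^{(2)}\colon[\ell_e^{G_2}]\to[d]$ (this pins down the $\calH_2$-marginal exactly), and then, on the event that the full $G$-congestion is also at most $d$, extend each $\pi_e^{(2)}$ to an injection $\pi_e\colon[\ell_e^G]\to[d]$ by assigning the remaining path indices fresh random targets. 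Under this ordering both marginals are exact, the two outputs coincide whenever the $G$-congestion is bounded by $d$ (which, by \autoref{cl:congestion}, happens with probability $1-o(1)$), and the rest of your argument goes through unchanged.
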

From the definition of $\calH_2''$ and \autoref{cl:dist2}, we have $\Pr_{x,j}[x\neq\bot\enspace\text{and}\enspace f(\bar{x}^{B^\cut_j})\neq f(\bar{x})]\geq 1/2-o(1)$. By averaging, there is some fixed critical input $\bar{x}$ such that $\Pr_j[f(\bar{x}^{B^\cut_j})\neq f(\bar{x})]\geq 1/2-o(1)$. But this implies $\cbs(f,\bar{x})\geq \Omega(N)$, concluding the proof (for logarithmic degree).

\subsection{Constant degree} \label{sec:constant-deg}

\paragraph{Reducing degree.}
The digraph $H^d=([N'],E)$ has in-degree and out-degree $2d=O(\log N')$. It is easy to reduce this to a constant by replacing each vertex in $H^d$ with a bounded-degree graph $K$ that has connectivity properties similar to a complete bipartite graph between $2d$ left vertices (corresponding to incoming edges) and $2d$ right vertices (corresponding to outgoing edges). One way to construct such a graph $K$ is to start with a complete bipartite graph on $[2d]\times\{0\}\cup[2d]\times\{1\}$ and then replace each degree-$2d$ vertex with a binary tree of height $\log 2d$ (assume this is an integer). This produces a layered graph $K$ with $2\log 2d+1$ layers and $O(d^2)$ vertices. Denote by $H'=([N''],E')$ the digraph resulting from replacing each vertex of $H^d$ with a copy of $K$; formally, this construction is known as the \emph{replacement product}; see, e.g.,~\cite[\defaultS6.2]{reingold02entropy}. We have $V(H')\coloneqq V(H^d)\times V(K)$ so that $N''=\tO(N')$, which is only a polylogarithmic blow-up.

\paragraph{Lower bound (sketch).}
The critical block sensitivity lower bound in \autoref{sec:log-degree} extends naturally to the host graph $H'$. Indeed, every subgraph of $H^d$ consisting of edge-disjoint paths that we considered in \autoref{sec:log-degree} corresponds in a natural 1-to-1 way to subgraphs of $H'$ consisting of \emph{vertex}-disjoint paths. In particular, in \autoref{sec:log-degree} we allowed many paths to pass through a single vertex, but the natural mapping will now route at most one path through a vertex. We also interpret each isolated vertex in a subgraph of $H'$ as having a self-loop, so that an isolated vertex does not count as a solution to $\QEoL_{H'}$. In this way, every $f'\colon \{0,1\}^{E(H')}\to V(H')$ solving $\QEoL_{H'}$ induces an $f\colon \{0,1\}^{E(H^d)}\to V(H^d)$ solving $\QEoL_{H^d}$. This concludes the proof of \autoref{lem:cbs-lb}.

\paragraph{Vertex labels.}
Finally, we establish \autoref{fact:labels}. Namely, we argue that for $n'\coloneqq n+O(\log n)$, the vertices of $H'$ can be labeled with $n'$-bit strings having the \emph{difference property}: the labels of any two adjacent vertices differ in at most $O(1)$ coordinates. Since $V(H')=V(H)\times V(K)$ and vertices of $H'$ are adjacent only if their $H$ and $K$ parts are adjacent, it suffices to label both $H$ and $K$ appropriately and then concatenate the labels.

The vertices of $H$, viewed as $\{0,1\}^n\times[2n]$, can be made to have the difference property by just encoding the index set $[2n]$ using a Gray code. Hence it remains to label the vertices of $K$ with $O(\delta)$-bit strings for $\delta\coloneqq\log 2d$. We can view $V(K)\subseteq \{0,1\}^\delta \times \{0,1\}^\delta\times[2\delta+1]$; here, an index in $[2\delta+1]$ indicates a layer; the first layer is $\{0,1\}^\delta\times\{0\}^\delta\times\{1\}$, the last layer is $\{0\}^\delta\times\{0,1\}^\delta\times\{2\delta+1\}$. We can define adjacency similarly as in the butterfly graph so that $(v,v',i)$ is adjacent to $(u,u',j)$ only if the strings $vv'$ and $uu'$ differ in at most one position and $|i-j|\leq 1$. Moreover, we can encode the index set $[2\delta+1]$ using a Gray code.

\section{A Hard Brouwer Function} \label{sec:Brouwer}

In this section we present and slightly modify a reduction due to
\cite{R16} from \EoL (for host graph on $\{0,1\}^n$) to \BFP, the problem of finding an approximate fixed point of a continuous function $f\colon [-1,2 ]^{\Theta(n)}\to [-1,2]^{\Theta(n)}$. The reader should think of the reduction as happening between the \emph{query} variants of both problem, although we will use further properties
of the construction of $f$, as detailed in \autoref{subsec:Local-brouwer}.
The most important, and somewhat novel, part of this section is the
latter \autoref{subsec:Local-brouwer}, where we formulate
the sense in which our hard instance of \BFP is ``local''
and even ``doubly-local''.

The construction has two main components: \autoref{lem:just-a-path} shows how to embed an \EoL graph as a collection
of continuous paths in $[-1,2]^{\Theta(n)}$; \autoref{lem:HPV_2} describes how to embed a continuous Brouwer function whose fixed points correspond to endpoints of the paths constructed in \autoref{lem:just-a-path}.

\subsection{Preliminaries}

We use $\mathbf{0}_{n}$ (respectively $\mathbf{1}_{n}$) to denote
the length-$n$ vectors whose value is $0$ ($1$) in every coordinate.

\paragraph{Constants.}
This section (and the next) uses several arbitrary small constants that satisfy:
\[
0<\epsilon_{\textsc{Nash}}\ll\mbox{\ensuremath{\epsilon_{\textsc{Precision}}}}\ll\epsilon_{\textsc{Uniform}}\ll\epsilon_{\textsc{Brouwer}}\ll\delta\ll h\ll1.
\]
By this we mean that we first pick a sufficiently small constant $h$,
and then a sufficiently smaller constant $\delta$, etc. We will sometimes
use the small constants together with asymptotic notation (e.g., $O\left(\epsilon_{\textsc{Nash}}\right)$),
by which we mean ``bounded by $c\cdot\epsilon_{\textsc{Nash}}$'',
for an absolute constant $c$ (independent of $h$, $\delta$, etc.); in particular if $x=O\left(\epsilon_{\textsc{Nash}}\right)$ then $x\ll\mbox{\ensuremath{\epsilon_{\textsc{Precision}}}}$. 

Although their significance will be fully understood later, we briefly sketch their roles here:
$\epsilon_{\textsc{Nash}}$ is the approximation factor of Nash equilibrium;
$\mbox{\ensuremath{\epsilon_{\textsc{Precision}}}}$ is the precision
with which the players can specify real values (\autoref{subsec:Strategies});
$\epsilon_{\textsc{Uniform}}$ is used in the analysis in of the hard
game (\autoref{subsec:Analysis}) to bound distance-from-uniform
of certain nearly-uniform distributions; every $\epsilon_{\textsc{Nash}}$-Nash
equilibrium corresponds to an $\epsilon_{\textsc{Brouwer}}$-approximate
fixed point (\autoref{subsec:Analysis}), whereas we prove that
it is hard to find $\delta$-approximate fixed points (\autoref{sec:Brouwer});
finally, in the construction of hard Brouwer functions, $h$ quantifies
the size of special neighborhoods around special points (\autoref{sec:Brouwer}).
In \autoref{subsec:Utilities} we will define additional small
constants and relate them to the constants defined here.

\paragraph{Norms.}
We use \emph{normalized} $p$-norms: for a vector $\mathbf{x}=\left(x_{1},\dots,x_{n}\right)\in\mathbb{R}^{n}$
we define 
\[
\left\Vert \mathbf{x}\right\Vert _{p}^{p}\coloneqq\E_{i\in\left[n\right]}\left[\left(x_{i}\right)^{p}\right],
\]
where the expectation is taken wrt the uniform distribution.


\paragraph{Partitioning the coordinates.}

Let $m\coloneqq \Theta(n)$ (where the implicit constant is eventually fixed in \autoref{sec:game-prelim}). Let $\ell$ and $k$ be tiny super-constants, e.g., $\ell\coloneqq k\coloneqq\sqrt{\log n}$.
We consider two families $\sigma_{1},\dots,\sigma_{\ell^{k+1}}$ and
$\tau_{1},\dots,\tau_{\ell^{k+1}}$ of subsets of $\left[m\right]$.
Every subset has cardinality exactly $m/\ell$, and the intersection,
for every ``bichromatic'' pair of subsets satisfies $\left|\sigma_{i}\cap\tau_{j}\right|=m/\ell^{2}$.

To construct the subsets we think of the elements of $\left[m\right]$
as entries of a $\sqrt{m}\times\sqrt{m}$ matrix. Each $\sigma_{j}$
(resp.\ $\tau_{j}$) is a collection of $\sqrt{m}/\ell$ columns (resp.\
rows). Notice that this guarantees the cardinality and intersection
desiderata.

Specifically, we consider a $k$-wise independent hashing of $\left[\sqrt{m}\right]$
into $\ell$ buckets of equal size. By standard constructions (e.g., using low-degree polynomials \cite[Example 7]{Kopparty-notes}), this
can be done using $k\log\ell$ random bits. Consider all $\ell^{k+1}$
possible buckets ($\ell^{k}$ outcomes of the randomness $\times$
$\ell$ buckets for each). For $j\in\left[\ell^{k+1}\right]$, we
let $\sigma_{j}$ (resp.\ $\tau_{j}$) be the union of columns (resp.\
rows) in the $j$-th bucket. This ensures that a random $\sigma_{j}$
correspond to a $k$-wise independent subset of columns.
\paragraph{A concentration bound.} The following Chernoff-type bound for $k$-wise independent random variables is proved in~\cite[Theorem 5.I]{SSS95-chernoff-k-wise}.
\begin{theorem}[\cite{SSS95-chernoff-k-wise}]\label{thm:chernoff-k-wise}
Let $x_1,\dots, x_n\in[0,1]$ be $k$-wise independent
random variables, and let $\mu\coloneqq\E\left[\sum_{i=1}^{n}x_{i}\right]$
and $\delta\leq1$. Then
\[
\Pr\bigg[\Big|\sum_{i=1}^{n}x_{i}-\mu\Big|>\delta\mu\bigg]
~\leq~e^{-\Omega(\min\{ k,\delta^2\mu\} )}.
\]
\end{theorem}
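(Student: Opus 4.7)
My plan is to prove this Chernoff-type bound by the standard method of higher moments, as in \cite{SSS95-chernoff-k-wise}. Let $S \coloneqq \sum_i x_i$ and $y_i \coloneqq x_i - \E x_i$, so $\E S = \mu$, $|y_i| \leq 1$, $\E y_i = 0$, and the $y_i$ inherit $k$-wise independence from the $x_i$. Assume $k$ is even (else replace $k$ by $k-1$). The starting point is Markov's inequality applied to the $k$-th moment:
\[
\Pr\big[|S-\mu| > \delta\mu\big] ~\leq~ \frac{\E[(S-\mu)^k]}{(\delta\mu)^k}.
\]

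First I expand
\[
\E[(S-\mu)^k] ~=~ \sum_{(i_1,\ldots,i_k)\in[n]^k}\E\big[y_{i_1}\cdots y_{i_k}\big]
\]
and group the tuples by their multiset of distinct indices. Since the $y_i$ are $k$-wise independent, each surviving expectation factors over its distinct indices; since $\E y_i = 0$, any tuple in which some index appears exactly once contributes zero. So only tuples in which every distinct index occurs at least twice survive, and in particular the number $s$ of distinct indices satisfies $s \leq k/2$. Using $|y_i|\leq 1$, for any $a\geq 2$ we have $|\E[y_i^a]|\leq \E[y_i^2]\leq \E[x_i]=:p_i$. Combining, each surviving term is bounded by $\prod_r p_{j_r}$, where $j_1,\ldots,j_s$ are the distinct indices.

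Next comes the combinatorics. For a fixed composition $(a_1,\ldots,a_s)$ of $k$ with each $a_r \geq 2$, the number of sequences in $[n]^k$ of that ``shape'' (on $s$ labelled distinct values) is the multinomial $\binom{k}{a_1,\ldots,a_s}$, and summing $\prod_r p_{j_r}$ over choices of $s$ distinct indices is bounded by $\mu^s/s!$. Standard bookkeeping then gives an estimate of the form $\E[(S-\mu)^k] \leq (C k\cdot \max(\mu,k))^{k/2}$ for an absolute constant $C$. The crux is that a composition with $s$ parts of size $\geq 2$ summing to $k$ can have at most $\binom{k-1}{s-1}$ choices, and the multinomial coefficients, once summed against $\mu^s/s!$, peak around $s \approx \min(\mu,k)/2$.

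Finally I combine with Markov. In the regime $\delta^2\mu \geq k$, the bound evaluates to $(Ck\mu)^{k/2}/(\delta\mu)^k = (Ck/(\delta^2\mu))^{k/2} \leq 2^{-k}$, giving tail probability $e^{-\Omega(k)}$. In the regime $\delta^2\mu < k$, I instead apply the same argument with $k'\coloneqq\lfloor c\delta^2\mu\rfloor$ for a small constant $c>0$; since $k'\leq k$, the random variables are also $k'$-wise independent, and the analogous calculation yields $e^{-\Omega(\delta^2\mu)}$. Together the two cases give the stated $e^{-\Omega(\min(k,\delta^2\mu))}$. The main obstacle is the combinatorial moment bound in the middle step: one must track the interplay between the multinomial coefficients, the factor $\mu^s/s!$, and the constraint $s\leq k/2$ to obtain the clean $(Ck\max(\mu,k))^{k/2}$ form; the hypothesis $\delta\leq 1$ (and the two-case split) is what keeps the final exponent $\min(k,\delta^2\mu)$ balanced. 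Once this estimate is in hand, the rest is routine.
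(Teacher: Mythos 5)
The paper does not prove \autoref{thm:chernoff-k-wise} at all: it is quoted verbatim from Schmidt--Siegel--Srinivasan \cite{SSS95-chernoff-k-wise} (Theorem 5.I there) and used as a black box. So there is no in-paper argument to compare against. Your plan is the standard higher-moment route, which is also the method of the cited source, so you are reconstructing the correct proof rather than deviating from it. The key points are all in place: reducing to centered $|y_i|\le 1$ with $\E y_i=0$ inheriting $k$-wise independence; expanding $\E[(S-\mu)^k]$ over tuples and noting that, by $k$-wise independence and centering, only tuples whose every distinct index repeats at least twice survive (so $s\le k/2$); bounding $|\E[y_i^a]|\le \E[y_i^2]\le \E[x_i]$ for $a\ge 2$ using $|y_i|\le1$ and $x_i\in[0,1]$; and then splitting into the regimes $\delta^2\mu\gtrsim k$ (use moment $k$) and $\delta^2\mu\lesssim k$ (drop to $k'=\Theta(\delta^2\mu)\le k$, which is legal since $k'$-wise independence is implied). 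Two small cautions, both absorbed into the $\Omega$: the regime boundary should really be $\delta^2\mu\ge C'k$ for a suitable absolute constant $C'$ (not exactly $\delta^2\mu\ge k$) so that $Ck/(\delta^2\mu)$ is bounded below $1$; and you should note that when the chosen even moment order drops below $2$ (i.e.\ $\delta^2\mu=O(1)$) the claimed bound is a constant and hence vacuous. The acknowledged gap is the combinatorial moment estimate $\E[(S-\mu)^k]\le (Ck\max\{\mu,k\})^{k/2}$; that shape is correct and is exactly the bookkeeping done in \cite{SSS95-chernoff-k-wise}, so the sketch is sound modulo filling in that computation.
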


\subsection{Embedding with a code}

Let $\eta>0$ be some sufficiently
small constant (we later set $\eta\coloneqq2\sqrt{h}$). For convenience
of notation we will construct a function $f\colon\left[-1,2\right]^{4\times m}\rightarrow\left[-1,2\right]^{4\times m}$
(instead of $\left[0,1\right]$); in particular, now the vertices
of the discrete hypercube $\left\{ 0,1\right\} ^{4\times m}$ are
interior points of our domain.%
\begin{lemma}
\label{lem:just-a-path}We can efficiently embed an \EoL
graph $G$ over $\left\{ 0,1\right\} ^{n}$ as a collection of continuous
paths and cycles in $\left[-1,2\right]^{4\times m}$, such that the
following hold:

\begin{itemize}
\item Each edge in $G$ corresponds to a concatenation of a few line segments
between vertices of $\left\{ 0,1\right\} ^{4m}$; we henceforth call
them {\em Brouwer line segments} and {\em Brouwer vertices}.
\item The points on any two non-consecutive Brouwer line segments are $\eta$-far.
\item The points on any two consecutive Brouwer line segments are also $\eta$-far,
except near the point $\mathbf{y}\in\left\{ 0,1\right\} ^{4\times m}$
where the two Brouwer line segments connect.
\item Every two consecutive Brouwer line segments are orthogonal.
\item Given any point $\mathbf{x}\in\left[-1,2\right]^{4\times m}$, we
can use the \EoL predecessor and successor oracles to
determine whether $\mathbf{x}$ is $\eta$-close to any Brouwer line
segment, and if so what is the distance to this Brouwer line segment,
and what are its endpoints.
\item There is a one-to-one correspondence between endpoints of the embedded
paths and solutions of the \EoL instance.
\end{itemize}
\end{lemma}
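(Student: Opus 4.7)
The plan is to adapt the construction of Hirsch--Papadimitriou--Vavasis~\cite{HPV89} with the error-correcting code enhancement from~\cite{R16}. I would first fix an explicit binary error-correcting code $\Enc\colon\{0,1\}^{n}\to\{0,1\}^{m}$ with constant rate $m=\Theta(n)$ and constant relative distance $\rho>0$. Each \EoL vertex $v\in\{0,1\}^{n}$ receives a canonical Brouwer image $\Phi(v)\coloneqq(\Enc(v),\mathbf{0}_{m},\mathbf{0}_{m},\mathbf{0}_{m})\in\{0,1\}^{4\times m}$. Three of the four blocks exist to serve as a ``phase register'' / travel-flag scratch space, which is what lets the edge-paths be routed so that consecutive axis-aligned line segments lie along orthogonal coordinate axes and so that distinct edge-paths are separated in normalized 2-norm.

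For each directed edge $(u,v)\in E(G)$, I would build a piecewise-linear path from $\Phi(u)$ to $\Phi(v)$ composed of $O(m)$ axis-aligned unit-length segments joining intermediate Brouwer vertices in $\{0,1\}^{4\times m}$. The path first raises a ``travel flag'' bit in one of the auxiliary blocks (a single axis-aligned step away from $\Phi(u)$), then morphs block~1 from $\Enc(u)$ to $\Enc(v)$ coordinate by coordinate, interleaving each block-1 update with a toggle of a rotating flag bit in the remaining auxiliary blocks so that no two consecutive segments share a coordinate. This handles the first bullet (few line segments between $\{0,1\}^{4\times m}$ corners) and the fourth bullet (consecutive segments orthogonal) immediately by construction.

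The $\eta$-separation bullets follow from the combination of the travel flag and the code distance. I would arrange the flag assignment so that while the flag is raised, the entire interior of the edge-path for $(u,v)$ lies in a tube whose block-1 content is always within Hamming distance $o(m)$ of either $\Enc(u)$ or $\Enc(v)$; the relative distance $\rho$ of $\Enc$ then forces a normalized-2-norm gap of $\Omega(\sqrt{\rho})$ to every edge-path incident only to vertices $w\notin\{u,v\}$, and the flag itself contributes an additional separation between paths of different edges incident to the same endpoint. Consecutive segments, which meet at a single Brouwer corner $\mathbf{y}$, cannot be $\eta$-separated inside the $O(\eta)$-ball around $\mathbf{y}$ but are trivially $\eta$-far elsewhere, since they move along distinct coordinate axes. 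Choosing $\eta\coloneqq 2\sqrt{h}$ with $h$ small enough relative to $\rho$ delivers the second and third bullets.

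The main obstacle is the fifth bullet, the \emph{locality} of the closest-segment test through only \EoL oracle queries. Given $\mathbf{x}\in[-1,2]^{4\times m}$, I would round each coordinate of $\mathbf{x}$ to its nearest value in $\{0,1\}$ to obtain a candidate corner $\mathbf{y}\in\{0,1\}^{4\times m}$. If $\mathbf{x}$ is $\eta$-close to any embedded segment, then $\mathbf{y}$ lies at Hamming distance $O(1)$ from one of that segment's two Brouwer endpoints; the pattern of raised auxiliary-block bits in $\mathbf{y}$ identifies which phase of which kind of edge-path we are on, and decoding block~1 of $\mathbf{y}$ recovers a candidate \EoL vertex $v^{*}$. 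We then make the (only) two oracle calls $S(v^{*})$ and $P(v^{*})$, which pin down the at most two edge-paths through $\Phi(v^{*})$, and we explicitly check $\mathbf{x}$ against the $O(m)$ segments of those two paths. The sixth bullet is immediate from the construction: an embedded path terminates at $\Phi(v)$ precisely when $v$ has no outgoing (respectively, incoming) \EoL edge other than the canonical one, which matches the definition of an \EoL solution exactly.
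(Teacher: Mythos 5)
Your construction diverges from the paper's in a way that breaks the two hardest bullets: separation and locality. You encode a single \EoL vertex in block~1 as $\Enc(v)$ and morph it coordinate by coordinate from $\Enc(u)$ to $\Enc(v)$ along an edge-path. In the middle of such a morph, block~1 is a hybrid string that is at Hamming distance $\Theta(d(\Enc(u),\Enc(v))) = \Theta(m)$ from \emph{both} $\Enc(u)$ and $\Enc(v)$ (and, for a generic constant-rate code, from every codeword). This immediately falsifies your claim that ``the entire interior of the edge-path lies in a tube whose block-1 content is always within Hamming distance $o(m)$ of either $\Enc(u)$ or $\Enc(v)$,'' which is the crux of your second-bullet argument. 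The flag bits cannot rescue the separation by themselves, since you have only a constant number of auxiliary blocks each contributing at most $1/m$ per raised bit to the normalized 2-norm.

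The same issue kills the fifth (locality) bullet. You propose to round $\mathbf{x}$ to a corner $\mathbf{y}$ and ``decode block~1 of $\mathbf{y}$'' to find a candidate \EoL vertex $v^*$ for the oracle queries. But near the middle of a morph, block~1 of $\mathbf{y}$ is a half-$\Enc(u)$/half-$\Enc(v)$ string that decodes to neither $u$ nor $v$; there is nothing to query. The paper sidesteps exactly this by never morphing a codeword in place. It devotes block~1 to the \emph{current} vertex and block~2 to the \emph{next} vertex, and routes each edge through $4$ long segments, each changing exactly one entire $m$-block: first morph block~2 from $\Enc(u)$ to $\Enc(v)$ (block~1 stays a codeword), then raise the compute/copy flag in block~3, then morph block~1 (block~2 stays a codeword), then lower the flag. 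At every point on the path, three of the four $m$-blocks are Boolean and at least one of blocks 1, 2 is a valid codeword, so error-correction decoding always recovers an actual \EoL vertex to feed into the $S/P$ oracles, and the surviving codeword also gives the $\Omega(\sqrt{\rho})$ separation. Your variant would need some additional invariant that keeps a decodable vertex identity available at every interior point of the edge-path; as written, none is provided.
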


\begin{proof}
For point $\mathbf{x}\in\left[-1,2\right]^{4\times m}$ and index
$r\in\left\{ 1,2,3,4\right\} $, we let $\mathbf{x}_{r}$ denote the
point's $r$-th $m$-tuple of coordinates. Intuitively, each $m$-tuple
of entries of $\mathbf{x}$ represents a different piece of information:
$\mathbf{x}_{1}$ represents the current vertex in $G$, $\mathbf{x}_{2}$
the next vertex in $G$, and $\mathbf{x}_{3}$ an auxiliary compute-vs-copy bit. For now, the last $m$ coordinates are not used at all,
and serve as space fillers for their use in \autoref{lem:HPV_2}.
For $r\in\left\{ 3,4\right\} $, we further define $\mathbf{x}_{r}\coloneqq\E_{i\in\left[m\right]}\left[x_{r,i}\right]$.

Let $\Enc_{C}$$\left(\cdot\right)$ denote the encoding in a binary,
error correcting code $C$ with message length $n$, block length
$m=O\left(n\right)$, and constant relative distance (we eventually
use $C$ as in \autoref{sec:game-prelim}). We
assume wlog that $\Enc_{C}\left(\mathbf{0}_{n}\right)=\mathbf{0}_{m}$
(where $\Enc_{C}\left(\mathbf{0}_{n}\right)$ is the $C$-encoding
of the special vertex). The current and next vertex are encoded with
$C$, whereas the compute-vs-copy bit is encoded with a repetition code. 

The first Brouwer line segment goes from $\left(\mathbf{0}_{3\times m},\mathbf{2}_{m}\right)$
to $\mathbf{0}_{4\times m}$. We then add four Brouwer line segments
for each edge in the \EoL instance. Specifically, for
each edge $\left(u\rightarrow v\right)$ in $G$, we have Brouwer
line segments connecting following points (in this order): 
\begin{align*}
\mathbf{x}^{1}\left(u,v\right) & \coloneqq\left(\Enc_{C}\left(u\right),\Enc_{C}\left(u\right),\mathbf{0}_{m},\mathbf{0}_{m}\right),\\
\mathbf{x}^{2}\left(u,v\right) & \coloneqq\left(\Enc_{C}\left(u\right),\Enc_{C}\left(v\right),\mathbf{0}_{m},\mathbf{0}_{m}\right),\\
\mathbf{x}^{3}\left(u,v\right) & \coloneqq\left(\Enc_{C}\left(u\right),\Enc_{C}\left(v\right),\mathbf{1}_{m},\mathbf{0}_{m}\right),\\
\mathbf{x}^{4}\left(u,v\right) & \coloneqq\left(\Enc_{C}\left(v\right),\Enc_{C}\left(v\right),\mathbf{1}_{m},\mathbf{0}_{m}\right),\\
\mathbf{x}^{5}\left(u,v\right) & \coloneqq\left(\Enc_{C}\left(v\right),\Enc_{C}\left(v\right),\mathbf{0}_{m},\mathbf{0}_{m}\right).
\end{align*}
Notice that if $S\left(v\right)$ is the successor of $v$, then $\mathbf{x}^{5}\left(u,v\right)=\mathbf{x}^{1}\left(v,S\left(v\right)\right)$.
Notice that in each Brouwer line segment, only one subset of $m$
coordinates change. Thus whenever we are close to a line, we can successfully
decode the $3m$ fixed coordinates. Once we decode the $3m$ fixed
coordinates, we can compute what should be the values on the other
$m$ coordinates using the \EoL predecessor and successor
oracles, and determine the endpoints of the Brouwer line segment,
and then also the distance to it. Finally notice that after the first
Brouwer line segment, we have $\mathbf{x}_{4}=\mathbf{0}_{m}$ for
every point on the path.

Because at each step we update a different $m$-tuple of coordinates,
every two consecutive Brouwer line segments are orthogonal.
\end{proof}
In particular, we will come back to the following definitions:
\begin{definition}
[Brouwer vertex/segment/path]\label{def:brouwer}
A {\em Brouwer vertex} is any point of the form $\mathbf{x}^{\tau}\left(u,v\right)$, for $\tau\in\left[5\right]$ and $\left(u\rightarrow v\right)$ an
edge in the \EoL instance. A {\em Brouwer line segment} is the line segment between $\mathbf{x}^{\tau}\left(u,v\right)$
and $\mathbf{x}^{\tau+1}\left(u,v\right)$, for $\tau\in\left[4\right]$
and $u,v$ are as above. The {\em Brouwer path} is the union of all the Brouwer line segments.
\end{definition}

\subsection{Constructing a continuous function}

Let $m$ be as before, and let $0<\delta\ll h<1$ be sufficiently
small constants. The reduction from \EoL to \BFP
follows from the next lemma by setting $f\left(\mathbf{x}\right)\coloneqq\mathbf{x}+g\left(\mathbf{x}\right)$.
\begin{lemma}
\label{lem:HPV_2}We can efficiently embed an \EoL graph
$G$ over $\left\{ 0,1\right\} ^{n}$ as a displacement function $g:\left[-1,2\right]^{4\times m}\rightarrow\left[-\delta,\delta\right]^{4\times m}$
such that:

\begin{enumerate}
\item $g\left(\cdot\right)$ does not send any point outside the hypercube,
i.e., $\mathbf{x}+g\left(\mathbf{x}\right)\in\left[-1,2\right]^{4\times m}$.
\item $g\left(\cdot\right)$ is $O\left(1\right)$-Lipschitz (thus, $f\left(\cdot\right)$
is also $O\left(1\right)$-Lipschitz).
\begin{enumerate}
\item Furthermore, $g\left(\cdot\right)$ and $f\left(\cdot\right)$ also
satisfy the following stronger condition: For every coordinate $i\in\left[4\right]\times\left[m\right]$
and every $\mathbf{x},\mathbf{y}\in\left[-1,2\right]^{4\times m}$,
we have that $\left|g_{i}\left(\mathbf{x}\right)-g_{i}\left(\mathbf{y}\right)\right|\leq O\left(\max\left\{ \left|x_{i}-y_{i}\right|,\left\Vert \mathbf{x}-\mathbf{y}\right\Vert _{2}\right\} \right)$.
(Here $g_{i}\left(\cdot\right),x_{i},y_{i}$ denote the $i$-th coordinate
of $g\left(\cdot\right),\mathbf{x},\mathbf{y}$, respectively.)
\end{enumerate}
\item $\left\Vert g\left(\mathbf{x}\right)\right\Vert _{2}=\Omega\left(\delta\right)$
for every $\mathbf{x}$ that does not correspond to an endpoint of
a path.
\item The value of $g$ at any point $\mathbf{x}$ that is $2\sqrt{h}$-close
to a Brouwer line segment (resp.\ two consecutive Brouwer line segments),
depends only on its location relative to the endpoints of the Brouwer
line segment(s).
\item The value of $g$ at any point $\mathbf{x}$ that is $2\sqrt{h}$-far
from all Brouwer line segments, does not depend on the graph $G$.
\end{enumerate}
\end{lemma}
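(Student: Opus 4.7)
The plan is to follow the standard Hirsch--Papadimitriou--Vavasis style construction (adapted to the normalized $\ell_2$ setting as in \cite{R16}), defining the displacement $g$ in terms of the distance $d(\mathbf{x})$ from $\mathbf{x}$ to the Brouwer path produced by Lemma \ref{lem:just-a-path}. Concretely, I would partition $[-1,2]^{4\times m}$ into three regions: a \emph{near} region where $d(\mathbf{x}) \leq \sqrt{h}$; a \emph{buffer} region where $\sqrt{h} < d(\mathbf{x}) \leq 2\sqrt{h}$; and a \emph{far} region where $d(\mathbf{x}) > 2\sqrt{h}$. In the far region, $g$ is a single fixed displacement of magnitude $\Theta(\delta)$ (say, $\delta$ times a fixed unit vector) that has no dependence on $G$, immediately yielding property~5. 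In the near region, set $g(\mathbf{x}) := \delta\hat{t}(\mathbf{x})$ where $\hat{t}$ is the unit tangent to the nearest Brouwer line segment, oriented along the forward direction of the path. In the buffer region, linearly interpolate between these two values using a smooth function of $d(\mathbf{x})$.

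The main care is required at two types of special locations. First, near a \emph{junction} Brouwer vertex $\mathbf{y}$ where two consecutive line segments meet, the tangent direction $\hat{t}$ must rotate smoothly from the incoming to the outgoing segment. Since Lemma \ref{lem:just-a-path} guarantees that consecutive segments are orthogonal, this rotation happens inside a $2$-dimensional subspace and can be implemented by a standard ``quarter-turn'' inside a small ball around $\mathbf{y}$. Second, near an endpoint of a Brouwer path (which corresponds to an \EoL solution), the tangent $\hat{t}$ must decay smoothly to zero, making such endpoints the only approximate fixed points and giving property~3 at all non-endpoint points.

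The remaining properties are verified as follows. Property~1 (staying in the hypercube) is automatic since $\|g\|_\infty \leq \delta$ is tiny while the domain $[-1,2]^{4\times m}$ provides margin $1$ around the points of $\{0,1\}^{4\times m}$ on which the paths live. Property~2 (being $O(1)$-Lipschitz) follows because within each region $g$ is either constant or a smooth function of (i)~the distance to the nearest segment and (ii)~the forward direction along it; the constant $h$ controls the length scale on which transitions occur. Properties~4 and~5 are built in: within the near/buffer region the value of $g(\mathbf{x})$ depends only on $\mathbf{x}$'s position relative to the nearest one or two segments, which by Lemma \ref{lem:just-a-path} is computable from local \EoL oracle queries; in the far region $g$ is a fixed default independent of $G$.

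The main obstacle is the strong coordinate-wise Lipschitz condition (property~2a): $|g_i(\mathbf{x}) - g_i(\mathbf{y})| \leq O(\max\{|x_i - y_i|,\, \|\mathbf{x}-\mathbf{y}\|_2\})$ for every coordinate $i$. Here the crucial structural fact is that by Lemma \ref{lem:just-a-path} every Brouwer line segment changes only a single $m$-tuple of coordinates, so $\hat{t}$ is supported on that one block, and the identity of the nearest segment is determined by \emph{decoding} the three ``fixed'' $m$-coordinate blocks of $\mathbf{x}$ with the code $C$---a decision that is stable under small perturbations in normalized $2$-norm. Consequently, perturbing a single coordinate $x_j$ with $j \neq i$ can change $g_i$ by at most $O(1/\sqrt{m}) = O(\|\mathbf{x}-\mathbf{y}\|_2)$, while a change in $x_i$ itself is controlled by the Lipschitz smoothness of the interpolation within the buffer region or the quarter-turn at a junction. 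The buffer and junction regions require the most care, since interpolations there could in principle amplify per-coordinate dependencies; this is precisely where the orthogonality of consecutive segments and the macroscopic length scale dictated by $h$ being a small constant become indispensable.
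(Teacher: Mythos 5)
Your high-level plan (near region: tangent; far region: fixed default; buffer: interpolate) is the ``naive'' version of the Hirsch--Papadimitriou--Vavasis construction, but it differs structurally from what the paper actually does, and there is a concrete point where your version breaks.

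\textbf{The first Brouwer line segment.} Lemma~\ref{lem:just-a-path} starts the Brouwer path at $\left(\mathbf{0}_{3\times m},\mathbf{2}_{m}\right)$ and runs it straight down to $\mathbf{0}_{4\times m}$. The tangent along this segment is proportional to $\left(\mathbf{0}_{3\times m},-\mathbf{1}_{m}\right)$, i.e.\ exactly \emph{anti-parallel} to your default displacement $\delta\left(\mathbf{0}_{3\times m},\mathbf{1}_{m}\right)$. In your buffer shell around this segment, the convex interpolation $\lambda\hat t + (1-\lambda)\hat e_0$ therefore passes through $0$ at $\lambda=1/2$, manufacturing a whole annulus of spurious approximate fixed points far from any path endpoint. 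This is a hard violation of property~3 of the lemma. The paper avoids it by treating the region ``outside the picture'' ($\mathbf{x}_{4}>1/2$) with a dedicated construction (eqs.~\eqref{eq:picture-boundary}--\eqref{eq:displacement-outside}), which your proposal never mentions.

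\textbf{The onion structure.} Even for segments whose tangent is orthogonal to the default, the paper does not interpolate directly from ``tangent'' to ``default.'' It uses four concentric shells---on-line: along the path; distance $h$: \emph{toward} the line; distance $2h$: \emph{against} the path; distance $3h$: default---chosen so that every pair of consecutive shell directions is orthogonal (eq.~\eqref{eq:line-displacement}). This is what makes the lower bound $\left\Vert g\right\Vert_2 = \Omega(\delta)$ survive the interpolation \emph{uniformly}, including near the first segment and near junctions, and it also lets the perpendicular transition happen on scale $h$ while the longitudinal ``cut-the-corner'' interpolation at junctions happens on the coarser scale $\sqrt{h}$. Your proposal collapses both scales into $\sqrt{h}$; with $\eta=2\sqrt{h}$ guaranteeing only a $2\sqrt{h}$ separation between non-consecutive segments, your near-plus-buffer region reaching out to $2\sqrt{h}$ sits exactly at the margin and makes the ``nearest segment'' assignment ambiguous.

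\textbf{Property 2(a).} Your argument that a perturbation in a single coordinate $j\neq i$ moves $g_i$ by $O(1/\sqrt m)=O\left(\left\Vert\mathbf{x}-\mathbf{y}\right\Vert_2\right)$ is stated as if it were automatic, but the interpolation parameters are not coordinate-wise linear. Near a Brouwer vertex the paper's interpolation weight $\psi$ in \eqref{eq:psi-definition} has a denominator $\Delta_{\left(\mathbf{y}\rightarrow\mathbf{t}\right)}\left(\mathbf{x}\right)+\Delta_{\left(\mathbf{s}\rightarrow\mathbf{y}\right)}\left(\mathbf{x}\right)$ that can in principle get small; the paper needs the inequality \eqref{eq:Delta+Delta} to bound it away from $0$ on the region where $\psi$ actually matters. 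Your ``quarter-turn'' sketch contains no analogue of this bound, so the strong per-coordinate Lipschitz claim is not established. You would also need to say something about how $g_i$ depends on $\mathbf x$ through the truncation step (the paper defines $g$ by truncating an untruncated $\hat g$), which your write-up omits entirely.

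In short, the broad strategy is in the same spirit, but the two-region-plus-buffer design you propose fails at the first Brouwer line segment, the onion structure and the picture/outside-the-picture split are not optional niceties, and the coordinate-wise Lipschitz property needs an actual argument about the junction interpolation weights, not just a $1/\sqrt m$ heuristic.
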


\begin{proof}
For point $\mathbf{x}\in\left[-1,2\right]^{4\times m}$, the first
three $m$-tuples are used as in \autoref{lem:just-a-path} (current,
next, and compute-vs-copy); the last $m$ coordinates represent a
special default direction in which the displacement points when far
from all Brouwer line segments. 

We say that a point $\mathbf{x}$ is in the {\em picture} if $\mathbf{x}_{4}<1/2$.
We construct $g$ separately inside and outside the picture (and make
sure that the construction agrees on the hyperplane $\mathbf{x}_{4}=1/2$).

\paragraph{Truncation.}
In order for $g\left(\cdot\right)$ to be a displacement function,
we must ensure that it never sends any points outside the hypercube,
i.e., for all $\mathbf{x}\in[-1,2]^{4\times m}$ we require that also $\mathbf{x}+g\left(\mathbf{x}\right)\in\left[-1,2\right]^{4\times m}$.
Below, it is convenient to first define an {\em untruncated} displacement
function $\hat{g}:\left[-1,2\right]^{4\times m}\rightarrow\left[-\delta,\delta\right]^{4\times m}$
which is not restricted by the above condition. We then truncate each
coordinate to fit in $\left[-1,2\right]$: $g_{i}\left(\mathbf{x}\right)=\max\left\{ -1,\min\left\{ 2,x_{i}+\hat{g}_{i}\left(\mathbf{x}\right)\right\} \right\} -x_{i}$
(where $g_{i}$ denotes the $i$-th output of $g$). It is clear that
the truncation is easy to compute and if $\hat{g}\left(\cdot\right)$
is $\left(M-1\right)$-Lipschitz, then $g\left(\cdot\right)$ is $M$-Lipschitz.
It is, however, important to make sure that the the magnitude of the
displacement is not compromised. Typically, some of the coordinates
may need to be truncated, but we design the displacement so that most
coordinates, say $99\%$, are not truncated. If $\hat{g}\left(\mathbf{x}\right)$
has a non-negligible component in at least $5\%$ of the coordinates,
then in total $g\left(\mathbf{x}\right)$ maintains a non-negligible
magnitude.

\paragraph{Inside the picture.}
We have to define the displacement for $\mathbf{x}$ far from every
line, for $\mathbf{x}$ near one line, and for $\mathbf{x}$ near
two consecutive lines (notice that by \autoref{lem:just-a-path}
no point is close to two non-consecutive lines). For $\mathbf{x}$
far from every line, we use the default displacement, which points
in the positive special direction: $\hat{g}\left(\mathbf{x}\right)\coloneqq\left(\mathbf{0}_{3\times m},\delta\cdot\mathbf{1}_{m}\right)$.
Because $\mathbf{x}$ is inside the picture, the truncated displacement
$g\left(\mathbf{x}\right)$ is close to $\hat{g}\left(\mathbf{x}\right)$,
and therefore satisfies $\left\Vert g\left(\mathbf{x}\right)\right\Vert _{2}=\Omega\left(\delta\right)$.

For $\mathbf{x}$ which is close to one line, we construct the displacement
as follows: on the line, the displacement points in the direction
of the path; at distance $h$ from the line, the displacement points
in towards the line; at distance $2h$ from the line, the displacement
points against the direction of the path; at distance $3h$, the displacement
points in the default direction. 

Formally, let $\tau_{\left(\mathbf{s}\rightarrow\mathbf{t}\right)}\left(\mathbf{x}\right)$
denote the magnitude of the component of $\mathbf{x}-\mathbf{s}$
in the direction of line $\left(\mathbf{s}\rightarrow\mathbf{t}\right)$,
\begin{equation}
\tau_{\left(\mathbf{s}\rightarrow\mathbf{t}\right)}\left(\mathbf{x}\right)\coloneqq\frac{\left(\mathbf{t}-\mathbf{s}\right)}{\left\Vert \mathbf{s}-\mathbf{t}\right\Vert _{2}^{2}}\cdot\left(\mathbf{x}-\mathbf{s}\right),\label{eq:beta-definition}
\end{equation}
where $\cdot$ denotes the (in-expectation) dot product. Let $\mathbf{z}=\mathbf{z}\left(\mathbf{x}\right)$
be the point nearest to $\mathbf{x}$ on the line; notice that $\mathbf{z}$
satisfies
\begin{equation}
\mathbf{z}=\tau_{\left(\mathbf{s}\rightarrow\mathbf{t}\right)}\left(\mathbf{x}\right)\mathbf{t}+\left(1-\tau_{\left(\mathbf{s}\rightarrow\mathbf{t}\right)}\left(\mathbf{x}\right)\right)\mathbf{s}.\label{eq:z-line}
\end{equation}
 For points near the line ($\left\Vert \mathbf{x}-\mathbf{z}\right\Vert _{2}\leq3h$),
but far from its endpoints ($\tau_{\left(\mathbf{s}\rightarrow\mathbf{t}\right)}\left(\mathbf{x}\right)\in\left[\sqrt{h},1-\sqrt{h}\right]$),
we define the displacement: 
\begin{equation}
\hat{g}\left(\mathbf{x}\right)\coloneqq\begin{cases}
\delta\frac{\left(\mathbf{t}-\mathbf{s}\right)}{\left\Vert \mathbf{t}-\mathbf{s}\right\Vert _{2}} & \left\Vert \mathbf{x}-\mathbf{z}\right\Vert _{2}=0\\
\delta\frac{\left(\mathbf{z}-\mathbf{x}\right)}{h} & \left\Vert \mathbf{x}-\mathbf{z}\right\Vert _{2}=h\\
\delta\frac{\left(\mathbf{s}-\mathbf{t}\right)}{\left\Vert \mathbf{t}-\mathbf{s}\right\Vert _{2}} & \left\Vert \mathbf{x}-\mathbf{z}\right\Vert _{2}=2h\\
\delta\left(\mathbf{0}_{3\times m},\mathbf{1}_{m}\right) & \left\Vert \mathbf{x}-\mathbf{z}\right\Vert _{2}=3h
\end{cases}\label{eq:line-displacement}
\end{equation}

At intermediate distances from the line, we interpolate: at distance
$\left\Vert \mathbf{x}-\mathbf{z}\right\Vert _{2}=\frac{1}{3}h$,
for example, we have $\hat{g}\left(\mathbf{x}\right)=\frac{2}{3}\delta\frac{\left(\mathbf{t}-\mathbf{s}\right)}{\left\Vert \mathbf{t}-\mathbf{s}\right\Vert _{2}}+\frac{1}{3}\delta\frac{\left(\mathbf{z}-\mathbf{x}\right)}{h}$.
Notice that $\left(\mathbf{t}-\mathbf{s}\right)$ is orthogonal to
both $\left(\mathbf{z}-\mathbf{x}\right)$ and $\left(\mathbf{0}_{3\times m},\mathbf{1}_{m}\right)$,
so the interpolation does not lead to cancellation. Also, every point
$\mathbf{z}$ on the line is $\Omega\left(1\right)$-far in every
coordinate from $\left\{ -1,2\right\} $, so the truncated displacement
$g\left(\mathbf{x}\right)$ still satisfies $\left\Vert g\left(\mathbf{x}\right)\right\Vert _{2}=\Omega\left(\delta\right)$.
For each case in \eqref{eq:line-displacement}, $\hat{g}\left(\cdot\right)$
is either constant, or (in the case of $\left\Vert \mathbf{x}-\mathbf{z}\right\Vert _{2}=h$)
$O\left(\delta/h\right)$-Lipschitz; by choice of $\delta\ll h$,
it follows that $\hat{g}\left(\cdot\right)$ is in particular $O\left(1\right)$-Lipschitz.
Furthermore, notice that $\left\Vert \mathbf{x}-\mathbf{z}\right\Vert _{2}$
is $1$-Lipschitz, so after interpolating for intermediate distances,
$\hat{g}\left(\cdot\right)$ continues to be $O\left(1\right)$-Lipschitz.
Notice also that at distance $3h$ the displacement defined in \eqref{eq:line-displacement}
agrees with the displacements for points far from every line, so Lipschitz
continuity is preserved.

\paragraph{Close to a vertex.}
At distance $O(\sqrt{h})$ from a Brouwer vertex (recall
$\sqrt{h}\gg h$), we use a different displacement that interpolates
between the incoming and outgoing Brouwer line segments. Consider
$\mathbf{x}$ which is close to the line from $\mathbf{s}$ to $\mathbf{y}$,
and also to the line from $\mathbf{y}$ to $\mathbf{t}$. Notice that
every two consecutive Brouwer line segments change disjoint subsets
of the coordinates, so $\left(\mathbf{s}\rightarrow\mathbf{y}\right)$
and $\left(\mathbf{y}\rightarrow\mathbf{t}\right)$ are orthogonal.
Let $\mathbf{z}_{\left(\mathbf{s}\rightarrow\mathbf{y}\right)}$ be
the point on line $\left(\mathbf{s}\rightarrow\mathbf{y}\right)$
that is at distance $\sqrt{h}$ from $\mathbf{y}$; similarly, let
$\mathbf{z}_{\left(\mathbf{y}\rightarrow\mathbf{t}\right)}$ be the
point on line $\left(\mathbf{y}\rightarrow\mathbf{t}\right)$ that
is at distance $\sqrt{h}$ from $\mathbf{y}$.

The high level idea is to ``cut the corner'' and drive the flow
along the line segment $L_{\mathbf{y}}$ that connects $\mathbf{z}_{\left(\mathbf{s}\rightarrow\mathbf{y}\right)}$
and $\mathbf{z}_{\left(\mathbf{y}\rightarrow\mathbf{t}\right)}$.
In particular, we consider points $\mathbf{x}$ that are within distance
$3h$ of $L_{\mathbf{y}}$. For all points further away (including
$\mathbf{y}$ itself), we use the default displacement. 

Our goal is to interpolate between the line displacement for $\left(\mathbf{s}\rightarrow\mathbf{y}\right)$
(which is defined up to $\tau_{\left(\mathbf{s}\rightarrow\mathbf{y}\right)}\left(\mathbf{x}\right)=1-\sqrt{h}$),
and the line displacement for $\left(\mathbf{y}\rightarrow\mathbf{t}\right)$
(which begins at $\tau_{\left(\mathbf{y}\rightarrow\mathbf{t}\right)}\left(\mathbf{x}\right)=\sqrt{h}$).
Let $\Delta_{\left(\mathbf{s}\rightarrow\mathbf{y}\right)}\left(\mathbf{x}\right)\coloneqq\tau_{\left(\mathbf{s}\rightarrow\mathbf{y}\right)}\left(\mathbf{x}\right)-\left(1-\sqrt{h}\right)$,
and $\Delta_{\left(\mathbf{y}\rightarrow\mathbf{t}\right)}\left(\mathbf{x}\right)\coloneqq\sqrt{h}-\tau_{\left(\mathbf{y}\rightarrow\mathbf{t}\right)}\left(\mathbf{x}\right)$.
We set our interpolation parameter 
\begin{equation}
\psi=\psi_{\mathbf{y}}\left(\mathbf{x}\right)\coloneqq\frac{\Delta_{\left(\mathbf{y}\rightarrow\mathbf{t}\right)}\left(\mathbf{x}\right)}{\Delta_{\left(\mathbf{y}\rightarrow\mathbf{t}\right)}\left(\mathbf{x}\right)+\Delta_{\left(\mathbf{s}\rightarrow\mathbf{y}\right)}\left(\mathbf{x}\right)}.\label{eq:psi-definition}
\end{equation}
We now define 
\begin{equation}
\mathbf{z}\coloneqq\psi\mathbf{z}_{\left(\mathbf{s}\rightarrow\mathbf{y}\right)}+\left(1-\psi\right)\mathbf{z}_{\left(\mathbf{y}\rightarrow\mathbf{t}\right)}.\label{eq:z-definition}
\end{equation}

For points $\mathbf{x}$ near $\mathbf{y}$ such that $\Delta_{\left(\mathbf{s}\rightarrow\mathbf{y}\right)}\left(\mathbf{x}\right),\Delta_{\left(\mathbf{y}\rightarrow\mathbf{t}\right)}\left(\mathbf{x}\right)\geq0$,
we can now define the displacement analogously to \eqref{eq:line-displacement}:
\begin{equation}
\hat{g}\left(\mathbf{x}\right)\coloneqq\begin{cases}
\delta\cdot\left[\psi\frac{\left(\mathbf{y}-\mathbf{s}\right)}{\left\Vert \mathbf{y}-\mathbf{s}\right\Vert _{2}}+\left(1-\psi\right)\frac{\left(\mathbf{t}-\mathbf{y}\right)}{\left\Vert \mathbf{t}-\mathbf{y}\right\Vert _{2}}\right] & \left\Vert \mathbf{x}-\mathbf{z}\right\Vert _{2}=0\\
\delta\frac{\left(\mathbf{z}-\mathbf{x}\right)}{h} & \left\Vert \mathbf{x}-\mathbf{z}\right\Vert _{2}=h\\
\delta\cdot\left[\psi\frac{\left(\mathbf{s}-\mathbf{y}\right)}{\left\Vert \mathbf{y}-\mathbf{s}\right\Vert _{2}}+\left(1-\psi\right)\frac{\left(\mathbf{y}-\mathbf{t}\right)}{\left\Vert \mathbf{t}-\mathbf{y}\right\Vert _{2}}\right] & \left\Vert \mathbf{x}-\mathbf{z}\right\Vert _{2}=2h\\
\delta\left(\mathbf{0}_{3\times m},\mathbf{1}_{m}\right) & \left\Vert \mathbf{x}-\mathbf{z}\right\Vert _{2}\geq3h
\end{cases}.\label{eq:vertex-displacement}
\end{equation}
At intermediate distances, interpolate according to $\left\Vert \mathbf{x}-\mathbf{z}\right\Vert _{2}$.
Notice that for each fixed choice of $\psi\in\left[0,1\right]$ (and
$\mathbf{z}$), $\hat{g}$ is $O\left(\delta/h\right)=O\left(1\right)$-Lipschitz.
Furthermore, $\Delta_{\left(\mathbf{s}\rightarrow\mathbf{y}\right)}$
and $\Delta_{\left(\mathbf{y}\rightarrow\mathbf{t}\right)}$ are $1$-Lipschitz
in $\mathbf{x}$. For any $\mathbf{z}\in L_{\mathbf{y}}$, $\Delta_{\left(\mathbf{y}\rightarrow\mathbf{t}\right)}\left(\mathbf{z}\right)+\Delta_{\left(\mathbf{s}\rightarrow\mathbf{y}\right)}\left(\mathbf{z}\right)=\sqrt{h}$.
For general $\mathbf{x}$, we have 
\begin{equation}
\Delta_{\left(\mathbf{y}\rightarrow\mathbf{t}\right)}\left(\mathbf{x}\right)+\Delta_{\left(\mathbf{s}\rightarrow\mathbf{y}\right)}\left(\mathbf{x}\right)\geq\Delta_{\left(\mathbf{y}\rightarrow\mathbf{t}\right)}\left(\mathbf{z}\right)+\Delta_{\left(\mathbf{s}\rightarrow\mathbf{y}\right)}\left(\mathbf{z}\right)-2\left\Vert \mathbf{x}-\mathbf{z}\right\Vert _{2}=\sqrt{h}-2\left\Vert \mathbf{x}-\mathbf{z}\right\Vert _{2};\label{eq:Delta+Delta}
\end{equation}
so $\psi$ is $O\left(1/\sqrt{h}\right)$-Lipschitz whenever $\left\Vert \mathbf{x}-\mathbf{z}\right\Vert _{2}<3h$,
and otherwise has no effect on $\hat{g}\left(\mathbf{x}\right)$.
We conclude that $\hat{g}$ is still $O\left(1\right)$-Lipschitz
when interpolating across different values of $\psi$. At the interface
with \eqref{eq:line-displacement}, $\psi$ is $1$ ($0$ near $\mathbf{z}_{\left(\mathbf{y}\rightarrow\mathbf{t}\right)}$),
so \eqref{eq:line-displacement} and \eqref{eq:vertex-displacement}
are equal. Therefore $\hat{g}$ is $O\left(1\right)$-Lipschitz on
all of $\left[-1,2\right]^{4m}$. 

To lower bound the magnitude of the displacement, we argue that $\left(\mathbf{z}-\mathbf{x}\right)$
is orthogonal to $\left[\psi\frac{\left(\mathbf{y}-\mathbf{s}\right)}{\left\Vert \mathbf{y}-\mathbf{s}\right\Vert _{2}}+\left(1-\psi\right)\frac{\left(\mathbf{t}-\mathbf{y}\right)}{\left\Vert \mathbf{t}-\mathbf{y}\right\Vert _{2}}\right]$.
First, observe that we can restrict our attention to the component
of $\left(\mathbf{z}-\mathbf{x}\right)$ that belongs to the plane
defined by $\mathbf{s},\mathbf{y},\mathbf{t}$ (in which $\mathbf{z}$
also lies). Let $P_{\mathbf{s},\mathbf{y},\mathbf{t}}\left(\mathbf{x}\right)$
denote the projection of $\mathbf{x}$ to this plain. We can write
points in this plane in terms of their $\Delta\left(\cdot\right)\coloneqq\left(\Delta_{\left(\mathbf{s}\rightarrow\mathbf{y}\right)}\left(\cdot\right),\Delta_{\left(\mathbf{y}\rightarrow\mathbf{t}\right)}\left(\cdot\right)\right)$
values. (Recall that $\left(\mathbf{s}\rightarrow\mathbf{y}\right)$
and $\left(\mathbf{y}\rightarrow\mathbf{t}\right)$ are orthogonal.) 

First, observe that $\Delta\left(\mathbf{z}_{\left(\mathbf{s}\rightarrow\mathbf{y}\right)}\right)=\left(0,\sqrt{h}\right)$,
$\Delta\left(\mathbf{z}_{\left(\mathbf{y}\rightarrow\mathbf{t}\right)}\right)=\left(\sqrt{h},0\right)$
and $\Delta\left(\mathbf{y}\right)=\left(\sqrt{h},\sqrt{h}\right)$.
Notice also that
\[
\left[\psi\frac{\left(\mathbf{y}-\mathbf{s}\right)}{\left\Vert \mathbf{y}-\mathbf{s}\right\Vert _{2}}+\left(1-\psi\right)\frac{\left(\mathbf{t}-\mathbf{y}\right)}{\left\Vert \mathbf{t}-\mathbf{y}\right\Vert _{2}}\right]=\left[\psi\frac{\left(\mathbf{y}-\mathbf{z}_{\left(\mathbf{s}\rightarrow\mathbf{y}\right)}\right)}{\sqrt{h}}+\left(1-\psi\right)\frac{\left(\mathbf{z}_{\left(\mathbf{y}\rightarrow\mathbf{t}\right)}-\mathbf{y}\right)}{\sqrt{h}}\right].
\]
Putting those together, we have that
\begin{equation}
\Delta\left(\left[\psi\frac{\mathbf{y}}{\left\Vert \mathbf{y}-\mathbf{s}\right\Vert _{2}}+\left(1-\psi\right)\frac{\mathbf{t}}{\left\Vert \mathbf{t}-\mathbf{y}\right\Vert _{2}}\right]\right)-\Delta\left(\left[\psi\frac{\mathbf{s}}{\left\Vert \mathbf{y}-\mathbf{s}\right\Vert _{2}}+\left(1-\psi\right)\frac{\mathbf{y}}{\left\Vert \mathbf{t}-\mathbf{y}\right\Vert _{2}}\right]\right)=\left(\psi,1-\psi\right).\label{eq:Delta(a(y-s)-(1-a)(t-y))}
\end{equation}

For $\mathbf{z}$, we have
\[
\Delta\left(\mathbf{z}\right)=\psi\Delta\left(\mathbf{z}_{\left(\mathbf{s}\rightarrow\mathbf{y}\right)}\right)+\left(1-\psi\right)\Delta\left(\mathbf{z}_{\left(\mathbf{y}\rightarrow\mathbf{t}\right)}\right)=\sqrt{h}\left(1-\psi,\psi\right).
\]
Finally, for $P_{\mathbf{s},\mathbf{y},\mathbf{t}}\left(\mathbf{x}\right)$,
we can write
\begin{eqnarray*}
\Delta\left(P_{\mathbf{s},\mathbf{y},\mathbf{t}}\left(\mathbf{x}\right)\right) & = & \left(\Delta_{\left(\mathbf{y}\rightarrow\mathbf{t}\right)}\left(\mathbf{x}\right),\Delta_{\left(\mathbf{s}\rightarrow\mathbf{y}\right)}\left(\mathbf{x}\right)\right)\\
 & = & \frac{1}{\Delta_{\left(\mathbf{y}\rightarrow\mathbf{t}\right)}\left(\mathbf{x}\right)+\Delta_{\left(\mathbf{s}\rightarrow\mathbf{y}\right)}\left(\mathbf{x}\right)}\left(1-\psi,\psi\right).
\end{eqnarray*}
Therefore $\Delta\left(\mathbf{z}\right)-\Delta\left(P_{\mathbf{s},\mathbf{y},\mathbf{t}}\left(\mathbf{x}\right)\right)$
is orthogonal to \eqref{eq:Delta(a(y-s)-(1-a)(t-y))}.

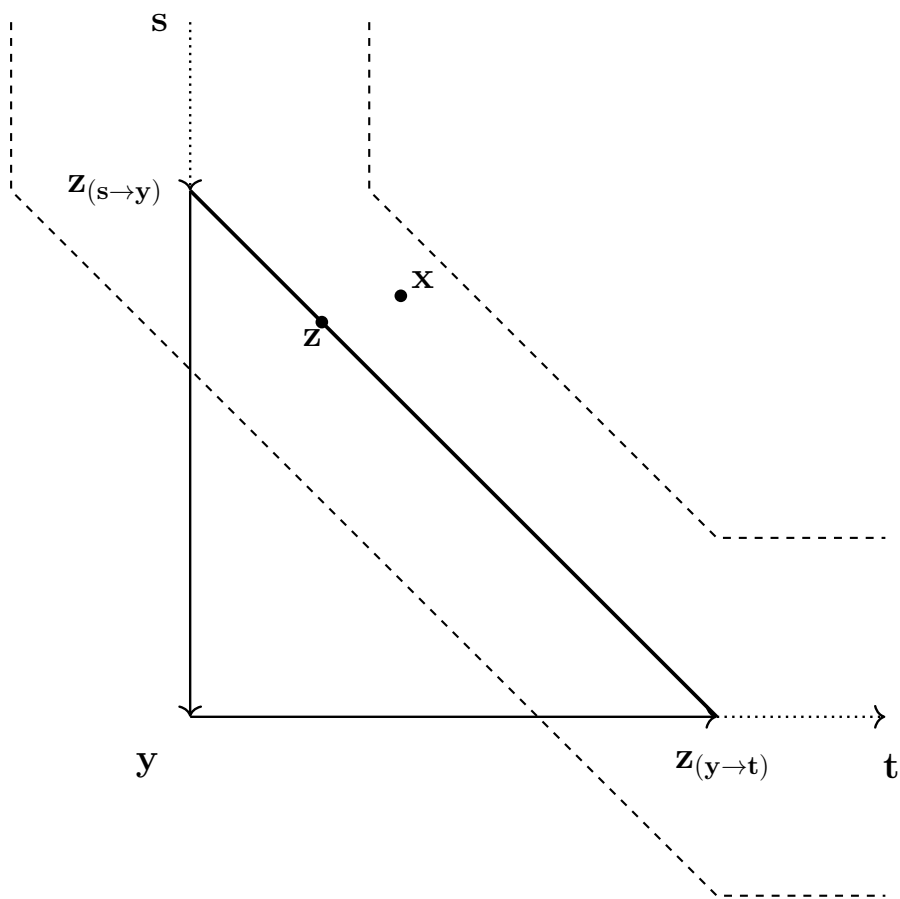
\begin{figure}
\begin{tikzpicture} [scale = 0.07] \tikzstyle{every node}=[font=\Large] \usetikzlibrary{calc}

\coordinate (s) at (0,132); \coordinate (zsy) at (0,100); \coordinate (y) at (0,0); \coordinate (zyt) at (100,0); \coordinate (t) at (132,0);
\coordinate (sPlus) at (34,132); \coordinate (sMinus) at (-34,132);
\coordinate (zsyPlus) at (34,100); \coordinate (zsyMinus) at (-34,100);
\coordinate (zytPlus) at (100,34); \coordinate (zytMinus) at (100,-34);
\coordinate (tPlus) at (132,34); \coordinate (tMinus) at (132,-34);
\coordinate (x) at (40,80); 
\coordinate (z) at (25,75);
\draw (s)+(-7,0) node {{ {$\bf{s}$}}}; \draw (zsy)+(-15.4,0) node {{ {$\bf{\mathbf{z}_{\left(\mathbf{s}\rightarrow\mathbf{y}\right)}}$}}}; \draw (y)+(-9.3,-9.3) node {{ {$\bf{y}$}}}; \draw (zyt)+(0,-9.3) node {{ {$\mathbf{z}_{\left(\mathbf{y}\rightarrow\mathbf{t}\right)}$}}}; \draw (t)+(0,-9.3) node {{ {$\bf{t}$}}};
\draw (x)+(3,3) node {{ {$\bf{x}$}}};   \fill (x)  circle[radius=1.2];
\draw (z)+(-3,-3) node {{ {$\bf{z}$}}};   \fill (z)  circle[radius=1.2];
\draw (s) -- (zsy) [dotted, -{>[scale=1.2]}, line width = 0.9]; \draw (zsy) -- (y) [-{>[scale=1.2]}, line width = 0.9]; \draw (y) -- (zyt) [-{>[scale=1.2]}, line width = 0.9]; \draw (zyt) -- (t) [dotted, -{>[scale=1.2]}, line width = 0.9];
\draw  (zsy) -- (zyt) [line width = 1.4]; \draw  (sPlus) -- (zsyPlus)-- (zytPlus) -- (tPlus) [dashed, line width = 0.8]; \draw  (sMinus) -- (zsyMinus)-- (zytMinus) -- (tMinus) [dashed, line width = 0.8];
\
\end{tikzpicture}
\vspace{1cm}
\caption{Geometry near a Brouwer vertex.
The figure (not drawn to scale) shows some of the important points
near a Brouwer vertex $\mathbf{y}$: There is an incoming Brouwer
line segment from $\mathbf{s}$ through $\mathbf{z_{\left(\mathbf{s}\rightarrow\mathbf{y}\right)}}$,
and an outgoing Brouwer line segment to $\mathbf{t}$ through $\mathbf{z}_{\left(\mathbf{y}\rightarrow\mathbf{t}\right)}$.
For each point $\mathbf{x}$ between the dashed lines, we assign a
point $\mathbf{z}$ on the line $L_{\mathbf{y}}$ as in \eqref{eq:z-definition},
and define the displacement according to \eqref{eq:vertex-displacement}.
Outside the dashed lines (including at $\mathbf{y}$ itself), we use
the default displacement $\delta\left(\mathbf{0}_{3\times m},\mathbf{1}_{m}\right)$.}
\end{figure}

\paragraph{Close to an end-of-line.}
Close the endpoint of a path, we do not have to be as careful with
defining the displacement: any Lipschitz extension of the displacement
we defined everywhere else would do, since here we are allowed (in
fact, expect) to have fixed points. 

For concreteness, let $(\mathbf{s}\rightarrow\mathbf{t})$ be the
last Brouwer line segment in a path. In \eqref{eq:line-displacement},
we defined the displacement for points $\mathbf{x}$ such that $\tau_{(\mathbf{s}\rightarrow\mathbf{t})}\left(\mathbf{x}\right)\leq1-\sqrt{h}$.
For points such that $\tau_{(\mathbf{s}\rightarrow\mathbf{t})}\left(\mathbf{x}\right)=1$
(i.e., at the hyperplane through $\mathbf{t}$ and perpendicular to
$(\mathbf{s}\rightarrow\mathbf{t})$), we simply set the default displacement
$\hat{g}(\mathbf{x})\coloneqq\delta\left(\mathbf{0}_{3\times m},\mathbf{1}_{m}\right)$.
For intermediate values of $\tau_{(\mathbf{s}\rightarrow\mathbf{t})}\left(\mathbf{x}\right)\in\left[1-\sqrt{h},h\right]$,
we simply interpolate according to $\tau_{(\mathbf{s}\rightarrow\mathbf{t})}\left(\mathbf{x}\right)$.
Notice that this induces a fixed point for some intermediate point
since for $\mathbf{x}$ directly ``above'' the Brouwer line segment,
$\delta\frac{\mathbf{z}-\mathbf{x}}{h}$ perfectly cancels $\delta\left(\mathbf{0}_{3\times m},\mathbf{1}_{m}\right)$.
Define the displacement analogously for the first Brouwer line segment
of any path (except for the Brouwer line segment from $\left(\mathbf{0}_{3\times m},2\cdot\mathbf{1}_{m}\right)$
to $\left(\mathbf{0}_{4\times m}\right)$).

\paragraph{Outside the picture.}
The displacement outside the picture is constructed by interpolating
the displacement at $\mathbf{x}_{4}=1/2$, and the displacement at
points in the ``top'' of the hypercube, where $x_{i}=2$ for every
$i$ in the last $m$ coordinates. The former displacement, where
$\mathbf{x}_{4}=1/2$ is defined to match the displacement inside
the picture. Namely, it is the default displacement everywhere except
near the first Brouwer line segment which goes ``down'' from $\mathbf{s}=\left(\mathbf{0}_{3\times m},2\cdot\mathbf{1}_{m}\right)$
to $\mathbf{t}=\left(\mathbf{0}_{4\times m}\right)$. Near this line,
it is defined according to \eqref{eq:line-displacement}. (Notice
that $\left\Vert \mathbf{t}-\mathbf{s}\right\Vert _{2}=1$.) 

Formally, let $\mathbf{z}_{1/2}=\left(\mathbf{0}_{3\times m},\frac{1}{2}\cdot\mathbf{1}_{m}\right)$;
for $\mathbf{x}$ on the boundary of the picture, we have: 
\begin{equation}
\hat{g}\left(\mathbf{x}\right)\coloneqq\begin{cases}
\delta\left(\mathbf{0}_{3\times m},-\mathbf{1}_{m}\right) & \left\Vert \mathbf{x}-\mathbf{z}_{1/2}\right\Vert _{2}=0\\
\delta\frac{\left(\mathbf{z}_{1/2}-\mathbf{x}\right)}{h} & \left\Vert \mathbf{x}-\mathbf{z}_{1/2}\right\Vert _{2}=h\\
\delta\left(\mathbf{0}_{3\times m},\mathbf{1}_{m}\right) & \left\Vert \mathbf{x}-\mathbf{z}_{1/2}\right\Vert _{2}\geq2h
\end{cases}\label{eq:picture-boundary}
\end{equation}
For points $\mathbf{x}$ such that $\mathbf{x}_{4}$ is very close
to $2$, the displacement $\delta\left(\mathbf{0}_{3\times m},\mathbf{1}_{m}\right)$
is not helpful because it points outside the hypercube, i.e., it would
get completely erased by the truncation. Instead, we define the displacement
as follows:
\begin{equation}
\hat{g}\left(\mathbf{x}\right)\coloneqq\begin{cases}
\delta\left(\mathbf{0}_{3\times m},-\mathbf{1}_{m}\right) & \left\Vert \mathbf{x}-\mathbf{z}_{2}\right\Vert _{2}=0\\
\delta\frac{\left(\mathbf{z}_{2}-\mathbf{x}\right)}{h} & \left\Vert \mathbf{x}-\mathbf{z}_{2}\right\Vert _{2}\geq h,
\end{cases}\label{eq:displacement-top}
\end{equation}
where $\mathbf{z}_{2}=\left(\mathbf{0}_{3\times m},2\cdot\mathbf{1}_{m}\right)$. 

When $\mathbf{x}_{4}\in\left(1/2,2\right)$, we interpolate between
\eqref{eq:picture-boundary} and \eqref{eq:displacement-top} according
to $\tau=\tau_{\left(\mathbf{z}_{2}\rightarrow\mathbf{z}_{1/2}\right)}\left(\mathbf{x}\right)\coloneqq\frac{2-\mathbf{x}_{4}}{3/2}$.
That is, let $\mathbf{z}$ be the vector that is $0$ on the first
three $m$-tuples, and $\mathbf{x}_{4}$ on the last $m$-tuple. Then
we define 
\begin{equation}
\hat{g}\left(\mathbf{x}\right)\coloneqq\begin{cases}
\delta\left(\mathbf{0}_{3\times m},-\mathbf{1}_{m}\right) & \left\Vert \mathbf{x}-\mathbf{z}\right\Vert _{2}=0\\
\delta\frac{\left(\mathbf{z}-\mathbf{x}\right)}{h} & \left\Vert \mathbf{x}-\mathbf{z}\right\Vert _{2}=h\\
\left[\tau\frac{\left(\mathbf{z}-\mathbf{x}\right)}{h}+\left(1-\tau\right)\left(\mathbf{0}_{3\times m},\mathbf{1}_{m}\right)\right] & \left\Vert \mathbf{x}-\mathbf{z}\right\Vert _{2}\geq2h
\end{cases}.\label{eq:displacement-outside}
\end{equation}
\end{proof}

\subsection{Doubly-local Brouwer function} \label{subsec:Local-brouwer}

In this subsection we argue that in order to compute $f_{i}\left(\mathbf{x}\right)$
(i.e., the $i$-th coordinate of $f\left(\mathbf{x}\right)$), we need
to know neither the entire function $f\left(\cdot\right)$ (which
would correspond to knowing the entire \EoL instance),
nor the entire vector $\mathbf{x}$ (which is too long to represent
with one player's action). Let $v\left(\mathbf{x}\right)$ be one
of the vertices of the \EoL graph that correspond to
the Brouwer line(s) segment near $\mathbf{x}$ (if such Brouwer line
segment exists). Instead of the entire vector $\mathbf{x}$, it suffices
to know $v\left(\mathbf{x}\right)$, and a more refined, yet still
approximate, value of $\mathbf{x}$ on a random set of coordinates,
including $x_{i}$. Instead of the entire \EoL graph,
it suffices to understand the local neighborhood $v\left(\mathbf{x}\right)$.
We say that our construction is ``doubly-locally computable'' because
we need to know neither the entire $f$ nor the entire $\mathbf{x}$.

We will now formalize the sense in which this construction is ``doubly-locally
computable''.
\begin{lemma}
\label{lem:doubly-local}There is an algorithm that outputs $f_{i}\left(\mathbf{x}\right)$,
whose inputs are:
\begin{itemize}
\item $x_{i}$
\item For each $r\in\left\{ 1,2\right\} $:
\begin{itemize}
\item if $\mathbf{x}\mid_{r}$ is $8\sqrt{h}$-close to $\Enc_{C}\left(v_{r}\right)$
for some $v_{r}\in\left\{ 0,1\right\} ^{n}$, the algorithm receives
$v_{r},S\left(v_{r}\right),P\left(v_{r}\right)$;
\item if $\mathbf{x}\mid_{r}$ is $25\sqrt{h}$-far from $\Enc_{C}\left(v\right)$
for every $v\in\left\{ 0,1\right\} ^{n}$, the algorithm receives
$\perp$;
\item if the distance to the nearest $\Enc_{C}\left(v_{r}\right)$ is in
$\left[8\sqrt{h},25\sqrt{h}\right]$, the algorithm receives either
$v_{r},S\left(v_{r}\right),P\left(v_{r}\right)$ or $\perp$.
\end{itemize}
\item For each $r\in\left\{ 1,2,3,4\right\} $, the value of $\mathbf{x}\mid_{\left\{ r\right\} \times T}$,
where $T\subset\left[m\right]$ has size $\left|T\right|=m/\ell$
and is sampled from a $k$-wise independent distribution.
\end{itemize}
For any choice of $\mathbf{x}\in\left[-1,2\right]^{4\times m}$, the
algorithm answers correctly with high probability over choice of $T$.
Furthermore, if instead of $x_{i}$ and $\mathbf{x}\mid_{\left[4\right]\times T}$
the algorithm receives $\hat{x}_{i}$ and $\hat{\mathbf{x}}\in\left[-1,2\right]^{4\times m/\ell}$
such that $\left(\hat{x}_{i}-x_{i}\right)\leq\epsilon$ and $\left\Vert \hat{\mathbf{x}}\mid_{r}-\mathbf{x}\mid_{\left\{ r\right\} \times T}\right\Vert _{2}^{2}\leq\epsilon$,
the algorithm's output is still accurate to within $\pm O\left(\epsilon\right)$
(w.h.p.). 
\end{lemma}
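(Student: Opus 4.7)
Since $f(\mathbf{x}) = \mathbf{x} + g(\mathbf{x})$, it suffices to compute $g_i(\mathbf{x})$ and then add (a truncation of) $\hat{x}_i$. The construction of $g$ in Lemma \ref{lem:HPV_2} is piecewise: the displacement at $\mathbf{x}$ depends on which Brouwer line segment(s) $\mathbf{x}$ is near (if any), and then on a handful of geometric scalars -- the projection $\tau_{(\mathbf{s}\to\mathbf{t})}(\mathbf{x})$ from \eqref{eq:beta-definition}, the distance $\|\mathbf{x}-\mathbf{z}\|_2$, and, near a Brouwer vertex, the interpolation weight $\psi_{\mathbf{y}}(\mathbf{x})$ from \eqref{eq:psi-definition}. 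My plan is to show (i) the nearby segment(s) can be identified from $v_1, v_2$ and their $S/P$-neighbors; (ii) every relevant scalar, being a normalized inner product or squared norm, can be accurately estimated from $\mathbf{x}|_{\{r\}\times T}$ using $k$-wise concentration; and (iii) the $i$-th entry of each piecewise formula then reduces to an explicit linear combination of the estimated scalars, the (known) $i$-th coordinates of $\mathbf{s}, \mathbf{t}, \mathbf{y}$, and $x_i$.

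For (i), Lemma \ref{lem:just-a-path} guarantees that non-consecutive Brouwer line segments are $2\sqrt{h}$-far, so $\mathbf{x}$ is close to at most one segment, or to two consecutive segments meeting at some Brouwer vertex $\mathbf{y}$. Inspecting the five types of vertices $\mathbf{x}^\tau(u,v)$ in Definition \ref{def:brouwer}, on every Brouwer line segment at least one of the first two tuples is \emph{fixed} at a codeword $\Enc_C(w)$ for $w\in\{u,v\}$ (the other tuple is either also fixed at a codeword or interpolates between $\Enc_C(u)$ and $\Enc_C(v)$). Hence for any $\mathbf{x}$ within $O(\sqrt{h})$ of a segment, at least one $r\in\{1,2\}$ satisfies the $8\sqrt{h}$ decoding condition, and combining $v_r$ with $S(v_r), P(v_r)$ identifies the edge $(u\to v)$ and thus both endpoints of the segment (and of any adjacent segment at $\mathbf{y}$). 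The intermediate decoding range $[8\sqrt{h}, 25\sqrt{h}]$ is non-committal but harmless: at such distances the displacement formulas have already transitioned to the default direction up to $O(\sqrt{h})$, so either answer is consistent.

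For (ii), each scalar entering \eqref{eq:line-displacement}, \eqref{eq:vertex-displacement}, \eqref{eq:picture-boundary}, \eqref{eq:displacement-top}, \eqref{eq:displacement-outside} is by definition an expectation of a bounded coordinate-wise function in a single tuple $r$, with coefficients depending only on $\mathbf{s}, \mathbf{t}, \mathbf{y}$ (known from (i)). Since $T$ is $k$-wise independent of size $m/\ell$, Theorem \ref{thm:chernoff-k-wise} yields empirical estimates accurate to $o(1)$ with high probability, which is far smaller than the constant scale $h$ at which the formulas transition. Step (iii) is then algebra: the $i$-th entry of $\mathbf{z}$ is $\tau t_i + (1-\tau) s_i$, and the $i$-th entry of each displacement formula is an explicit linear combination of $x_i$, $s_i$, $t_i$, $y_i$, the default direction's $i$-th coordinate, and the estimated scalars; the outer truncation uses only $x_i$ and $\hat{g}_i(\mathbf{x})$.

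The main obstacle is the robustness guarantee together with case-transition bookkeeping. For robustness, substituting $\hat{\mathbf{x}}|_r$ at normalized $\ell_2$-distance $\sqrt{\epsilon}$ from $\mathbf{x}|_{\{r\}\times T}$ perturbs every empirical inner product and squared norm by $O(\epsilon)$ via Cauchy--Schwarz, and these perturbations propagate linearly through the explicit formulas; combined with the $O(1)$-Lipschitz property of $\hat{g}$ from Lemma \ref{lem:HPV_2}, the computed $g_i$ is accurate to $\pm O(\epsilon)$, and the $\epsilon$-error in $\hat{x}_i$ contributes another $O(\epsilon)$. For bookkeeping, the crucial point is that all case transitions are themselves governed by the estimated scalars; with $o(1)$ estimation error the algorithm either selects a single correct formula or interpolates between two formulas that agree up to the Lipschitz slack, preserving the $O(\epsilon)$ accuracy overall.
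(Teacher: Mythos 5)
Your proposal follows essentially the same route as the paper's proof: identify the nearby Brouwer segment(s) from $v_r, S(v_r), P(v_r)$, estimate the relevant scalars $\tau, \psi, \|\mathbf{x}-\mathbf{z}\|_2$ (and the tuple-averages $\mathbf{x}_3,\mathbf{x}_4$, used for case selection) via the $k$-wise-independent Chernoff bound of Theorem~\ref{thm:chernoff-k-wise}, observe that the $i$-th coordinate of each formula in \eqref{eq:line-displacement}--\eqref{eq:displacement-outside} is an explicit combination of these scalars together with $x_i,s_i,t_i,y_i$, and appeal to the stronger coordinate-wise Lipschitz bound from Lemma~\ref{lem:HPV_2} to absorb estimation noise and the $\epsilon$-perturbation. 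Two small points that your write-up glosses over and that deserve a sentence: first, decoding a single $v_r$ does not by itself pin down the edge (e.g.\ $v_1=u$ could place $\mathbf{x}$ on $\mathbf{x}^1(u,S(u))\to\mathbf{x}^2$ or on $\mathbf{x}^3(P(u),u)\to\mathbf{x}^4$), so the estimated $\mathbf{x}_3$ (compute-vs-copy) and, where available, the other $v_r$ are what disambiguate; second, your phrase ``transitioned to the default direction up to $O(\sqrt h)$'' should really be ``is exactly the default'' --- the point is that if $\|\mathbf{x}|_r - \Enc_C(w)\|_2 \ge 8\sqrt h$ for every $w$ and for both $r=1,2$, then (since each Brouwer line segment fixes at least one of $\mathbf{x}|_1,\mathbf{x}|_2$ at a codeword) $\|\mathbf{x}-\mathbf{y}\|_2\ge 4\sqrt h > 2\sqrt h$ for every $\mathbf{y}$ on a segment, so item 5 of Lemma~\ref{lem:HPV_2} gives the default displacement outright, making the $\perp$ vs.\ $v_r$ ambiguity harmless. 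Neither gap is fatal; the core decomposition and estimation strategy matches the paper.
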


\begin{proof}
For $r\in\left\{ 3,4\right\} $, let $\hat{\mathbf{x}}{}_{r}$ denote
the expectation over $\hat{\mathbf{x}}\mid_{r}$. By the $k$-wise
independent Chernoff bound (\autoref{thm:chernoff-k-wise}), we
have that $\hat{\mathbf{x}}{}_{r}=\mathbf{x}{}_{r}\pm o\left(1\right)$
with probability $1-o\left(1\right)$. Assume henceforth that indeed
$\hat{\mathbf{x}}{}_{r}=\mathbf{x}{}_{r}\pm o\left(1\right)$.

\paragraph{Outside the picture.}

The algorithm uses $\hat{\mathbf{x}}_{4}$ to determine whether $\mathbf{x}$
is inside the picture. Close to the border $\mathbf{x}_{4}=1/2$ the
algorithm may confuse inside and outside the picture when $\hat{\mathbf{x}}$
is noisy, but by the stronger Lipschitz condition (see \autoref{lem:HPV_2})
on $f\left(\cdot\right)$, this confusion will only have a small effect.
Consider first $\mathbf{x}$ which is outside the picture. For $\mathbf{x}$
close to the first Brouwer line segment (from $\mathbf{z}_{2}$ to
$\left(\mathbf{0}_{4\times m}\right)$), the algorithm will receive
$v_{1}=v_{2}=\mathbf{0}_{n}$ as input. In this case, the algorithm
proceeds analogously to the case of $\mathbf{x}$ close to a line
inside the picture (described below). Otherwise, if $\mathbf{x}$
is outside the picture and far from the first Brouwer line segment
(namely, $\hat{\mathbf{x}}_{4}>1/2$ and at least one of $v_{1}\neq\mathbf{0}_{n}$
or $v_{2}\neq\mathbf{0}_{n}$), the algorithm simply applies the default
displacement. We henceforth restrict our attention to $\mathbf{x}$
inside the picture. 

\paragraph{Default displacement.}

At every point on a Brouwer line segment, at least one of $\mathbf{x}_{1}$
and $\mathbf{x}_{2}$ is equal to $\Enc_{C}\left(v\right)$ for some
$v\in\left\{ 0,1\right\} ^{n}$. Therefore, if $\mathbf{x}_{1}$ and
$\mathbf{x}_{2}$ are both far from every $\Enc_{C}\left(v\right)$,
the algorithm can simply apply the default displacement, i.e., return
$f_{i}\left(\mathbf{x}\right)=x_{i}+\delta\left(\mathbf{0}_{3\times m},\mathbf{1}_{m}\right)$.

\paragraph{Close to line.}

Whenever $\mathbf{x}$ is close to a single Brouwer line segment $\left(\mathbf{s}\rightarrow\mathbf{t}\right)$,
the algorithm can compute $\mathbf{s}$ and $\mathbf{t}$ using $v_{r},S\left(v_{r}\right),P\left(v_{r}\right)$.
Our first task is to estimate 
\begin{align*}
\tau_{\left(\mathbf{s}\rightarrow\mathbf{t}\right)}\left(\mathbf{x}\right) & =\frac{\left(\mathbf{t}-\mathbf{s}\right)}{\left\Vert \mathbf{s}-\mathbf{t}\right\Vert _{2}^{2}}\cdot\left(\mathbf{x}-\mathbf{s}\right)\\
 & =\frac{1}{\left\Vert \mathbf{s}-\mathbf{t}\right\Vert _{2}^{2}}\E_{i\in\left[4\right]\times\left[m\right]}\left[\left(t_{i}-s_{i}\right)\left(x_{i}-s_{i}\right)\right]
\end{align*}
Let 
\begin{align*}
\hat{\tau} & \coloneqq\frac{1}{\left\Vert \mathbf{s}-\mathbf{t}\right\Vert _{2}^{2}}\E_{i\in\left[4\right]\times T}\left[\left(t_{i}-s_{i}\right)\left(x_{i}-s_{i}\right)\right].
\end{align*}
By the $k$-wise independent Chernoff bound, we have that 
\begin{equation}
\hat{\tau}=\tau_{\left(\mathbf{s}\rightarrow\mathbf{t}\right)}\left(\mathbf{x}\right)\pm o\left(1\right)\label{eq:tau-hat-1}
\end{equation}
 with high probability (recall that $\left\Vert \mathbf{s}-\mathbf{t}\right\Vert _{2}^{2}=\Theta\left(1\right)$
for every $\mathbf{s},\mathbf{t}$). We assume henceforth that this
is the case.

Let $\hat{\mathbf{z}}\coloneqq\hat{\tau}\mathbf{t}+\left(1-\hat{\tau}\right)\mathbf{s}$.
By \eqref{eq:tau-hat-1}, we have that $\left\Vert \hat{\mathbf{z}}-\mathbf{z}\right\Vert _{\infty}=o\left(1\right)$.
Then, the algorithm can compute each case of \eqref{eq:line-displacement}
to within an $o\left(1\right)$ error in $\left\Vert \cdot\right\Vert _{\infty}$. 

Furthermore, by the $k$-wise Chernoff bound, we have
that $\left\Vert \hat{\mathbf{z}}\mid_{\left[4\right]\times T}-\hat{\mathbf{x}}\right\Vert _{2}=\left\Vert \mathbf{z}-\mathbf{x}\right\Vert _{2}\pm o\left(1\right)$
with high probability. Whenever this is the case, the algorithm can
also interpolate between the cases of \eqref{eq:line-displacement}
(approximately) correctly.

\paragraph{Close to a vertex.}

The displacement close to a Brouwer vertex is approximated similarly
to the displacement close to a Brouwer line segment. For $\mathbf{x}$
that is close to a Brouwer vertex $\mathbf{y}$, both $\mathbf{y}_{1}$
and $\mathbf{y}_{2}$ are equal to $\Enc_{C}\left(v_{1}\right),\Enc_{C}\left(v_{2}\right)$,
respectively, for some $v_{1},v_{2}\in\left\{ 0,1\right\} ^{n}$.
Therefore we can recover $\mathbf{y}$, as well as $\mathbf{s},\mathbf{t}$
using $v_{r},S\left(v_{r}\right),P\left(v_{r}\right)$. As in \eqref{eq:tau-hat-1}
we can (with high probability) recover $\tau_{\left(\mathbf{s}\rightarrow\mathbf{y}\right)}\left(\mathbf{x}\right)$
and $\tau_{\left(\mathbf{y}\rightarrow\mathbf{t}\right)}\left(\mathbf{x}\right)$
up to $\pm o\left(1\right)$; then we can also compute an estimate
$\hat{\psi}$ of $\psi\left(\mathbf{x}\right)$, which is accurate
to within $\pm o\left(1\right)$. The algorithm uses $\hat{\psi}$
to obtain an approximation $\hat{\mathbf{z}}\coloneqq\psi\mathbf{z}_{\left(\mathbf{s}\rightarrow\mathbf{y}\right)}+\left(1-\psi\right)\mathbf{z}_{\left(\mathbf{y}\rightarrow\mathbf{t}\right)}$,
which again satisfies $\left\Vert \hat{\mathbf{z}}-\mathbf{z}\right\Vert _{\infty}=o\left(1\right)$.
Therefore, the algorithm can compute each case of \eqref{eq:vertex-displacement}
to within an $o\left(1\right)$ error in $\left\Vert \cdot\right\Vert _{\infty}$.
Finally, applying the $k$-wise independent Chernoff bound, we have
that $\left\Vert \hat{\mathbf{z}}\mid_{\left[4\right]\times T}-\hat{\mathbf{x}}\right\Vert _{2}=\left\Vert \mathbf{z}-\mathbf{x}\right\Vert _{2}\pm o\left(1\right)$
with high probability, so the algorithm interpolates between the cases
of \eqref{eq:line-displacement} (approximately) correctly.
\end{proof}

\section{The Game} \label{sec:game}

In this section we construct an $N\times N$ game for $N=2^{\left(1+o\left(1\right)\right)n/2}$; the construction for asymmetric dimensions, $N^a\times N^b$, is similar. In particular, each player can construct their own payoff matrix from their input, $\alpha$ or $\beta$, to the \EoL problem of \autoref{thm:eol}.
We prove in \autoref{subsec:Analysis} that any approximate Nash equilibrium
of the game gives an approximate fixed point of the function $f$
from \autoref{sec:Brouwer}. Hence by \autoref{lem:HPV_2},
it further yields a solution to the \EoL instance.
And by \autoref{thm:eol}, finding the latter requires
$\tilde{\Omega}\left(2^{n}\right)=N^{2-o\left(1\right)}$ communication.

\subsection{Preliminaries} \label{sec:game-prelim}

\paragraph{Half-vertices.}
For any vertex $v=\left(v^{a}\circ v^{b}\right)\in\left\{ 0,1\right\} ^{n}$,
we call $v^{a},v^{b}$ its corresponding {\em half-vertices}. We
say $v_{1}^{a},v_{2}^{a}\in\left\{ 0,1\right\} ^{n/2}$ are {\em half-neighbors}
if there exist $v_{1}^{b},v_{2}^{b}\in\left\{ 0,1\right\} ^{n/2}$
such that $\left(v_{1}^{a}\circ v_{1}^{b}\right)$ and $\left(v_{2}^{a}\circ v_{2}^{b}\right)$
are neighbors in the \EoL host graph.

\paragraph{Error correcting codes.}

Let $C'$ be a constant rate, constant (relative) distance, error
correcting code which encodes strings in $\left\{ 0,1\right\} ^{n/2}$
as strings in $\left\{ 0,1\right\} ^{m/2}$. $C'$ induces a new code
$C$ to encode a string $v=\left(v^{a}\circ v^{b}\right)\in\left\{ 0,1\right\} ^{n}$
by 
\[
\Enc_{C}\left(v\right)\coloneqq\big(\Enc_{C'}(v^{a})\circ\Enc_{C'}(v^{b})\big).
\]
Notice that $C$ also has constant rate and (relative) distance.

\subsection{Strategies} \label{subsec:Strategies}

At equilibrium, the mixed strategies of Alice and Bob should implicitly
represent points $\mathbf{x}^{\mathcal{A}},\mathbf{x}^{\mathcal{B}}\in\left[-1,2\right]^{4\times m}$,
such that $\mathbf{x}^{\mathcal{A}}\approx\mathbf{x}^{\mathcal{B}}\approx f\left(\mathbf{x}^{{\cal A}}\right)$. 

An action $\mathbf{a}$ of Alice consists of:
\begin{itemize}
\item Two boolean vectors $v_{1}^{\left(\mathbf{a}\right)},v_{2}^{\left(\mathbf{a}\right)}\in\left\{ 0,1\right\}^{n/2}\cup\{\perp\}$.
In order to restrict the number of actions, we only allow pairs where either: (i) $v_{1}^{\left(\mathbf{a}\right)}=\perp$; (ii) $v_{2}^{\left(\mathbf{a}\right)}=\perp$; (iii) $v_{1}^{\left(\mathbf{a}\right)}=v_{2}^{\left(\mathbf{a}\right)}$; or (iv) $v_{1}^{\left(\mathbf{a}\right)},v_{2}^{\left(\mathbf{a}\right)}$ are half-neighbors. By \autoref{fact:labels}, there are only $2^{(1+o(1))n/2}$ such pairs. (Together, Alice's and Bob's vectors should represent
a pair of vertices, i.e., for each $r\in\left\{ 1,2\right\} $, $v_{r}^{\left(\mathbf{a},\mathbf{b}\right)}\coloneqq\left(v_{r}^{\left(\mathbf{a}\right)}\circ v_{r}^{\left(\mathbf{b}\right)}\right)\in V$.)
\item An index $j^{\left(\mathbf{a}\right)}\in\left[\ell^{k+1}\right]$
(this corresponds to a choice of subset of coordinates $\sigma_{j^{\left(\mathbf{a}\right)}}\subset\left[m\right]$);
\item A subset $J^{\left(\mathbf{a}\right)}\in\binom{\left[\ell^{k+1}\right]}{\ell^{k+1}/2}$.
\item A partial vector $\mathbf{x}^{\left(\mathbf{a}\right)}\in\left[-1,2\right]^{\left[4\right]\times\sigma_{j^{\left(\mathbf{a}\right)}}}$
(this partial vector should represent the restriction of $\mathbf{x}^{\left(\mathcal{A}\right)}$
to $\left[4\right]\times\sigma_{j^{\left(\mathbf{a}\right)}}$).
Moreover, the entries of $\mathbf{x}^{\left(\mathbf{a}\right)}$
are discretized to within a small additive constant $\mbox{\ensuremath{\epsilon_{\textsc{Precision}}}}$.

\item An input $\alpha^{\left(\mathbf{a}\right)}\in\left\{ 0,1\right\} ^{O\left(1\right)}$
to a constant number of ``lifting gadgets'' (specifically, the lifting
gadgets corresponding to $v_{r}^{\left(\mathbf{a},\mathbf{b}\right)}$
and its neighbors).
\end{itemize}

An action $\mathbf{b}$ of Bob is constructed almost analogously,
with two differences: (i) we drop the $\alpha$-component, because
that is Alice's input to the lifting gadget (Bob's input to the lifting
gadget will be implicit in the computation of his payoff matrix);
and (ii) we add a second partial vector $\hat{\mathbf{x}}$, which
plays a similar role as $\mathbf{x}$, but the separation will make
the proof a little easier. Formally,
\begin{align*}
\mathbf{b} & \coloneqq\left(v^{\left(\mathbf{b}\right)},j^{\left(\mathbf{b}\right)},J^{\left(\mathbf{b}\right)},\mathbf{x}^{\left(\mathbf{b}\right)},\hat{\mathbf{x}}^{\left(\mathbf{b}\right)}\right)\\
 & \in\left\{ 0,1\right\} ^{(1+o(1))n/2}\times\left[\ell^{k+1}\right]\times\binom{\left[\ell^{k+1}\right]}{\ell^{k+1}/2}\times\left[-1,2\right]^{\left[4\right]\times\tau_{j^{\left(\boldsymbol{b}\right)}}}\times\left[-1,2\right]^{\left[4\right]\times\tau_{j^{\left(\boldsymbol{b}\right)}}}.
\end{align*}

\subsection{Utilities} \label{subsec:Utilities}

We define Alice's and Bob's utilities modularly, as a sum of sub-utilities
corresponding to the different components of Alice's (respectively,
Bob's) action. In the rest of this subsection we define the $C$-utility
of Alice, for $C\in\left\{ v_{1},v_{2},\alpha_{1},\alpha_{2},j,J,\mathbf{x}\right\} $,
and denote it by $U_{C}^{A}\left(\mathbf{a};\mathbf{b}\right)$ (where
$\mathbf{a},\mathbf{b}$ denote Alice's, Bob's respective actions).
Alice's final utility is given as a sum of sub-utilities, scaled by
(constant) factors $\lambda_{C}$: 
\[
U^{A}\left(\mathbf{a};\mathbf{b}\right)\coloneqq\sum_{C}\lambda_{C}U_{C}^{A}\left(\mathbf{a};\mathbf{b}\right).
\]

Similarly for Bob, we define for $C'\in\left\{ v_{1},v_{2},j,J,\mathbf{x},\hat{\mathbf{x}}\right\}$:
\[
U^{B}\left(\mathbf{b};\mathbf{a}\right)\coloneqq\sum_{C'}\lambda_{C'}U_{C'}^{B}\left(\mathbf{b};\mathbf{a}\right).
\]

For simplicity, we use the same weights $\lambda_{C}$ for Alice
and Bob (whenever $C$ is defined for both), and also set $\lambda_{v_{1}}=\lambda_{v_{2}}$,
$\lambda_{\alpha_{1}}=\lambda_{\alpha_{2}}$, $\lambda_{j}=\lambda_{J}$,
and $\lambda_{\mathbf{x}}=\lambda_{\hat{\mathbf{x}}}$. Finally,
we choose all of them to be sufficiently small constants, and satisfy
the following constraints:
\[
\lambda_{v_{r}}/2>\epsilon_{\textsc{Nash}}+\max\left\{ \lambda_{\alpha_{r}},9\lambda_{\hat{\mathbf{x}}}\right\} 
\]
and 
\[
\epsilon_{\textsc{Brouwer}}\ll\delta^{2},h,
\]
where $\epsilon_{\textsc{Brouwer}}=\epsilon_{\textsc{Precision}}^{2}+\Theta\left(\epsilon_{\textsc{Nash}}/v_{\mathbf{x}}+\lambda_{\mathbf{x}}/\lambda_{j}\right)$.

\subsubsection*{Rounding to a vertex}

For each $r\in\left\{ 1,2\right\} $ we define
\[
U_{v_{r}}^{A}\left(v^{\left(\mathbf{a}\right)};j^{\left(\mathbf{b}\right)},\mathbf{x}^{\left(\mathbf{b}\right)}\right)\coloneqq\begin{cases}
1 & \left\Vert \Enc_{C'}\left(v_{r}^{\left(\mathbf{a}\right)}\right)\mid_{\tau_{j^{\left(\boldsymbol{b}\right)}}}-\mathbf{x}_{r}^{\left(\mathbf{b}\right)}\mid_{\tau_{j^{\left(\boldsymbol{b}\right)}}\cap\left[m/2\right]}\right\Vert _{2}<23\sqrt{h}\\
0 & v_{r}^{\left(\mathbf{a}\right)}=\perp\\
-1 & \text{otherwise}
\end{cases},
\]
and $U_{v}^{A}\coloneqq U_{v_{1}}^{A}+U_{v_{2}}^{A}$. Bob's $v$-utility,
denoted $U_{v}^{B}\left(v^{\left(\mathbf{b}\right)};j^{\left(\mathbf{a}\right)},\mathbf{x}^{\left(\mathbf{a}\right)}\right)$
is defined analogously (but over coordinates $\sigma_{j^{\left(\mathbf{a}\right)}}\setminus\left[m/2\right]$).

\subsubsection*{Implementing the lifting gadget}

For each $r\in\left\{ 1,2\right\} $ given $v^{\left(\mathbf{a}\right)},v^{\left(\mathbf{b}\right)}$,
let $v_{r}^{\left(\mathbf{a},\mathbf{b}\right)}\coloneqq\left(v_{r}^{\left(\mathbf{a}\right)}\circ v_{r}^{\left(\mathbf{b}\right)}\right)\in V\cup\left\{ \perp\right\} $
denote the vertex whose description corresponds to the concatenation
of $v_{r}^{\left(\mathbf{a}\right)},v_{r}^{\left(\mathbf{b}\right)}$
(where concatenation of $\perp$ and anything is still $\perp$).
Let $\alpha\left(v_{r}^{\left(\mathbf{a},\mathbf{b}\right)}\right)$
denote Alice's input to the lifting gadget corresponding to $v_{r}^{\left(\mathbf{a},\mathbf{b}\right)}$
(where $\alpha\left(\perp\right)\coloneqq\perp$). Her gadget utility
is defined as 
\[
U_{\alpha_{r}}^{A}\left(\alpha^{\left(\mathbf{a}\right)},v^{\left(\mathbf{a}\right)};v^{\left(\mathbf{b}\right)}\right)\coloneqq\begin{cases}
1 & \alpha_{r}^{\left(\mathbf{a}\right)}=\alpha\left(v_{r}^{\left(\mathbf{a},\mathbf{b}\right)}\right)\\
0 & \text{otherwise}
\end{cases},
\]
and $U_{\alpha}^{A}\coloneqq U_{\alpha_{1}}^{A}+U_{\alpha_{2}}^{A}$.

We also define $\beta\left(v_{r}^{\left(\mathbf{a},\mathbf{b}\right)}\right)$
as Bob's input to the lifting gadgets. But we do not need to explicitly
make it part of Bob's strategy because Alice doesn't need to know
the local structure of $f\left(\cdot\right)$ in order to compute
her utility.

\subsubsection*{Locally computing the Brouwer function}

For $r\in\left\{ 1,2\right\} $ let $P^{\left(\mathbf{a}\right)}\left(v_{r}^{\left(\mathbf{a},\mathbf{b}\right)}\right),S^{\left(\mathbf{a}\right)}\left(v_{r}^{\left(\mathbf{a},\mathbf{b}\right)}\right)$
be the predecessor and successor of $v_{r}^{\left(\mathbf{a},\mathbf{b}\right)}$,
as computed by Bob according to $\alpha_{r}^{\left(\mathbf{a}\right)}$
and $\beta\left(v_{r}^{\left(\mathbf{a},\mathbf{b}\right)}\right)$.
If either $v_{r}^{\left(\mathbf{a},\mathbf{b}\right)}=\perp$ or $\alpha^{\left(\mathbf{a}\right)}=\perp$,
we define $P^{\left(\mathbf{a}\right)}\left(v_{r}^{\left(\mathbf{a},\mathbf{b}\right)}\right)=S^{\left(\mathbf{a}\right)}\left(v_{r}^{\left(\mathbf{a},\mathbf{b}\right)}\right)=\perp$.
(Notice that only Bob can compute those since Alice does not know
$\beta\left(v_{r}^{\left(\mathbf{a},\mathbf{b}\right)}\right)$.)

For $i\in\sigma_{j^{\left(\mathbf{a}\right)}}\cap\tau_{j^{\left(\mathbf{b}\right)}}$,
let $f_{i}^{\left(\mathbf{a},\mathbf{b}\right)}\left(\mathbf{x}^{\left(\mathbf{a}\right)}\right)$
be the output of the doubly-local algorithm from \autoref{subsec:Local-brouwer},
given inputs $v_{r}^{\left(\mathbf{a},\mathbf{b}\right)},P^{\left(\mathbf{a}\right)}\left(v_{r}^{\left(\mathbf{a},\mathbf{b}\right)}\right),S^{\left(\mathbf{a}\right)}\left(v_{r}^{\left(\mathbf{a},\mathbf{b}\right)}\right)$
and $\mathbf{x}^{\left(\mathbf{a}\right)}$. Let $f^{\left(\mathbf{a},\mathbf{b}\right)}\left(\mathbf{x}^{\left(\mathbf{a}\right)}\right)$
denote the vector that has $f_{i}^{\left(\mathbf{a},\mathbf{b}\right)}\left(\mathbf{x}^{\left(\mathbf{a}\right)}\right)$
for each $i\in\sigma_{j^{\left(\mathbf{a}\right)}}\cap\tau_{j^{\left(\mathbf{b}\right)}}$.

We set the $\hat{\mathbf{x}}$-utility of Bob as
\[
U_{\hat{\mathbf{x}}}^{B}\left(\hat{\mathbf{x}}^{\left(\mathbf{b}\right)},j^{\left(\mathbf{b}\right)},v^{\left(\mathbf{b}\right)};\alpha^{\left(\mathbf{a}\right)},j^{\left(\mathbf{a}\right)},v^{\left(\mathbf{a}\right)},\mathbf{x}^{\left(\mathbf{a}\right)}\right)\coloneqq-\left\Vert \hat{\mathbf{x}}^{\left(\mathbf{b}\right)}\mid_{\sigma_{j^{\left(\mathbf{a}\right)}}\cap\tau_{j^{\left(\mathbf{b}\right)}}}-f^{\left(\mathbf{a},\mathbf{b}\right)}\left(\mathbf{x}^{\left(\mathbf{a}\right)}\right)\right\Vert _{2}^{2}.
\]
Alice's $\mathbf{x}$-utility is simpler, since it does not depend
on the Brouwer function (or the communication problem) at all:
\[
U_{\mathbf{x}}^{A}\left(\mathbf{x}^{\left(\mathbf{a}\right)},j^{\left(\mathbf{a}\right)};\hat{\mathbf{x}}^{\left(\mathbf{b}\right)},j^{\left(\mathbf{b}\right)}\right)\coloneqq-\left\Vert \hat{\mathbf{x}}^{\left(\mathbf{b}\right)}\mid_{\sigma_{j^{\left(\mathbf{a}\right)}}\cap\tau_{j^{\left(\mathbf{b}\right)}}}-\mathbf{x}^{\left(\mathbf{a}\right)}\mid_{\sigma_{j^{\left(\mathbf{a}\right)}}\cap\tau_{j^{\left(\mathbf{b}\right)}}}\right\Vert _{2}^{2}.
\]
Finally, we think of $\mathbf{x}^{\left(\mathbf{b}\right)}$ as a
stabler version of $\hat{\mathbf{x}}^{\left(\mathbf{b}\right)}$;
e.g., it will be easy to prove that in every approximate Nash equilibrium
Bob's choice of $\mathbf{x}^{\left(\mathbf{b}\right)}$ is essentially
deterministic (or ``pure''). Bob chooses $\mathbf{x}^{\left(\mathbf{b}\right)}$
simply to imitate Alice's $\mathbf{x}^{\left(\mathbf{a}\right)}$:
\[
U_{\mathbf{x}}^{B}\left(\mathbf{x}^{\left(\mathbf{b}\right)},j^{\left(\mathbf{b}\right)};\mathbf{x}^{\left(\mathbf{a}\right)},j^{\left(\mathbf{a}\right)}\right)\coloneqq-\left\Vert \mathbf{x}^{\left(\mathbf{b}\right)}\mid_{\sigma_{j^{\left(\mathbf{a}\right)}}\cap\tau_{j^{\left(\mathbf{b}\right)}}}-\mathbf{x}^{\left(\mathbf{a}\right)}\mid_{\sigma_{j^{\left(\mathbf{a}\right)}}\cap\tau_{j^{\left(\mathbf{b}\right)}}}\right\Vert _{2}^{2}.
\]

\subsubsection*{Enforcing near-uniform distribution over $i,j$}

For each of $j^{\left(\mathbf{a}\right)},j^{\left(\mathbf{b}\right)}$,
Alice and Bob play a generalized hide-and-seek win-lose zero-sum game
due to Althofer \cite{Althofer1993}. For $j^{\left(\mathbf{a}\right)}$,
for example, Bob is guessing a subset $J^{\left(\mathbf{b}\right)}$
whose aim is to catch Alice's $j^{\left(\mathbf{a}\right)}$. Formally,
we define: 
\begin{align*}
U_{J}^{B}\left(J^{\left(\mathbf{b}\right)};j^{\left(\mathbf{a}\right)}\right) & \coloneqq\begin{cases}
1 & j^{\left(\mathbf{a}\right)}\in J^{\left(\mathbf{b}\right)}\\
-1 & j^{\left(\mathbf{a}\right)}\notin J^{\left(\mathbf{b}\right)}
\end{cases}\\
U_{j}^{A}\left(j^{\left(\mathbf{a}\right)};J^{\left(\mathbf{b}\right)}\right) & \coloneqq-U_{J}^{B}\left(J^{\left(\mathbf{b}\right)};j^{\left(\mathbf{a}\right)}\right).
\end{align*}
The $j^{\left(\mathbf{b}\right)}$ game is defined analogously.

\subsection{Analysis} \label{subsec:Analysis}

It remains to show, for a sufficiently small constant $\epsilon>0$, that every $\epsilon$-approximate Nash equilibrium ($\epsilon$-ANE) for our game implies an approximate fixed point to $f$. For convenience, it is enough to show this for \emph{well-supported} Nash equilibria: We say that $(\calA,\calB)$ is an \emph{$\epsilon$-Well-Supported Nash equilibrium} ($\epsilon$-WSNE) if every $a$ in the support of $\calA$ is $\epsilon$-optimal against Bob's mixed strategy $\calB$, and the same holds with roles reversed. Formally, the condition for Alice is
\[
\forall a\in\supp\calA\colon\quad
\E_{b\sim\calB}\left[U^A(a,b)\right]~\geq~
\max_{a'\in S_A} \enspace\E_{b\sim\calB}\left[U^A(a',b)\right]-\epsilon.
\]
Indeed, as observed by \cite{DGP}, every $\epsilon$-ANE can be pruned to an $O(\sqrt{\epsilon})$-WSNE:
\begin{lemma}
[{\cite[Lemma 15]{DGP}}]\label{lem:ANE-WSNE}
Given an $\epsilon$-ANE for a two-player game, an $O(\sqrt{\epsilon})$-WSNE
can be obtained by removing, from each player's mixed strategy, all
actions that are not $(\epsilon+\sqrt{\epsilon})$-optimal
(with respect to the other player's mixed strategies in the original
$\epsilon$-ANE).
\end{lemma}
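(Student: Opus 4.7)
The plan is a standard Markov-style argument combined with a continuity estimate. Let $(\calA,\calB)$ be the given $\epsilon$-ANE. Write $u_A^* \coloneqq \max_{a'}\E_{b\sim\calB}[U^A(a',b)]$ and let $B_A \coloneqq \{a : \E_{b\sim\calB}[U^A(a,b)] < u_A^* - (\epsilon+\sqrt{\epsilon})\}$ denote the ``bad'' Alice actions to be removed; analogously define $u_B^*$ and $B_B$. Let $\calA'$ (resp.\ $\calB'$) be the conditional distribution of $\calA$ ($\calB$) given that the action is not in $B_A$ ($B_B$), renormalized.

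First I would bound the mass that pruning removes. Since $\calA$ attains expected utility at least $u_A^* - \epsilon$ against $\calB$ and every action in $B_A$ contributes at most $u_A^* - (\epsilon+\sqrt{\epsilon})$ while every other action contributes at most $u_A^*$, writing $p_A \coloneqq \Pr_{\calA}[a \in B_A]$ gives
\[
u_A^* - \epsilon ~\leq~ (1-p_A) u_A^* + p_A (u_A^* - \epsilon - \sqrt{\epsilon}) ~=~ u_A^* - p_A(\epsilon+\sqrt{\epsilon}),
\]
so $p_A \leq \epsilon/(\epsilon+\sqrt{\epsilon}) \leq \sqrt{\epsilon}$. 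The same bound holds for $p_B$, and in particular the total variation distances satisfy $\|\calA-\calA'\|_{\mathrm{TV}},\|\calB-\calB'\|_{\mathrm{TV}} \leq \sqrt{\epsilon}$.

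Next I would use the fact that utilities lie in $[0,1]$: for any fixed action $a$ of Alice,
\[
\left|\E_{b\sim\calB'}[U^A(a,b)] - \E_{b\sim\calB}[U^A(a,b)]\right| ~\leq~ \|\calB-\calB'\|_{\mathrm{TV}} ~\leq~ \sqrt{\epsilon},
\]
and the same for Bob's side. Combined with the defining property of the pruning, every $a\in\supp\calA'$ satisfies
\[
\E_{b\sim\calB'}[U^A(a,b)] ~\geq~ u_A^* - (\epsilon+\sqrt{\epsilon}) - \sqrt{\epsilon},
\]
while $\max_{a'} \E_{b\sim\calB'}[U^A(a',b)] \leq u_A^* + \sqrt{\epsilon}$. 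Subtracting, every $a\in\supp\calA'$ is $(\epsilon + 3\sqrt{\epsilon}) = O(\sqrt{\epsilon})$-optimal against $\calB'$. The symmetric calculation for Bob completes the proof that $(\calA',\calB')$ is an $O(\sqrt{\epsilon})$-WSNE.

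There is essentially no obstacle here; the only thing to be careful about is that pruning simultaneously on both sides does not ``interact badly''. This is handled cleanly by the two-step bound above: the expected-utility estimate uses only the TV bound on the opponent's distribution, which is unaffected by the pruning performed on the player's own side, so the analyses for Alice and Bob are independent.
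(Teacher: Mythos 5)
Your proof is correct, and it is essentially the standard argument from \cite{DGP} that the paper cites rather than re-proving: a Markov-style bound showing that the pruned mass $p_A,p_B$ is at most $\epsilon/(\epsilon+\sqrt{\epsilon})\leq\sqrt{\epsilon}$, followed by the observation that $[0,1]$-bounded utilities change by at most the total variation distance when the opponent's distribution is replaced by its renormalized pruning. Both the mass bound and the TV-continuity step are carried out correctly, and you rightly note that the two prunings decouple because each side's analysis uses only the TV bound on the \emph{opponent's} distribution.
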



Given mixed strategy ${\cal A}$, let $\mathbf{x}^{{\cal A}}\in\left[-1,2\right]^{4\times m}$
denote the coordinate-wise expectation over $\mathbf{x}^{\left(\mathbf{a}\right)}$: 
\[
\left[\mathbf{x}^{{\cal A}}\right]_{i}\coloneqq\E_{\mathbf{a}\sim{\cal A}}\left[\left[\mathbf{x}^{\left(\mathbf{a}\right)}\right]_{i}\mid i\in\sigma_{j^{\left(\mathbf{a}\right)}}\right].
\]
If $i\notin\sigma_{j^{\left(\mathbf{a}\right)}}$ for all $\mathbf{a}\in\supp\left({\cal A}\right)$,
define $\left[\mathbf{x}^{{\cal A}}\right]_{i}$ arbitrarily (by \autoref{claim:nash-uniform_i,j} in an approximate Nash equilibrium this
can only happen for a negligible fraction of coordinates). For $r\in\left\{ 1,2,3,4\right\} $
let $\mathbf{x}_{r}^{{\cal A}}\coloneqq\mathbf{x}^{{\cal A}}\mid_{\left\{ r\right\} \times\left[m\right]}$
denote the restriction of $\mathbf{x}^{{\cal A}}$ to $r$-th $m$-tuple
of coordinates. Define $\mathbf{x}^{{\cal B}},\hat{\mathbf{x}}^{{\cal B}}\in\left[-1,2\right]^{4\times m}$
analogously.
\begin{proposition}
\label{prop:nash-A-f(A)}Let $\left({\cal A},{\cal B}\right)$ be
an $\epsilon_{\textsc{Nash}}$-WSNE of the game; then $\left\Vert \hat{\mathbf{x}}^{{\cal B}}-f\left(\hat{\mathbf{x}}^{{\cal B}}\right)\right\Vert _{2}^{2}=O\left(\epsilon_{\textsc{Brouwer}}\right)$.
\end{proposition}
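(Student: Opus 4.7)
The plan is to walk through the sub-utilities in order, using each one to pin down a structural property of the equilibrium, and then combine these via triangle inequality and the Lipschitz continuity of $f$ from \autoref{lem:HPV_2}. First I would use the zero-sum ``hide-and-seek'' subgame on $(j,J)$: at any $\epsilon_{\textsc{Nash}}$-WSNE of an Althofer-style game, each player's marginal over $j$ is $\epsilon_{\textsc{Uniform}}$-close to uniform over $[\ell^{k+1}]$, so every coordinate $i\in[m]$ appears in $\sigma_{j^{(\mathbf{a})}}$ (resp.\ $\tau_{j^{(\mathbf{b})}}$) with probability $(1\pm o(1))/\ell$. This is what makes $\mathbf{x}^{\mathcal{A}},\mathbf{x}^{\mathcal{B}},\hat{\mathbf{x}}^{\mathcal{B}}$ well-defined on all but a negligible set of coordinates and is what I would cite as \textit{Claim \ref{claim:nash-uniform_i,j}} in the write-up.

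Next I would pin down the partial vectors. Bob's $\mathbf{x}$-utility is a concave quadratic in $\mathbf{x}^{(\mathbf{b})}$ that depends only on Alice's mixed $\mathbf{x}^{\mathcal{A}}$ restricted to $\tau_{j^{(\mathbf{b})}}$, so the WSNE condition forces $\mathbf{x}^{(\mathbf{b})}$ (up to $\mbox{\ensuremath{\epsilon_{\textsc{Precision}}}}$ rounding) to be essentially deterministic and $\mathbf{x}^{\mathcal{B}}$ to satisfy $\|\mathbf{x}^{\mathcal{A}}-\mathbf{x}^{\mathcal{B}}\|_2^2=O(\epsilon_{\textsc{Brouwer}})$. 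Symmetrically, Alice's $\mathbf{x}$-utility forces $\|\mathbf{x}^{\mathcal{A}}-\hat{\mathbf{x}}^{\mathcal{B}}\|_2^2=O(\epsilon_{\textsc{Brouwer}})$. Now I would turn to $U_v$: because $\lambda_{v_r}/2>\epsilon_{\textsc{Nash}}+9\lambda_{\hat{\mathbf{x}}}$, every $\mathbf{b}\in\supp(\mathcal{B})$ and every $\mathbf{a}\in\supp(\mathcal{A})$ must choose $v_r^{(\mathbf{b})},v_r^{(\mathbf{a})}\neq\perp$ matching the relevant halves of a codeword within $23\sqrt{h}$ of $\mathbf{x}_r^{\mathcal{A}}$ (resp.\ $\mathbf{x}_r^{\mathcal{B}}$) in the relevant block of coordinates. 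Combined with the previous step and constant relative distance of $C$, this forces the concatenated $v_r^{(\mathbf{a},\mathbf{b})}$ to be a \emph{single} fixed vertex whose encoding is within $25\sqrt{h}$ of $\mathbf{x}_r^{\mathcal{A}}$---so the vertex $v(\mathbf{x}^{\mathcal{A}})$ in the hypothesis of \autoref{lem:doubly-local} is pinned down (up to at most two adjacent choices, which is fine).

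Now I would bring in $U_\alpha$. The analogous constant-gap argument ($\lambda_{v_r}/2 > \epsilon_{\textsc{Nash}}+\lambda_{\alpha_r}$ yields $\lambda_{\alpha_r} > \epsilon_{\textsc{Nash}}$ via the chosen weights) forces Alice, on all but a $\epsilon_{\textsc{Nash}}/\lambda_{\alpha_r}$ fraction of her support, to set $\alpha_r^{(\mathbf{a})}=\alpha(v_r^{(\mathbf{a},\mathbf{b})})$. So when Bob privately executes the EoL protocol $\Pi_v$ using his $\beta$, with high probability he recovers the correct $(P(v),S(v))$, and hence the doubly-local algorithm from \autoref{lem:doubly-local} outputs $f_i(\mathbf{x}^{(\mathbf{a})})$ up to error $O(\mbox{\ensuremath{\epsilon_{\textsc{Precision}}}})$ whp over the $k$-wise independent coordinate samples (using \autoref{thm:chernoff-k-wise}). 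Averaging over $\mathbf{a}\sim\mathcal{A}$ and using that $\mathbf{x}^{(\mathbf{a})}$ is close to $\mathbf{x}^{\mathcal{A}}$ on the sampled coordinates, the expected $\hat{\mathbf{x}}$-utility of any $\hat{\mathbf{x}}^{(\mathbf{b})}$ is, up to $O(\epsilon_{\textsc{Brouwer}})$, $-\|\hat{\mathbf{x}}^{(\mathbf{b})}|_{\tau_{j^{(\mathbf{b})}}}-f(\mathbf{x}^{\mathcal{A}})|_{\tau_{j^{(\mathbf{b})}}}\|_2^2$ plus a constant. Hence Bob's best-response condition forces $\|\hat{\mathbf{x}}^{\mathcal{B}}-f(\mathbf{x}^{\mathcal{A}})\|_2^2=O(\epsilon_{\textsc{Brouwer}})$.

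To finish, I combine: by the triangle inequality and the previous step, $\|\hat{\mathbf{x}}^{\mathcal{B}}-f(\hat{\mathbf{x}}^{\mathcal{B}})\|_2 \le \|\hat{\mathbf{x}}^{\mathcal{B}}-f(\mathbf{x}^{\mathcal{A}})\|_2 + \|f(\mathbf{x}^{\mathcal{A}})-f(\hat{\mathbf{x}}^{\mathcal{B}})\|_2$, and the second term is $O(\|\mathbf{x}^{\mathcal{A}}-\hat{\mathbf{x}}^{\mathcal{B}}\|_2)=O(\sqrt{\epsilon_{\textsc{Brouwer}}})$ by the $O(1)$-Lipschitz bound on $f$. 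Squaring gives the claimed $O(\epsilon_{\textsc{Brouwer}})$ bound. The main obstacle I anticipate is bookkeeping the hybrid losses: carrying the $\epsilon_{\textsc{Nash}}$-WSNE gaps through the $v,\alpha,\mathbf{x}$ utilities simultaneously, ensuring that the measure-zero ``bad'' supports (where $v=\perp$, or Alice's $\alpha$ is wrong, or $i\notin\sigma_j$ for all $\mathbf{a}$) contribute only $O(\epsilon_{\textsc{Nash}}/\lambda_{\cdot})$, and invoking \autoref{thm:chernoff-k-wise} uniformly so that the doubly-local algorithm's error composes cleanly into the $\epsilon_{\textsc{Brouwer}}$ budget. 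The weight inequalities imposed in \autoref{subsec:Utilities} are precisely what make all these deviation terms simultaneously subsumed by $\epsilon_{\textsc{Brouwer}}$.
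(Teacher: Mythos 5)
Your proposal follows essentially the same route as the paper: uniformize the $j$-marginals via the hide-and-seek subgame, use the $\mathbf{x}$- and $\hat{\mathbf{x}}$-utilities to collapse $\mathbf{x}^{\mathcal{A}}$, $\mathbf{x}^{\mathcal{B}}$, $\hat{\mathbf{x}}^{\mathcal{B}}$ onto a common point, use $U_v$ and $U_\alpha$ to pin down a vertex and the lifting-gadget inputs so that the doubly-local algorithm computes $f$ correctly, then deduce $\hat{\mathbf{x}}^{\mathcal{B}}\approx f(\cdot)$ from Bob's $\hat{\mathbf{x}}$-utility, and close with triangle inequality and the Lipschitz bound. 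This is the paper's decomposition into Claims~\ref{claim:nash-uniform_i,j}, \ref{claim:pure}, \ref{claim:nash-A-B}, \ref{claim:nash-vertex}, \ref{claim:nash-alpha}, \ref{claim:nash-f} and the final assembly.

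There is one genuine gap. You assert that every $\mathbf{a}\in\supp(\mathcal{A})$ and $\mathbf{b}\in\supp(\mathcal{B})$ must choose $v_r^{(\mathbf{a})},v_r^{(\mathbf{b})}\neq\perp$ matching a nearby codeword. This cannot be right: on any Brouwer point $\mathbf{x}$ that is far from all $\Enc_C(v)$ (the ``default displacement'' regime of $f$), the best response under $U_{v_r}$ is $v_r=\perp$, which is why $U_{v_r}^A$ assigns payoff $0$ to $\perp$ and $-1$ to a wrong guess, and why Lemma~\ref{lem:doubly-local} explicitly accepts $\perp$ as input and falls back to the default displacement. The paper's Claim~\ref{claim:nash-vertex} handles exactly this via a three-way case analysis on the distance from $\hat{\mathbf{x}}_r^{\overline{\mathcal{B}}}$ to the nearest codeword (within $9\sqrt h$, between $9\sqrt h$ and $24\sqrt h$, beyond $24\sqrt h$); you need that structure to feed the correct discrete inputs into Lemma~\ref{lem:doubly-local}. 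Relatedly, your phrasing that the WSNE forces correct $\alpha_r^{(\mathbf{a})}$ only on ``all but a $\epsilon_{\textsc{Nash}}/\lambda_{\alpha_r}$ fraction of her support'' understates what a WSNE gives you: since $\alpha^{(\mathbf{a})}$ affects only the $\alpha$-utility, it must be optimal on \emph{every} action in Alice's support (given $\lambda_{\alpha_r}>\epsilon_{\textsc{Nash}}$, which holds because the $\lambda$'s are fixed constants chosen before $\epsilon_{\textsc{Nash}}$, not because of the displayed inequality $\lambda_{v_r}/2>\epsilon_{\textsc{Nash}}+\lambda_{\alpha_r}$). The cleaner ``for every action in the support'' conclusion is what the paper uses and is what you actually want for Claim~\ref{claim:nash-f}.
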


The rest of this section completes the proof of \autoref{prop:nash-A-f(A)}.
As described in the beginning of this section, \autoref{prop:nash-A-f(A)},
together with \autoref{lem:HPV_2}, \autoref{thm:eol},
and \autoref{lem:ANE-WSNE}, imply our main result (\autoref{thm:main}).

\subsubsection*{Enforcing near-uniform distribution over $i,j$}
\begin{lemma}
[Lemma 3 in the full version of \cite{Daskalakis-Papadimitriou-PTAS}]\label{lem:DP-lemma}
Let $\left\{ a_{i}\right\} _{i=1}^{n}$ be real numbers satisfying
the following properties for some $\theta>0$: (1) $a_{1}\geq a_{2}\geq\dots\geq a_{n}$;
(2) $\sum a_{i}=0$; (3) $\sum_{i=1}^{n/2}a_{i}\leq\theta$. Then
$\sum_{i=1}^{n}\left|a_{i}\right|\leq4\theta$.
\end{lemma}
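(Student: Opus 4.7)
My plan is to reduce the claim to a single monotonicity inequality by decomposing each $a_i$ into its positive and negative parts. Set $b_i \coloneqq \max(a_i, 0)$ and $c_i \coloneqq \max(-a_i, 0)$, so that $a_i = b_i - c_i$, $|a_i| = b_i + c_i$, and exactly one of $b_i, c_i$ is nonzero. By hypothesis (2), $P \coloneqq \sum_i b_i = \sum_i c_i$, and since $\sum_i |a_i| = 2P$, it suffices to show $P \leq 2\theta$. Combined with hypothesis (3), this reduces further to proving the single inequality $\sum_{i=1}^{n/2} a_i \geq P/2$.

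For the reduced claim I will do a two-case split on $k$, the number of non-negative $a_i$'s (by (1) these are exactly $a_1, \ldots, a_k$). If $k \geq n/2$, then $c_i = 0$ throughout the first half, so $\sum_{i=1}^{n/2} a_i = \sum_{i=1}^{n/2} b_i$; since the $b_i$ for $i \leq k$ are decreasing, non-negative, and sum to $P$, the top $n/2$ of them contribute at least $(n/2)/k \cdot P \geq P/2$ (using $k \leq n$). If instead $k < n/2$, then all of the $b$-mass is contained in the first half, giving $\sum_{i=1}^{n/2} b_i = P$, and I need only bound $\sum_{i=k+1}^{n/2} c_i$. Because the sequence $c_{k+1} \leq c_{k+2} \leq \ldots \leq c_n$ is increasing and sums to $P$, its first $n/2 - k$ terms are the smallest and therefore contribute at most $\frac{n/2 - k}{n - k} \cdot P \leq P/2$ (using $k \geq 0$). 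Either case yields $\sum_{i=1}^{n/2} a_i \geq P/2$, as desired.

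The argument is short and I do not anticipate a real obstacle; the one sharp observation is that a naive ``decreasing-ness'' bound on $b$ combined with an ``increasing-ness'' bound on $c$ only yields $\sum_{i=1}^{n/2}(b_i - c_i) \geq 0$, because the two bounds point in opposite directions. One therefore has to exploit that the supports of $b$ and $c$ are disjoint, which is precisely what the case split on $k$ accomplishes. (The constant $4$ is essentially tight, as witnessed by $a_1 = n-1$, $a_2 = \cdots = a_n = -1$, for which $\theta = n/2$ and $\sum_i |a_i| = 2(n-1) = (4 - 4/n)\,\theta$.)
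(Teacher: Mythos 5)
Your proof is correct, and since the paper only cites this lemma from Daskalakis--Papadimitriou without reproducing a proof, there is no in-paper argument to compare against. Your reduction to $\sum_{i=1}^{n/2} a_i \geq P/2$ is sound (given (3), that yields $P \leq 2\theta$ and hence $\sum_i |a_i| = 2P \leq 4\theta$), and the two-case split on $k$, the number of nonnegative entries, is handled correctly: in the case $k \geq n/2$ the averaging bound $\sum_{i=1}^{n/2} b_i \geq \frac{n/2}{k} P \geq P/2$ is valid because the $b_i$ are sorted decreasing, and in the case $k < n/2$ the bound $\sum_{i=k+1}^{n/2} c_i \leq \frac{n/2-k}{n-k} P \leq P/2$ is valid because the $c_i$ are sorted increasing on their support $\{k+1,\dots,n\}$. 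Your remark that a naive combination of the two monotonicity bounds only gives $\sum_{i\le n/2} a_i \geq 0$, and that one must exploit the disjoint supports of $b$ and $c$, is exactly the right observation, and your tightness example $a_1 = n-1$, $a_2 = \cdots = a_n = -1$ confirms the constant $4$ cannot be improved.
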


\begin{claim}
\label{claim:nash-uniform_i,j}There exists a constant $\epsilon_{\textsc{Uniform}}=O\left(\frac{\epsilon_{\textsc{Nash}}+\lambda_{\mathbf{x}}}{\lambda_{j}}\right)$,
such that for $\mathbf{a}\sim{\cal A},\mathbf{b}\sim{\cal B}$, the
marginal distributions on $j^{\left(\mathbf{a}\right)},j^{\left(\mathbf{b}\right)}$
are $\epsilon_{\textsc{Uniform}}$-close to uniform in total variation
distance.
\end{claim}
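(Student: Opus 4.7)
The plan is to exploit the hide-and-seek sub-game between $j^{(\mathbf{a})}$ and $J^{(\mathbf{b})}$ (and, symmetrically, between $j^{(\mathbf{b})}$ and $J^{(\mathbf{a})}$), whose value is $0$ with uniqueness properties guaranteeing that any deviation from the uniform marginal is punished. Write $p_j \coloneqq \Pr_{\mathbf{a}\sim\calA}[j^{(\mathbf{a})}=j]$ and $a_j \coloneqq p_j - 1/\ell^{k+1}$, so $\sum_j a_j = 0$. Sorting the $a_j$ in decreasing order and letting $T\subset[\ell^{k+1}]$ be the top half, set $\theta \coloneqq \sum_{j\in T} a_j = \sum_{j\in T} p_j - 1/2 \geq 0$. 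The goal is to show $\theta = O((\epsilon_{\textsc{Nash}}+\lambda_{\mathbf{x}})/\lambda_j)$; then \autoref{lem:DP-lemma} gives $\sum_j|a_j|\leq 4\theta$, i.e., total variation distance $\leq 2\theta$, as required.

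To control $\theta$, I first argue from Bob's side. Observe that $J\mapsto U_J^B(J;j^{(\mathbf{a})})$ is the only sub-utility of Bob that depends on $J^{(\mathbf{b})}$, so any deviation in the $J$-coordinate (keeping the other coordinates of $\mathbf{b}$ fixed) only changes the $\lambda_J U_J^B$ term. The (mixed) best response in this coordinate is $J^* = T$, which yields $\E_{\mathbf{a}\sim\calA}[U_J^B(J^*;j^{(\mathbf{a})})] = 2\theta$. Since WSNE implies approximate NE (and in particular approximate optimality of each mixed coordinate-deviation of Bob), I obtain
\[
\E_{\mathbf{a},\mathbf{b}}\bigl[U_J^B\bigr] ~\geq~ 2\theta - \epsilon_{\textsc{Nash}}/\lambda_J,
\qquad\text{equivalently,}\qquad
\E_{\mathbf{a},\mathbf{b}}\bigl[U_j^A\bigr] ~\leq~ -2\theta + \epsilon_{\textsc{Nash}}/\lambda_J.
\]

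Next I construct Alice's deviation that forces the bound on $\theta$. Let $\calA'$ be obtained from $\calA$ by (independently) replacing $j^{(\mathbf{a})}$ by a uniform sample from $[\ell^{k+1}]$ and re-sampling the partial vector $\mathbf{x}^{(\mathbf{a})}$ to an arbitrary default value on the (new) coordinates $[4]\times\sigma_{j^{(\mathbf{a})}}$, keeping $v^{(\mathbf{a})},J^{(\mathbf{a})},\alpha^{(\mathbf{a})}$ untouched. Inspecting the utility definitions, the only sub-utilities of Alice that depend on $j^{(\mathbf{a})}$ or $\mathbf{x}^{(\mathbf{a})}$ are $U_j^A$ and $U_{\mathbf{x}}^A$. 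Under $\calA'$ the marginal on $j^{(\mathbf{a})}$ is uniform, so $\Pr_J[j^{(\mathbf{a})}\in J]=1/2$ for every $J$, hence $\E[U_j^A(\calA',\calB)] = 0$. Meanwhile $U_{\mathbf{x}}^A\in[-O(1),0]$ is uniformly bounded, so $\lambda_{\mathbf{x}}\bigl(\E[U_{\mathbf{x}}^A(\calA',\calB)]-\E[U_{\mathbf{x}}^A(\calA,\calB)]\bigr)\geq -O(\lambda_{\mathbf{x}})$. Applying the $\epsilon_{\textsc{Nash}}$-Nash condition for Alice against this deviation and chaining with the bound from Bob's side (using $\lambda_j=\lambda_J$) gives
\[
2\theta\,\lambda_j ~\leq~ O(\epsilon_{\textsc{Nash}}) + O(\lambda_{\mathbf{x}}),
\qquad\text{i.e.,}\qquad
\theta ~\leq~ O\!\left(\tfrac{\epsilon_{\textsc{Nash}}+\lambda_{\mathbf{x}}}{\lambda_j}\right).
\]
The same argument, with the roles of Alice and Bob (and $j,J$) swapped, bounds the non-uniformity of $j^{(\mathbf{b})}$.

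\textbf{Main obstacle.} The one subtlety is that $j^{(\mathbf{a})}$ cannot be deviated in isolation, because the very dimension of $\mathbf{x}^{(\mathbf{a})}$ depends on $\sigma_{j^{(\mathbf{a})}}$; this couples the $j$-deviation to an inevitable change in $U_{\mathbf{x}}^A$. It is precisely this coupling that contributes the $\lambda_{\mathbf{x}}/\lambda_j$ slack in $\epsilon_{\textsc{Uniform}}$, which is why the ordering $\lambda_{\mathbf{x}}\ll\lambda_j$ built into \autoref{subsec:Utilities} is essential for the bound to be nontrivial.
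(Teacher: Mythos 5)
Your proof is correct and takes essentially the same approach as the paper: both rely on the zero-sum Althofer gadget, the isolation of the $J$-coordinate in Bob's utility, Alice's deviation to uniform $j$ (which she can realize while incurring at most a $O(\lambda_{\mathbf{x}})$ penalty from the re-sampled $\mathbf{x}^{(\mathbf{a})}$), and \autoref{lem:DP-lemma} to pass from the top-half advantage $\theta$ to total variation distance. The paper phrases it as a proof by contradiction (assuming TV distance is large and deriving an $\epsilon_{\textsc{Nash}}$-improving deviation), whereas you run the argument directly through $\theta$; this is a purely cosmetic difference.
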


\begin{proof}
In her $j$-utility, Alice can guarantee an expected payoff of $0$
by randomizing uniformly over her choice of $j^{\left(\mathbf{a}\right)}$.
By \autoref{lem:DP-lemma}, if Alice's marginal distribution over
the choice of $j^{\left(\mathbf{a}\right)}$ is $\left(8\epsilon_{\textsc{Nash}}+36\lambda_{\mathbf{x}}\right)/\lambda_{J}$-far
from uniform (in total variation distance), then Bob can guess that
$j^{\left(\mathbf{a}\right)}$ is in some subset $J^{\left(\mathbf{b}\right)}\in\binom{\left[\ell^{k+1}\right]}{\ell^{k+1}/2}$
with advantage (over guessing at random) of at least $\left(2\epsilon_{\textsc{Nash}}+9\lambda_{\mathbf{x}}\right)/\lambda_{J}$.
This means that there exists a choice of $J^{\left(\mathbf{b}\right)}$
such that Bob can guarantee himself 
\[
\E_{\mathbf{a}}\left[U_{J}^{B}\left(J^{\left(\mathbf{b}\right)};j^{\left(\mathbf{a}\right)}\right)\right]\geq\left(2\epsilon_{\textsc{Nash}}+9\lambda_{\mathbf{x}}\right)/\lambda_{J}.
\]

Since $J^{\left(\mathbf{b}\right)}$ does not affect any other portion
of Bob's strategy, in any $\epsilon_{\textsc{Nash}}$-WSNE, and for
every $\mathbf{b}$ in Bob's support, 
\[
\E_{\mathbf{a}}\left[U_{J}^{B}\left(J^{\left(\mathbf{b}\right)};j^{\left(\mathbf{a}\right)}\right)\right]\geq\left(\epsilon_{\textsc{Nash}}+9\lambda_{\mathbf{x}}\right)/\lambda_{J}.
\]
Therefore since Althofer's gadget is a zero-sum game, we have 
\[
\E_{\mathbf{a}}\left[U_{j}^{A}\left(j^{\left(\mathbf{a}\right)};J^{\left(\mathbf{b}\right)}\right)\right]\leq-\left(\epsilon_{\textsc{Nash}}+9\lambda_{\mathbf{x}}\right)/\lambda_{J}=-\left(\epsilon_{\textsc{Nash}}+9\lambda_{\mathbf{x}}\right)/\lambda_{j}.
\]
I.e., Alice has a deviation (to uniform) that improves her $j$-utility
by $\left(\epsilon_{\textsc{Nash}}+9\lambda_{\mathbf{x}}\right)/\lambda_{j}$.
$j^{\left(\mathbf{a}\right)}$ also affects Alice's $\mathbf{x}$-utility,
but its contribution to the total utility is at most $9\lambda_{\mathbf{x}}$.
Therefore, if $j^{\left(\mathbf{a}\right)}$ is $\left(8\epsilon_{\textsc{Nash}}+4\lambda_{\mathbf{x}}\right)$-far
from uniform, then Alice has an $\epsilon_{\textsc{Nash}}$-improving
deviation, contradicting the premise. Analogous arguments hold for
$j^{\left(\mathbf{b}\right)}$.
\end{proof}
\begin{definition}
Let $\overline{{\cal A}}$ denote a mixed strategy where $\overline{{\cal A}}$
induces a uniform marginal distribution over choice of $j^{\left(\mathbf{a}\right)}$,
and $\overline{{\cal A}}$ is $\epsilon_{\textsc{Uniform}}$-close
to ${\cal A}$ (among all such distributions that are close to ${\cal A}$,
fix one arbitrarily). Define $\overline{{\cal B}}$ analogously.
\end{definition}

The following is a corollary of \autoref{claim:nash-uniform_i,j}
\begin{claim}
\label{claim:=00005Cbar=00007BA=00007D-A}$\left\Vert \mathbf{x}^{\overline{{\cal A}}}-\mathbf{x}^{{\cal A}}\right\Vert _{2}^{2},\left\Vert \mathbf{x}^{\overline{{\cal B}}}-\mathbf{x}^{{\cal B}}\right\Vert _{2}^{2},\left\Vert \hat{\mathbf{x}}^{\overline{{\cal B}}}-\hat{\mathbf{x}}^{{\cal B}}\right\Vert _{2}^{2}\leq9\epsilon_{\textsc{Uniform}}.$
\end{claim}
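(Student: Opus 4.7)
The three bounds are analogous, so the plan is to focus on $\|\mathbf{x}^{\overline{\cal A}} - \mathbf{x}^{\cal A}\|_2^2$. First I would unpack each coordinate $[\mathbf{x}^{\cal A}]_{r,i}$ as a ratio of unconditional expectations $a_{r,i}^{\cal A}/b_i^{\cal A}$, where $a_{r,i}^{\cal A} \coloneqq \E_{\mathbf{a}\sim{\cal A}}[[\mathbf{x}^{(\mathbf{a})}]_{r,i}\mathbf{1}[i \in \sigma_{j^{(\mathbf{a})}}]]$ and $b_i^{\cal A} \coloneqq \Pr_{\cal A}[i \in \sigma_{j^{(\mathbf{a})}}]$, and then exploit the total-variation bound $\|{\cal A} - \overline{{\cal A}}\|_{\mathrm{TV}} \leq \epsilon_{\textsc{Uniform}}$ that is built into the definition of $\overline{\cal A}$. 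Since $\overline{\cal A}$ has uniform marginal on $j^{(\mathbf{a})}$ and each coordinate $i \in [m]$ lies in exactly a $1/\ell$ fraction of the subsets $\{\sigma_j\}_j$ by construction of the $\sigma_j$'s, one has $b_i^{\overline{\cal A}} = 1/\ell$ exactly, and the TV bound applied to the marginal yields $|b_i^{\cal A} - 1/\ell| \leq \epsilon_{\textsc{Uniform}}$ pointwise.

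The key quantitative inputs will be two \emph{aggregate} estimates obtained from a coupling of ${\cal A}$ and $\overline{{\cal A}}$ that agrees with probability at least $1 - \epsilon_{\textsc{Uniform}}$. Crucially, any single action $\mathbf{a}$ produces only $4m/\ell$ nonzero terms $[\mathbf{x}^{(\mathbf{a})}]_{r,i}\mathbf{1}[i \in \sigma_{j^{(\mathbf{a})}}]$ (one per coordinate in $[4]\times\sigma_{j^{(\mathbf{a})}}$), each of magnitude at most $2$, so the coupling yields $\sum_{r,i}|a_{r,i}^{\cal A} - a_{r,i}^{\overline{\cal A}}| = O(m\epsilon_{\textsc{Uniform}}/\ell)$; the same reasoning applied to the indicators will give $\sum_i|b_i^{\cal A} - 1/\ell| = O(m\epsilon_{\textsc{Uniform}}/\ell)$. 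Next I would split $[m]$ into \emph{good} coordinates $G = \{i : b_i^{\cal A} \geq 1/(2\ell)\}$ and \emph{bad} coordinates $\bar G = [m]\setminus G$.

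For $i \in \bar G$ the inequality $|b_i^{\cal A} - 1/\ell| \geq 1/(2\ell)$ combined with the second aggregate forces $|\bar G| = O(m\epsilon_{\textsc{Uniform}})$, and since both entries lie in $[-1,2]$ the trivial pointwise bound $([\mathbf{x}^{\cal A}]_{r,i} - [\mathbf{x}^{\overline{\cal A}}]_{r,i})^2 \leq 9$ contributes only $O(\epsilon_{\textsc{Uniform}})$ to the normalized $2$-norm squared. For $i \in G$, the identity $a/b - a'/b' = ((a-a')b' + a'(b-b'))/(bb')$ together with $b_i^{\cal A}, b_i^{\overline{\cal A}} \geq 1/(2\ell)$ and $|a_{r,i}^{\overline{\cal A}}| \leq 2/\ell$ will give $|[\mathbf{x}^{\cal A}]_{r,i} - [\mathbf{x}^{\overline{\cal A}}]_{r,i}| = O(\ell|a_{r,i}^{\cal A} - a_{r,i}^{\overline{\cal A}}| + \ell|b_i^{\cal A} - 1/\ell|)$; summing over $(r,i)\in[4]\times G$ and converting to squares via $(\cdot)^2 \leq 3|\cdot|$ contributes another $O(\epsilon_{\textsc{Uniform}})$. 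The proofs for $\mathbf{x}^{\cal B}$ and $\hat{\mathbf{x}}^{\cal B}$ are entirely symmetric. The main obstacle will be precisely the potential instability of the ratios when $b_i^{\cal A}$ is small --- since $\epsilon_{\textsc{Uniform}}$ may exceed $1/\ell$ (as $\ell = \sqrt{\log n} \to \infty$), some $b_i^{\cal A}$ may vanish entirely, leaving $[\mathbf{x}^{\cal A}]_{r,i}$ as an arbitrary default value --- but this is defused by the good/bad split: the bad coordinates are few, and the trivial bound suffices on them.
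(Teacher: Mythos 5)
Your proposal is correct, and it is more careful than what the paper actually supplies: the paper offers no proof of this claim (it is simply asserted as a corollary of the uniformity claim, with the constant $9$ most plausibly arising from the heuristic ``each coordinate sits in $[-1,2]$ so a squared discrepancy is at most $3^2$, and only an $\epsilon_{\textsc{Uniform}}$-fraction of mass differs''). That heuristic does not literally make sense because $[\mathbf{x}^{{\cal A}}]_i$ is a \emph{conditional} expectation, with denominator $\Pr_{{\cal A}}[i\in\sigma_{j^{(\mathbf{a})}}]$ that can be far smaller than the nominal $1/\ell$ --- indeed it can be zero, as the paper itself concedes when it allows arbitrary default values for uncovered coordinates. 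Your ratio decomposition $a^{{\cal A}}_{r,i}/b^{{\cal A}}_i$ and the good/bad split on whether $b^{{\cal A}}_i\geq 1/(2\ell)$ is exactly the mechanism that makes the argument go through: the aggregate coupling bounds $\sum_{r,i}|a^{{\cal A}}_{r,i}-a^{\overline{{\cal A}}}_{r,i}|=O(m\epsilon_{\textsc{Uniform}}/\ell)$ and $\sum_i|b^{{\cal A}}_i-1/\ell|=O(m\epsilon_{\textsc{Uniform}}/\ell)$ both hold because a single action touches only $4m/\ell$ coordinates, the bad set has size $O(m\epsilon_{\textsc{Uniform}})$ and is absorbed by the trivial pointwise bound of $9$ after the $1/(4m)$ normalization, and on the good set the denominator is controlled so the ratio identity plus $|a^{\overline{{\cal A}}}_{r,i}|\leq 2/\ell$ gives the linear bound you state.

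The one caveat worth flagging: your argument yields $\lVert\mathbf{x}^{\overline{{\cal A}}}-\mathbf{x}^{{\cal A}}\rVert_2^2=O(\epsilon_{\textsc{Uniform}})$ with some constant larger than $9$ (tracking the constants above gives something closer to $36\epsilon_{\textsc{Uniform}}$ from the bad set alone, plus the good-set contribution). So you do not recover the claimed constant $9$ exactly. This is immaterial for the paper --- every downstream use (\autoref{claim:pure}, \autoref{claim:nash-A-B}, \autoref{claim:nash-vertex}, \autoref{prop:nash-A-f(A)}) absorbs the bound into an $O(\epsilon_{\textsc{Uniform}})$ or $O(\epsilon_{\textsc{Brouwer}})$ --- but you should state the conclusion as $O(\epsilon_{\textsc{Uniform}})$ rather than $9\epsilon_{\textsc{Uniform}}$ to avoid overclaiming a constant your argument does not deliver.
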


\subsubsection*{Rounding to a vertex}
\begin{claim}
\label{claim:pure}There exists a constant 
\[
\epsilon_{\textsc{Brouwer}}=\epsilon_{\textsc{Precision}}^{2}+\epsilon_{\textsc{Nash}}/\lambda_{\mathbf{x}}+O\left(\epsilon_{\textsc{Uniform}}\right)=\epsilon_{\textsc{Precision}}^{2}+O\left(\epsilon_{\textsc{Nash}}/\lambda_{\mathbf{x}}+\lambda_{\mathbf{x}}/\lambda_{j}\right),
\]
such that for every $\mathbf{a}\in\supp\left({\cal A}\right)$, $\left\Vert \mathbf{x}^{\left(\mathbf{a}\right)}-\mathbf{\hat{x}}^{\overline{{\cal B}}}\mid_{\sigma_{j^{\left(\mathbf{a}\right)}}}\right\Vert _{2}^{2}\leq\epsilon_{\textsc{Brouwer}}$,
and for every $\mathbf{b}\in\supp\left({\cal B}\right)$, $\left\Vert \mathbf{x}^{\left(\mathbf{b}\right)}-\mathbf{x}^{\overline{{\cal A}}}\mid_{\tau_{j^{\left(\mathbf{b}\right)}}}\right\Vert _{2}^{2}\leq\epsilon_{\textsc{Brouwer}}$.
\end{claim}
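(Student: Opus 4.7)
The plan is to run the standard McLennan--Tourky imitation argument coordinate-by-coordinate, using the two observations (i)~that $U_{\mathbf{x}}^{A}$ is the only summand in Alice's utility that involves her choice of $\mathbf{x}^{(\mathbf{a})}$, and (ii)~that, by \autoref{claim:nash-uniform_i,j}, the marginal of $j^{(\mathbf{b})}$ under $\mathcal{B}$ is $\epsilon_{\textsc{Uniform}}$-close to uniform, so each coordinate $i\in\sigma_{j^{(\mathbf{a})}}$ is selected into $\tau_{j^{(\mathbf{b})}}$ with probability $1/\ell\pm\epsilon_{\textsc{Uniform}}$.

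\emph{Step 1 (imitation decomposition).} Fix any $\mathbf{a}\in\supp(\mathcal{A})$ and write $\sigma=\sigma_{j^{(\mathbf{a})}}$. Using normalized norms and the fact that $|\sigma\cap\tau_{j^{(\mathbf{b})}}|=m/\ell^{2}$ is constant in $\mathbf{b}$, expand
\[
\E_{\mathbf{b}\sim\mathcal{B}}\bigl[U_{\mathbf{x}}^{A}\bigr]
~=~-\frac{\ell}{m}\sum_{i\in\sigma}\Pr_{\mathbf{b}}[i\in\tau_{j^{(\mathbf{b})}}]\cdot
\E_{\mathbf{b}}\bigl[(\hat{x}_{i}^{(\mathbf{b})}-x_{i}^{(\mathbf{a})})^{2}\,\big|\,i\in\tau_{j^{(\mathbf{b})}}\bigr].
\]
The inner expectation decomposes, via the usual $\E[(Y-c)^{2}]=\Var(Y)+(\E Y-c)^{2}$ identity, into a term independent of Alice's action plus $(\hat{\mathbf{x}}_{i}^{\mathcal{B}}-x_{i}^{(\mathbf{a})})^{2}$, where $\hat{\mathbf{x}}_{i}^{\mathcal{B}}\coloneqq\E_{\mathbf{b}}[\hat{x}_{i}^{(\mathbf{b})}\mid i\in\tau_{j^{(\mathbf{b})}}]$ matches the definition of $\hat{\mathbf{x}}^{\mathcal{B}}$.

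\emph{Step 2 (best response and profitable deviation).} Consider Alice's deviation in which she keeps everything but replaces $\mathbf{x}^{(\mathbf{a})}|_{\sigma}$ by a discretized copy of $\hat{\mathbf{x}}^{\mathcal{B}}|_{\sigma}$, which is achievable up to coordinate-wise error $\epsilon_{\textsc{Precision}}$. The change in expected utility is exactly
\[
\lambda_{\mathbf{x}}\cdot\frac{\ell}{m}\sum_{i\in\sigma}\Pr_{\mathbf{b}}[i\in\tau_{j^{(\mathbf{b})}}]\cdot\Bigl((x_{i}^{(\mathbf{a})}-\hat{\mathbf{x}}_{i}^{\mathcal{B}})^{2}-O(\epsilon_{\textsc{Precision}}^{2})\Bigr).
\]
Plugging in $\Pr_{\mathbf{b}}[i\in\tau_{j^{(\mathbf{b})}}]=1/\ell\pm\epsilon_{\textsc{Uniform}}$ (from \autoref{claim:nash-uniform_i,j}) and recalling $|\sigma|=m/\ell$, this improvement is at least
$\lambda_{\mathbf{x}}\bigl(\|\mathbf{x}^{(\mathbf{a})}-\hat{\mathbf{x}}^{\mathcal{B}}|_{\sigma}\|_{2}^{2}-\epsilon_{\textsc{Precision}}^{2}-O(\epsilon_{\textsc{Uniform}})\bigr)$. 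The WSNE hypothesis forces this improvement to be $\leq\epsilon_{\textsc{Nash}}$, giving
\[
\|\mathbf{x}^{(\mathbf{a})}-\hat{\mathbf{x}}^{\mathcal{B}}|_{\sigma}\|_{2}^{2}
~\leq~\epsilon_{\textsc{Precision}}^{2}+\epsilon_{\textsc{Nash}}/\lambda_{\mathbf{x}}+O(\epsilon_{\textsc{Uniform}}).
\]

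\emph{Step 3 (swap $\mathcal{B}$ for $\overline{\mathcal{B}}$, and symmetric claim for Bob).} The triangle inequality together with \autoref{claim:=00005Cbar=00007BA=00007D-A} (giving $\|\hat{\mathbf{x}}^{\overline{\mathcal{B}}}-\hat{\mathbf{x}}^{\mathcal{B}}\|_{2}^{2}\leq 9\epsilon_{\textsc{Uniform}}$) replaces $\hat{\mathbf{x}}^{\mathcal{B}}$ by $\hat{\mathbf{x}}^{\overline{\mathcal{B}}}$ at the cost of an additional additive $O(\epsilon_{\textsc{Uniform}})$. Combining constants as in the claim statement yields the first inequality with $\epsilon_{\textsc{Brouwer}}=\epsilon_{\textsc{Precision}}^{2}+O(\epsilon_{\textsc{Nash}}/\lambda_{\mathbf{x}}+\lambda_{\mathbf{x}}/\lambda_{j})$, using $\epsilon_{\textsc{Uniform}}=O((\epsilon_{\textsc{Nash}}+\lambda_{\mathbf{x}})/\lambda_{j})$. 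The symmetric bound for $\mathbf{b}\in\supp(\mathcal{B})$ is proven by repeating Steps~1--3 verbatim with roles of Alice/Bob swapped, using $U_{\mathbf{x}}^{B}$ (which is literally the imitation utility with target $\mathbf{x}^{\mathcal{A}}$) in place of $U_{\mathbf{x}}^{A}$; crucially, $\mathbf{x}^{(\mathbf{b})}$ appears only in $U_{\mathbf{x}}^{B}$, so no other summand needs to be tracked.

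\emph{Main obstacle.} The technical nuisance is book-keeping the factors $1/\ell$ and $|\sigma|=m/\ell$ from the two levels of subsampling, and propagating the near-uniformity error from \autoref{claim:nash-uniform_i,j} into both the probability $\Pr[i\in\tau_{j^{(\mathbf{b})}}]$ (which affects the \emph{slope} of the improvement) and the target value $\hat{\mathbf{x}}^{\mathcal{B}}$ versus $\hat{\mathbf{x}}^{\overline{\mathcal{B}}}$. All of these errors must end up subsumed by the stated $\epsilon_{\textsc{Brouwer}}$, which is exactly why the hierarchy $\epsilon_{\textsc{Nash}}\ll\epsilon_{\textsc{Precision}}\ll\epsilon_{\textsc{Uniform}}\ll\epsilon_{\textsc{Brouwer}}$ was fixed in \autoref{sec:game-prelim}.
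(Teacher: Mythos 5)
Your Steps~1--2 follow the paper's imitation-game decomposition in spirit, but you decompose $U_{\mathbf{x}}^{A}$ \emph{against $\mathcal{B}$}, whereas the paper decomposes \emph{against the uniformized $\overline{\mathcal{B}}$}. This is not a cosmetic choice: it determines whether the bias term you obtain is $\hat{\mathbf{x}}^{\mathcal{B}}|_\sigma$ or $\hat{\mathbf{x}}^{\overline{\mathcal{B}}}|_\sigma$, and the claim is stated in terms of the latter. (Incidentally, your normalizing prefactor should be $\ell^{2}/m$ rather than $\ell/m$, since $|\sigma\cap\tau_{j^{(\mathbf{b})}}|=m/\ell^{2}$; this is a bookkeeping slip, not a fatal one.)

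The genuine gap is in Step~3. You propose to convert $\hat{\mathbf{x}}^{\mathcal{B}}$ into $\hat{\mathbf{x}}^{\overline{\mathcal{B}}}$ via the triangle inequality and \autoref{claim:=00005Cbar=00007BA=00007D-A}, at an additional cost of $O(\epsilon_{\textsc{Uniform}})$. But the bound $\|\hat{\mathbf{x}}^{\overline{\mathcal{B}}}-\hat{\mathbf{x}}^{\mathcal{B}}\|_{2}^{2}\leq 9\epsilon_{\textsc{Uniform}}$ in \autoref{claim:=00005Cbar=00007BA=00007D-A} is a \emph{normalized average over all $m$ coordinates}, while the quantity you need to control is $\|(\hat{\mathbf{x}}^{\overline{\mathcal{B}}}-\hat{\mathbf{x}}^{\mathcal{B}})|_{\sigma}\|_{2}^{2}$, a normalized average over only $|\sigma|=m/\ell$ coordinates. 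Restricting a normalized $\ell_2$ bound to a $1/\ell$-fraction of coordinates can inflate it by a factor of $\ell$, which is superconstant ($\ell=\sqrt{\log n}$), so the cost is $O(\ell\epsilon_{\textsc{Uniform}})$, not $O(\epsilon_{\textsc{Uniform}})$, and this is not subsumed by $\epsilon_{\textsc{Brouwer}}$. Nothing in the surrounding claims rules out the discrepancy between $\hat{\mathbf{x}}^{\mathcal{B}}$ and $\hat{\mathbf{x}}^{\overline{\mathcal{B}}}$ concentrating precisely on $\sigma_{j^{(\mathbf{a})}}$.

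The paper's proof circumvents this by performing the variance--bias decomposition against $\overline{\mathcal{B}}$ from the start, so the restricted quantity $\|\mathbf{x}^{(\mathbf{a})}-\hat{\mathbf{x}}^{\overline{\mathcal{B}}}|_{\sigma}\|_{2}^{2}$ appears directly in \eqref{eq:without-variance}. All subsequent swaps between $\mathcal{B}$ and $\overline{\mathcal{B}}$ are done at the level of the expected utility $\E[U_{\mathbf{x}}^{A}]$, which is an $O(1)$-bounded function; there a total-variation bound of $\epsilon_{\textsc{Uniform}}$ really does cost only $O(\epsilon_{\textsc{Uniform}})$. To repair your argument, replace your Step~1 expansion with the decomposition against $\overline{\mathcal{B}}$ and move the $\mathcal{B}/\overline{\mathcal{B}}$ comparison to the utility comparison in Step~2 (as the paper does), rather than attempting the substitution on the vectors in Step~3.
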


\begin{proof}
Fix an action $\mathbf{a}$ for Alice, and fix some $i\in\sigma_{j^{\left(\mathbf{a}\right)}}$.
Then Alice's expected $\mathbf{x}$-utility when playing action $\mathbf{a}$
against $\overline{{\cal B}}$ decomposes as:

\begin{align*}
\E_{\mathbf{b}\sim\overline{{\cal B}}}\left[U_{\mathbf{x}}^{A}\left(\mathbf{a};\mathbf{b}\right)\right] & =-\E_{\mathbf{b}\sim\overline{{\cal B}}}\left[\left\Vert \hat{\mathbf{x}}^{\left(\mathbf{b}\right)}\mid_{\sigma_{j^{\left(\mathbf{a}\right)}}\cap\tau_{j^{\left(\mathbf{b}\right)}}}-\mathbf{x}^{\left(\mathbf{a}\right)}\mid_{\sigma_{j^{\left(\mathbf{a}\right)}}\cap\tau_{j^{\left(\mathbf{b}\right)}}}\right\Vert _{2}^{2}\right].
\end{align*}
Let $\overline{{\cal B}}\left(i\right)$ denote the restriction of
$\overline{{\cal B}}$ to actions $\mathbf{b}$ such that $i\in\tau_{j}\left(\mathbf{b}\right)$.
We now have that, 
\begin{align}
\E_{\mathbf{b}\sim\overline{{\cal B}}}\left[U_{\mathbf{x}}^{A}\left(\mathbf{a};\mathbf{b}\right)\right] & =-\E_{\mathbf{b}\sim\overline{{\cal B}}}\left[\E_{i\in\sigma_{j^{\left(\mathbf{a}\right)}}\cap\tau_{j^{\left(\mathbf{b}\right)}}}\left[\left(\hat{\mathbf{x}}^{\left(\mathbf{b}\right)}\mid_{i}-\mathbf{x}^{\left(\mathbf{a}\right)}\mid_{i}\right)^{2}\right]\right]\label{eq:=00005Csigma=00005Ccap=00005Ctau}\\
 & =-\E_{i\in\sigma_{j^{\left(\mathbf{a}\right)}}}\left[\E_{\mathbf{b}\sim\overline{{\cal B}}\left(i\right)}\left[\left(\hat{\mathbf{x}}^{\left(\mathbf{b}\right)}\mid_{i}-\mathbf{x}^{\left(\mathbf{a}\right)}\mid_{i}\right)^{2}\right]\right]\label{eq:sigmaX=00005Ctau}\\
 & =-\E_{i\in\sigma_{j^{\left(\mathbf{a}\right)}}}\left[\underbrace{\Var_{\mathbf{b}\sim\overline{{\cal B}}\left(i\right)}\left(\hat{\mathbf{x}}^{\left(\mathbf{b}\right)}\mid_{i}\right)}_{\coloneqq V}+\left(\hat{\mathbf{x}}^{\overline{{\cal B}}}\mid_{i}-\mathbf{x}^{\left(\mathbf{a}\right)}\mid_{i}\right)^{2}\right].\label{eq:x^A-decompose}
\end{align}
Here the transition from \eqref{eq:=00005Csigma=00005Ccap=00005Ctau}
to \eqref{eq:sigmaX=00005Ctau} follows because $\overline{{\cal B}}$
has uniform marginals over $j^{\left(\mathbf{b}\right)}$ and $\left|\sigma_{j^{\left(\mathbf{a}\right)}}\cap\tau_{j^{\left(\mathbf{b}\right)}}\right|=m/\ell^{2}$
for any choices of $j^{\left(\mathbf{a}\right)},j^{\left(\mathbf{b}\right)}$.
Notice that the variance $V$ in the last equation does not depend
at all on Alice's choice of $\mathbf{x}^{\left(\mathbf{a}\right)}$.
(It does depend on $j^{\left(\mathbf{a}\right)}$, but we will compare
$\mathbf{a}$ to alternative actions $\mathbf{a}'$ with $j^{\left(\mathbf{a}\right)}=j^{\left(\mathbf{a}'\right)}$.)
In particular, for fixed choice of $j^{\left(\mathbf{a}\right)}$,
Alice's expected $\mathbf{x}$-utility against $\overline{{\cal B}}$
is equivalent to 
\begin{equation}
\E_{\mathbf{b}\sim\overline{{\cal B}}}\left[U_{\mathbf{x}}^{A}\left(\mathbf{a};\mathbf{b}\right)\right]=-V-\E_{i\in\sigma_{j^{\left(\mathbf{a}\right)}}}\left[\left(\hat{\mathbf{x}}^{\overline{{\cal B}}}\mid_{i}-\mathbf{x}^{\left(\mathbf{a}\right)}\mid_{i}\right)^{2}\right]=-V-\left\Vert \mathbf{x}^{\left(\mathbf{a}\right)}-\mathbf{\hat{x}}^{\overline{{\cal B}}}\mid_{\sigma_{j^{\left(\mathbf{a}\right)}}}\right\Vert _{2}^{2}.\label{eq:without-variance}
\end{equation}

Let $\mathbf{a}^{*}$ be such that $\mathbf{x}^{\left(\mathbf{a}^{*}\right)}$
that minimizes the distance to $\mathbf{\hat{x}}^{\overline{{\cal B}}}\mid_{\sigma_{j^{\left(\mathbf{a}\right)}}}$
(and otherwise $\mathbf{a}^{*}=\mathbf{a}$ on all other choices).
In particular, we have
\[
\E_{\mathbf{b}\sim\overline{{\cal B}}}\left[U_{\mathbf{x}}^{A}\left(\mathbf{a}^{*};\mathbf{b}\right)\right]\geq-V-\epsilon_{\textsc{Precision}}^{2}.
\]
Alice is really facing distribution ${\cal B}$, which is at distance
$\epsilon_{\textsc{Uniform}}$ from ${\cal B}$. With respect to ${\cal B}$,
we have 
\begin{equation}
\E_{\mathbf{b}\sim{\cal B}}\left[U_{\mathbf{x}}^{A}\left(\mathbf{a}^{*};\mathbf{b}\right)\right]\geq-V-\epsilon_{\textsc{Precision}}^{2}-O\left(\epsilon_{\textsc{Uniform}}\right).\label{eq:a*}
\end{equation}
Since $\mathbf{a}$ and $\mathbf{a}^{*}$ only differ on their $\mathbf{x}$-utility,
if $\mathbf{a}\in\supp\left({\cal A}\right)$ it must be that
\begin{align*}
\E_{\mathbf{b}\sim\overline{{\cal B}}}\left[U_{\mathbf{x}}^{A}\left(\mathbf{a};\mathbf{b}\right)\right] & \geq\E_{\mathbf{b}\sim{\cal B}}\left[U_{\mathbf{x}}^{A}\left(\mathbf{a};\mathbf{b}\right)\right]-O\left(\epsilon_{\textsc{Uniform}}\right)\\
 & \geq\E_{\mathbf{b}\sim{\cal B}}\left[U_{\mathbf{x}}^{A}\left(\mathbf{a}^{*};\mathbf{b}\right)\right]-\epsilon_{\textsc{Nash}}/\lambda_{\mathbf{x}}-O\left(\epsilon_{\textsc{Uniform}}\right)\\
 & \geq-V-\epsilon_{\textsc{Precision}}^{2}-\epsilon_{\textsc{Nash}}/\lambda_{\mathbf{x}}-O\left(\epsilon_{\textsc{Uniform}}\right).
\end{align*}
(Here, the first inequality follows by \autoref{claim:nash-uniform_i,j},
the second inequality by definition of $\epsilon_{\textsc{Nash}}$-WSNE,
and the third from \eqref{eq:a*}.)

Finally, plugging back into \eqref{eq:without-variance}, we have
that
\[
\left\Vert \mathbf{x}^{\left(\mathbf{a}\right)}-\mathbf{\hat{x}}^{\overline{{\cal B}}}\mid_{\sigma_{j^{\left(\mathbf{a}\right)}}}\right\Vert _{2}^{2}\leq\epsilon_{\textsc{Precision}}^{2}+\epsilon_{\textsc{Nash}}/\lambda_{\mathbf{x}}+O\left(\epsilon_{\textsc{Uniform}}\right).
\]

The statement for $\mathbf{x}^{\left(\mathbf{b}\right)}$ follows
analogously.
\end{proof}
\begin{claim}
\label{claim:nash-A-B}$\left\Vert \mathbf{x}^{{\cal A}}-\mathbf{x}^{{\cal B}}\right\Vert _{2}^{2}=O\left(\epsilon_{\textsc{Brouwer}}\right)$
and $\left\Vert \mathbf{x}^{{\cal A}}-\hat{\mathbf{x}}^{{\cal B}}\right\Vert _{2}^{2}=O\left(\epsilon_{\textsc{Brouwer}}\right)$.
\end{claim}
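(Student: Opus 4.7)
The plan is to combine Claims~\ref{claim:=00005Cbar=00007BA=00007D-A} and \ref{claim:pure} by essentially averaging the per-action bounds and then correcting from the uniformized strategies $\overline{\mathcal{A}}, \overline{\mathcal{B}}$ back to $\mathcal{A}, \mathcal{B}$. First I would show the ``clean'' statements
\[
\bigl\Vert \mathbf{x}^{\overline{\mathcal{A}}} - \hat{\mathbf{x}}^{\overline{\mathcal{B}}}\bigr\Vert_2^2 ~\leq~ \epsilon_{\textsc{Brouwer}}
\quad\text{and}\quad
\bigl\Vert \mathbf{x}^{\overline{\mathcal{B}}} - \mathbf{x}^{\overline{\mathcal{A}}}\bigr\Vert_2^2 ~\leq~ \epsilon_{\textsc{Brouwer}},
\]
and then finish via triangle inequality using Claim~\ref{claim:=00005Cbar=00007BA=00007D-A}.

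For the first bound, I would start from Jensen (i.e., $\E[Z]^2 \leq \E[Z^2]$) applied coordinate-wise. For each $i \in [4]\times[m]$, the defining expectation of $[\mathbf{x}^{\overline{\mathcal{A}}}]_i$ conditions on $i\in\sigma_{j^{(\mathbf{a})}}$, so
\[
\bigl([\mathbf{x}^{\overline{\mathcal{A}}}]_i - [\hat{\mathbf{x}}^{\overline{\mathcal{B}}}]_i\bigr)^2 ~\leq~ \E_{\mathbf{a}\sim\overline{\mathcal{A}}\,\mid\, i\in\sigma_{j^{(\mathbf{a})}}}\Bigl[\bigl([\mathbf{x}^{(\mathbf{a})}]_i - [\hat{\mathbf{x}}^{\overline{\mathcal{B}}}]_i\bigr)^2\Bigr].
\]
Averaging over $i$ and swapping the two expectations (which is valid because the $j^{(\mathbf{a})}$-marginal of $\overline{\mathcal{A}}$ is exactly uniform, so each coordinate $i$ lies in $\sigma_{j^{(\mathbf{a})}}$ with probability $1/\ell$; the two factors of $\ell$ cancel) converts the RHS into $\E_{\mathbf{a}\sim\overline{\mathcal{A}}}\bigl[\bigl\Vert \mathbf{x}^{(\mathbf{a})} - \hat{\mathbf{x}}^{\overline{\mathcal{B}}}|_{\sigma_{j^{(\mathbf{a})}}} \bigr\Vert_2^2\bigr]$. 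Since every $\mathbf{a}\in\supp(\mathcal{A})\supseteq\supp(\overline{\mathcal{A}})$ satisfies the per-action bound of Claim~\ref{claim:pure}, this expectation is at most $\epsilon_{\textsc{Brouwer}}$. The second bound $\bigl\Vert \mathbf{x}^{\overline{\mathcal{B}}} - \mathbf{x}^{\overline{\mathcal{A}}}\bigr\Vert_2^2 \leq \epsilon_{\textsc{Brouwer}}$ follows by the same argument with the roles swapped, using the $\mathbf{b}$-side of Claim~\ref{claim:pure}.

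Finally, using the $\ell_2^2$ triangle inequality $(a+b+c)^2\leq 3(a^2+b^2+c^2)$ and Claim~\ref{claim:=00005Cbar=00007BA=00007D-A}, I conclude
\[
\bigl\Vert \mathbf{x}^{\mathcal{A}} - \hat{\mathbf{x}}^{\mathcal{B}}\bigr\Vert_2^2 ~\leq~ 3\bigl(\bigl\Vert\mathbf{x}^{\mathcal{A}}{-}\mathbf{x}^{\overline{\mathcal{A}}}\bigr\Vert_2^2 + \bigl\Vert\mathbf{x}^{\overline{\mathcal{A}}}{-}\hat{\mathbf{x}}^{\overline{\mathcal{B}}}\bigr\Vert_2^2 + \bigl\Vert\hat{\mathbf{x}}^{\overline{\mathcal{B}}}{-}\hat{\mathbf{x}}^{\mathcal{B}}\bigr\Vert_2^2\bigr) ~\leq~ 54\epsilon_{\textsc{Uniform}} + 3\epsilon_{\textsc{Brouwer}},
\]
and similarly for $\bigl\Vert\mathbf{x}^{\mathcal{A}}-\mathbf{x}^{\mathcal{B}}\bigr\Vert_2^2$. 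Because the parameter hierarchy forces $\epsilon_{\textsc{Uniform}}\ll\epsilon_{\textsc{Brouwer}}$ (indeed $\epsilon_{\textsc{Brouwer}}$ already contains a $\Theta(\lambda_{\mathbf{x}}/\lambda_j)=\Theta(\epsilon_{\textsc{Uniform}})$ term), both quantities are $O(\epsilon_{\textsc{Brouwer}})$, as claimed.

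The only non-routine step is the coordinate-wise Jensen + expectation-swap argument; the potential pitfall is that $\mathcal{A}$ itself does not have a uniform $j$-marginal, which is precisely why we defined $\overline{\mathcal{A}}$ (with uniform marginal, and close to $\mathcal{A}$ by Claim~\ref{claim:nash-uniform_i,j}) and do the averaging there before paying the $\epsilon_{\textsc{Uniform}}$ correction at the end.
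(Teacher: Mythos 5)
Your proposal is correct and matches the paper's (terse) proof, which simply invokes the triangle inequality together with Claims~\ref{claim:=00005Cbar=00007BA=00007D-A} and \ref{claim:pure}; you have spelled out the implicit averaging step that converts the per-action bound of Claim~\ref{claim:pure} into a bound on $\bigl\Vert \mathbf{x}^{\overline{\mathcal{A}}} - \hat{\mathbf{x}}^{\overline{\mathcal{B}}}\bigr\Vert_2^2$, and the Jensen-plus-expectation-swap computation is exactly the intended one (the uniform $j$-marginal of $\overline{\mathcal{A}}$ is what makes the two factors of $\ell$ cancel).

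One small imprecision: you assert $\supp(\overline{\mathcal{A}}) \subseteq \supp(\mathcal{A})$, but the paper's definition of $\overline{\mathcal{A}}$ (a distribution with exactly uniform $j$-marginal that is $\epsilon_{\textsc{Uniform}}$-close to $\mathcal{A}$) does not guarantee this containment---if some $j$ has zero mass under $\mathcal{A}$, $\overline{\mathcal{A}}$ must introduce actions outside $\supp(\mathcal{A})$. This does not break the argument: the total mass in $\supp(\overline{\mathcal{A}})\setminus\supp(\mathcal{A})$ is at most $\epsilon_{\textsc{Uniform}}$ and each such action contributes at most $9$ to the squared distance (coordinates lie in $[-1,2]$), so the extra term is $O(\epsilon_{\textsc{Uniform}})=O(\epsilon_{\textsc{Brouwer}})$ and the conclusion stands. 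It would be cleaner to either account for this explicitly or to note that an equivalent choice of $\overline{\mathcal{A}}$ can be made by only redistributing mass among $j$-values already present.
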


\begin{proof}
Follows by triangle inequality from \autoref{claim:=00005Cbar=00007BA=00007D-A}
and \autoref{claim:pure}.
\end{proof}
\begin{claim}
\label{claim:nash-vertex}We have the following guarantees, depending
on the distance of $\hat{\mathbf{x}}_{r}^{{\cal \overline{\mathcal{B}}}}$
to the nearest encoding of a vertex.
\begin{itemize}
\item If $\hat{\mathbf{x}}_{r}^{{\cal \overline{\mathcal{B}}}}$ is $9\sqrt{h}$-close
to the encoding $\Enc_{C}\left(v\right)$ of some vertex $v\in\left\{ 0,1\right\} ^{n}$,
then $v_{r}^{\left(\mathbf{a}\right)}=v\mid_{\left[n/2\right]}$ for
every $\mathbf{a}\in\supp\left({\cal A}\right)$ and $v_{r}^{\left(\mathbf{b}\right)}=v\mid_{\left[n\right]\setminus\left[n/2\right]}$
for every $\mathbf{b}\in\supp\left({\cal B}\right)$.
\item If $\left\Vert \hat{\mathbf{x}}_{r}^{{\cal \overline{\mathcal{B}}}}-\Enc_{C}\left(v\right)\right\Vert _{2}\in\left[9\sqrt{h},24\sqrt{h}\right]$
for some $v\in\left\{ 0,1\right\} ^{n}$, then $v_{r}^{\left(\mathbf{a}\right)}\in\left\{ v\mid_{\left[n/2\right]},\perp\right\} $
for every $\mathbf{a}\in\supp\left({\cal A}\right)$ and $v_{r}^{\left(\mathbf{b}\right)}\in\left\{ v\mid_{\left[n\right]\setminus\left[n/2\right]},\perp\right\} $
for every $\mathbf{b}\in\supp\left({\cal B}\right)$.
\item If $\hat{\mathbf{x}}_{r}^{{\cal \overline{\mathcal{B}}}}$ is $24\sqrt{h}$-far
from the encoding $\Enc_{C}\left(v\right)$ of every vertex $v\in\left\{ 0,1\right\} ^{n}$,
then either $v_{r}^{\left(\mathbf{a}\right)}=\perp$ for every $\mathbf{a}\in\supp\left({\cal A}\right)$
or $v_{r}^{\left(\mathbf{b}\right)}=\perp$ for every $\mathbf{b}\in\supp\left({\cal B}\right)$. 
\end{itemize}
\end{claim}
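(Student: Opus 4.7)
The plan is to apply the $\epsilon_{\textsc{Nash}}$-WSNE condition: for any $\mathbf{a} \in \supp(\mathcal{A})$, compare Alice's expected utility from $\mathbf{a}$ to an alternative $\mathbf{a}'$ that swaps $v_r^{(\mathbf{a})}$ for the claimed ``correct'' choice (and re-picks $\alpha_r^{(\mathbf{a}')}$ optimally for the new $v_r$). Since $v_r^{(\mathbf{a})}$ enters only $U_{v_r}^A$ and $U_{\alpha_r}^A$ directly, the total swing is controlled by the constraint $\lambda_{v_r}/2 > \epsilon_{\textsc{Nash}} + \lambda_{\alpha_r}$ from \autoref{subsec:Utilities}; the analogous constraint $\lambda_{v_r}/2 > \epsilon_{\textsc{Nash}} + 9\lambda_{\hat{\mathbf{x}}}$ handles Bob's side, absorbing the indirect effect of $v_r^{(\mathbf{b})}$ on $U_{\hat{\mathbf{x}}}^B$ through $f^{(\mathbf{a},\mathbf{b})}$.

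The first key step is to show that Alice's expected $U_{v_r}^A$ for candidate $v_r^{(\mathbf{a}')} = v^a$ is essentially determined by $\|\Enc_{C'}(v^a) - \hat{\mathbf{x}}_r^{\overline{\mathcal{B}}}|_{[m/2]}\|_2$. Chaining \autoref{claim:pure}, \autoref{claim:nash-A-B}, and \autoref{claim:=00005Cbar=00007BA=00007D-A} yields $\|\mathbf{x}^{(\mathbf{b})} - \hat{\mathbf{x}}^{\overline{\mathcal{B}}}|_{\tau_{j^{(\mathbf{b})}}}\|_2 = O(\sqrt{\epsilon_{\textsc{Brouwer}}})$ for every $\mathbf{b}$ in Bob's support. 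Because the marginal on $j^{(\mathbf{b})}$ under $\overline{\mathcal{B}}$ is uniform and $\tau_{j^{(\mathbf{b})}}$ comes from a $k$-wise independent family, \autoref{thm:chernoff-k-wise} ensures that the restricted distance entering $U_{v_r}^A$ concentrates on the full normalized distance on $[m/2]$ to within $o(\sqrt{h})$. Since $\epsilon_{\textsc{Brouwer}} \ll h$, all approximation errors are negligible relative to the thresholds $9\sqrt{h}$, $23\sqrt{h}$, $24\sqrt{h}$. Bob's side is symmetric, with $\hat{\mathbf{x}}_r^{\overline{\mathcal{B}}}|_{[m/2]}$ replaced by $\mathbf{x}_r^{\overline{\mathcal{A}}}|_{[m]\setminus[m/2]}$, which is also $O(\sqrt{\epsilon_{\textsc{Brouwer}}})$-close to the corresponding half of $\hat{\mathbf{x}}^{\overline{\mathcal{B}}}$.

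The second key step is a case analysis exploiting the product code structure $\Enc_C(v) = \Enc_{C'}(v^a) \circ \Enc_{C'}(v^b)$: the normalized squared distance decouples as $\|\hat{\mathbf{x}}_r^{\overline{\mathcal{B}}} - \Enc_C(v)\|_2^2 = \tfrac{1}{2}\|(\hat{\mathbf{x}}_r^{\overline{\mathcal{B}}} - \Enc_{C'}(v^a))|_{[m/2]}\|_2^2 + \tfrac{1}{2}\|(\hat{\mathbf{x}}_r^{\overline{\mathcal{B}}} - \Enc_{C'}(v^b))|_{[m]\setminus[m/2]}\|_2^2$, and $v^a, v^b$ are optimized independently. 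In Case~1, the bound $9\sqrt{h}$ on the full distance forces each half-distance $\leq 9\sqrt{2h}$, so the correct half-vertex gives expected $U_{v_r} \approx +1$, while any $v' \neq v|_{[n/2]}$ is at half-distance $\geq \Omega(1) - 9\sqrt{2h} \gg 23\sqrt{h}$ by the constant relative distance of $C'$, giving $\approx -1$; the $2\lambda_{v_r}$ swing beats $\lambda_{\alpha_r} + \epsilon_{\textsc{Nash}}$, pinning $v_r^{(\mathbf{a})} = v|_{[n/2]}$. In Case~3, the lower bound $24\sqrt{h}$ combined with the decoupling identity (since $a^2/2 + b^2/2 \geq (24\sqrt{h})^2$ implies $\max(a,b) \geq 24\sqrt{h}$) forces at least one of the two minimum half-distances to exceed $24\sqrt{h}$; for the corresponding player every non-$\perp$ choice yields $U_{v_r} \approx -1$, and switching to $\perp$ (with $\alpha^{(\mathbf{a}')} = \perp$ attaining $U_{\alpha_r} = 1$) nets $\geq \lambda_{v_r}$, contradicting WSNE. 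Case~2 is the intermediate regime in which the Case~1 deviation rules out any $v' \neq v|_{[n/2]}$, but $v|_{[n/2]}$ and $\perp$ remain indistinguishable.

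The main obstacle is the tight bookkeeping between the three thresholds $9\sqrt{h}, 23\sqrt{h}, 24\sqrt{h}$ and the many sources of approximation noise (the subset-sampling over $\tau_{j^{(\mathbf{b})}}$, the perturbation from $\mathbf{x}^{(\mathbf{b})}$ to $\hat{\mathbf{x}}^{\overline{\mathcal{B}}}$, and the passage from $\mathcal{B}$ to $\overline{\mathcal{B}}$). All of these are $o(\sqrt{h})$ by the hierarchy $\epsilon_{\textsc{Brouwer}} + \epsilon_{\textsc{Uniform}} \ll h$ and the $k$-wise Chernoff concentration, so the gaps are preserved and each WSNE contradiction follows cleanly from the $\lambda$-constraints.
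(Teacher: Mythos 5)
Your proposal tracks the paper's own proof closely: the same chain (\autoref{claim:pure} $\to$ \autoref{claim:nash-A-B} $\to$ \autoref{claim:=00005Cbar=00007BA=00007D-A}), the same $k$-wise Chernoff concentration to transfer full-domain distances to $\tau_{j^{(\mathbf{b})}}$-restricted distances, the same use of the product code $\Enc_C = \Enc_{C'}\circ\Enc_{C'}$ to decouple the two half-vertices, and the same $\lambda$-constraints to localize the effect of swapping $v_r$ on Alice's and Bob's utilities. The only imprecisions are wording-level: (i) the phrase ``for every $\mathbf{b}$ in Bob's support'' before the concentration step is a slight overstatement, since Claims \ref{claim:nash-A-B} and \ref{claim:=00005Cbar=00007BA=00007D-A} are full-domain bounds and the restriction to $\tau_{j^{(\mathbf{b})}}$ is controlled only with high probability over the (near-uniform) choice of $j^{(\mathbf{b})}$ — you correctly invoke \autoref{thm:chernoff-k-wise} right afterwards, which is exactly how the paper resolves it, passing from $\overline{\calB}$ to $\calB$ by \autoref{claim:nash-uniform_i,j}; and (ii) in Case~1 you explicitly rule out wrong half-vertices $v'\neq v|_{[n/2]}$ (swing $\approx 2\lambda_{v_r}$) but you should also mention that $\perp$ (utility $0$, swing $\approx \lambda_{v_r}$) is ruled out by the same constraint $\lambda_{v_r}/2 > \epsilon_{\textsc{Nash}} + \lambda_{\alpha_r}$, which is the binding comparison. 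These are cosmetic; the proof idea and its execution match the paper's.
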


\begin{proof}
We prove the first bullet for $\mathbf{\mathbf{a}\in\supp\left({\cal A}\right)}$.
If $\hat{\mathbf{x}}_{r}^{{\cal \overline{\mathcal{B}}}}$ is $9\sqrt{h}$-close
to the encoding $\Enc_{C}\left(v\right)$ of some vertex $v=\left(v^{A}\circ v^{B}\right)\in\left\{ 0,1\right\} ^{n}$,
then by \autoref{claim:nash-A-B} $\mathbf{x}_{r}^{{\cal A}}$ is
$10\sqrt{h}$-close to $\Enc_{C}\left(v\right)$. Hence in particular
$\mathbf{x}_{r}^{{\cal A}}\mid_{\left[m/2\right]}$ is $20\sqrt{h}$-close
to $\Enc_{C}\left(v\right)\mid_{\left[m/2\right]}=\Enc_{C'}\left(v^{A}\right)$.
Thus by \autoref{claim:=00005Cbar=00007BA=00007D-A}, $\mathbf{x}_{r}^{{\cal \overline{A}}}\mid_{\left[m/2\right]}$
is $20\sqrt{h}+O\left(\sqrt{\epsilon_{\textsc{Uniform}}}\right)$-close
to $\Enc_{C'}\left(v^{A}\right)$. By $k$-wise Chernoff (
\autoref{thm:chernoff-k-wise}), we have that with high probability over
random choice of $j^{\left(\mathbf{b}\right)}$, $\mathbf{x}^{{\cal \overline{A}}}\mid_{r\times\left(\tau_{j^{\left(\mathbf{b}\right)}}\cap\left[m/2\right]\right)}$
is also $20\sqrt{h}+O\left(\sqrt{\epsilon_{\textsc{Uniform}}}\right)+o\left(1\right)$-close
to $\Enc_{C'}\left(v^{A}\right)$. If we instead pick $j^{\left(\mathbf{b}\right)}$
according to ${\cal B}$, this holds with probability $1-O\left(\epsilon_{\textsc{Uniform}}\right)$
(using \autoref{claim:nash-uniform_i,j}). And whenever this is
the case, we have by \autoref{claim:pure} that 
\[
\left\Vert \mathbf{x}_{r}^{\left(\mathbf{b}\right)}-\Enc_{C'}\left(v^{A}\right)\mid_{\tau_{j^{\left(\mathbf{b}\right)}}}\right\Vert _{2}\leq20\sqrt{h}+O\left(\sqrt{\epsilon_{\textsc{Brouwer}}}\right)<21\sqrt{h}.
\]
Therefore, Alice's expected $v_{r}$-utility for setting $v_{r}^{\left(\mathbf{a}\right)}=v$
is 
\[
\E_{\mathbf{b}\sim{\cal B}}\left[U_{v_{r}}^{A}\left(v^{\left(\mathbf{a}\right)};j^{\left(\mathbf{b}\right)},\mathbf{x}^{\left(\mathbf{b}\right)}\right)\right]\geq1-O\left(\epsilon_{\textsc{Uniform}}\right),
\]
while for any other choice it is at most $\max\left\{ -1+O\left(\epsilon_{\textsc{Uniform}}\right),0\right\} \ll1$.
Other than her $v_{r}$-utility, the choice of $v_{r}^{\left(\mathbf{a}\right)}$
only affects Alice's $\alpha_{r}$-utility. The claim follows by $\lambda_{v_{r}}/2>\epsilon_{\textsc{Nash}}+\lambda_{\alpha_{r}}$.
The claim for $\mathbf{b}\in\supp\left({\cal B}\right)$ follows analogously
(using $\lambda_{v_{r}}/2>\epsilon_{\textsc{Nash}}+9\lambda_{\hat{\mathbf{x}}}$). 

For the third bullet, if $\hat{\mathbf{x}}_{r}^{{\cal \overline{\mathcal{B}}}}$
is $24\sqrt{h}$-far from $\Enc_{C}\left(v\right)$, then either $\mathbf{x}_{r}^{{\cal A}}\mid_{\left[m/2\right]}$
is $23\sqrt{h}$-far from every $\Enc_{C'}\left(v^{A}\right)$, or
$\mathbf{x}_{r}^{{\cal A}}\mid_{\left[m\right]\setminus\left[m/2\right]}$
is $23\sqrt{h}$-far from every $\Enc_{C'}\left(v^{B}\right)$. Assume
the former wlog. Then by arguments analogous to the first bullet,
Alice's expected $v_{r}$-utility is much higher whenever she sets
$v_{r}^{\left(\mathbf{a}\right)}=\perp$.

Finally, for the second bullet, either $\perp$ or $\Enc_{C'}\left(v^{A}\right)$
may yield Alice a higher expected utility, or they may be about the
same. But $\perp$ has utility at least $0$, which is much higher
than $\approx-1$ that she would get for guessing a wrong $\hat{v^{A}}$.
\end{proof}

\subsubsection*{Implementing the lifting gadget}

The following is an immediate corollary of \autoref{claim:nash-vertex}
(note in particular that $\alpha^{\left(\mathbf{a}\right)}$ only
affects Alice's $\alpha$-utility).
\begin{claim}
\label{claim:nash-alpha}We have the following guarantees, for every
$\mathbf{a}\in\supp\left({\cal A}\right)$, depending on the distance
of $\hat{\mathbf{x}}_{r}^{{\cal \overline{\mathcal{B}}}}$ to the
nearest encoding of a vertex.
\end{claim}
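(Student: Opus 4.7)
The plan is to exploit two facts already in hand: that $\alpha^{(\mathbf{a})}$ only enters Alice's utility through $U_{\alpha_r}^A$ (as stressed in the sentence immediately preceding the claim), and that \autoref{claim:nash-vertex} already pins down the half-vertex labels $v_r^{(\mathbf{b})}$ across $\supp(\mathcal{B})$ in each of the three distance regimes. Since $U_{\alpha_r}^A$ is a $0/1$ indicator that $\alpha_r^{(\mathbf{a})}$ equals the ``correct'' bit $\alpha(v_r^{(\mathbf{a},\mathbf{b})})$, Alice's best response on the $\alpha_r$ coordinate reduces to predicting this bit from her knowledge of $v_r^{(\mathbf{a})}$ and the distribution of $v_r^{(\mathbf{b})}$ under $\mathcal{B}$.

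First I would argue that, fixing an arbitrary $\mathbf{a} \in \supp(\mathcal{A})$, \autoref{claim:nash-vertex} controls $v_r^{(\mathbf{b})}$ across $\supp(\mathcal{B})$. In the close regime, $v_r^{(\mathbf{b})}$ takes a single fixed value $v|_{[n]\setminus[n/2]}$ for every $\mathbf{b} \in \supp(\mathcal{B})$, so the concatenated vertex $v_r^{(\mathbf{a},\mathbf{b})}$ is the same vertex $v$ regardless of Bob's action (combining with the first bullet of \autoref{claim:nash-vertex} applied to Alice, which forces $v_r^{(\mathbf{a})} = v|_{[n/2]}$). In the intermediate regime, $v_r^{(\mathbf{b})}$ lies in $\{v|_{[n]\setminus[n/2]}, \perp\}$, so $v_r^{(\mathbf{a},\mathbf{b})}$ lies in $\{v, \perp\}$. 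In the far regime, either $v_r^{(\mathbf{a})} = \perp$ or $v_r^{(\mathbf{b})} = \perp$ for all actions in the supports, which forces $v_r^{(\mathbf{a},\mathbf{b})} = \perp$.

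Next I would translate these structural guarantees into statements about $\alpha_r^{(\mathbf{a})}$ via the WSNE condition. In the close regime, $\E_{\mathbf{b}\sim\mathcal{B}}[U_{\alpha_r}^A(\alpha^{(\mathbf{a})}, v^{(\mathbf{a})}; v^{(\mathbf{b})})]$ equals $1$ for $\alpha_r^{(\mathbf{a})} = \alpha(v)$ and $0$ otherwise, so the $\epsilon_{\textsc{Nash}}$-WSNE condition (together with $\epsilon_{\textsc{Nash}} \ll \lambda_{\alpha_r}$) forces Alice to play $\alpha_r^{(\mathbf{a})} = \alpha(v_r^{(\mathbf{a},\mathbf{b})})$ for every $\mathbf{a} \in \supp(\mathcal{A})$. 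In the intermediate regime, the best available payoff is likewise attained at $\alpha_r^{(\mathbf{a})} = \alpha(v)$ (since $\perp$ contributes $0$ either way), so Alice still plays this value. In the far regime, the $\alpha_r$-utility is identically $0$ and no constraint is imposed on $\alpha_r^{(\mathbf{a})}$.

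The only subtle step will be the intermediate regime, where the distribution of $v_r^{(\mathbf{b})}$ under $\mathcal{B}$ is a mixture of the ``correct'' half-vertex and $\perp$, with mixture weights that we do not control. The key observation, however, is that $\alpha(\perp) = \perp$ contributes nothing either way, so the expected $\alpha_r$-utility of the guess $\alpha(v)$ dominates every alternative guess by exactly $\Pr_{\mathbf{b}\sim\mathcal{B}}[v_r^{(\mathbf{b})} \neq \perp]$. If this probability is larger than $\epsilon_{\textsc{Nash}}/\lambda_{\alpha_r}$, Alice must play $\alpha(v)$; otherwise, any value (including $\perp$) is acceptable. I would phrase the intermediate bullet of the claim to reflect exactly this dichotomy, and the rest is then a direct application of the $\epsilon_{\textsc{Nash}}$-WSNE definition plus \autoref{claim:nash-uniform_i,j} to absorb the $\epsilon_{\textsc{Uniform}}$ gap between $\mathcal{B}$ and $\overline{\mathcal{B}}$.
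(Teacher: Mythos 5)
The proposal starts from the right place---note that $\alpha^{(\mathbf{a})}$ only enters through $U^A_{\alpha_r}$, apply \autoref{claim:nash-vertex} to pin down $v_r^{(\mathbf{a},\mathbf{b})}$, then appeal to the WSNE condition---which is exactly the paper's (one-line) justification. However, you misread the gadget utility. You interpret $\alpha(\perp)=\perp$ as ``the match is impossible, so $U^A_{\alpha_r}$ is identically $0$ when $v_r^{(\mathbf{a},\mathbf{b})}=\perp$.'' The paper instead defines $\alpha(\perp)\coloneqq\perp$ as a \emph{value}: when $v_r^{(\mathbf{a},\mathbf{b})}=\perp$, Alice earns $\alpha_r$-utility $1$ precisely by playing $\alpha_r^{(\mathbf{a})}=\perp$ and $0$ otherwise. (This is the only reading under which the claim's assertions $\alpha^{(\mathbf{a})}=\perp$ and $\alpha^{(\mathbf{a})}\in\{\alpha(v),\perp\}$ are even meaningful; the claim explicitly lists $\perp$ as a permitted value of $\alpha^{(\mathbf{a})}$.)

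This misreading causes two concrete problems. First, your far-regime conclusion (``the $\alpha_r$-utility is identically $0$ and no constraint is imposed'') is wrong: since $v_r^{(\mathbf{a},\mathbf{b})}=\perp$ for all $(\mathbf{a},\mathbf{b})$ in the supports, playing $\perp$ gains a full $\lambda_{\alpha_r}\cdot 1\gg\epsilon_{\textsc{Nash}}$ over any other choice, so the WSNE condition \emph{does} force $\alpha^{(\mathbf{a})}=\perp$, exactly as the claim states. Second, the ``subtle step'' in the intermediate regime, and your proposal to re-phrase that bullet as a dichotomy on $\Pr[v_r^{(\mathbf{b})}\neq\perp]$, are both unnecessary and give a weaker statement than the paper needs. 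Under the correct reading, with $p\coloneqq\Pr_{\mathbf{b}\sim\mathcal{B}}[v_r^{(\mathbf{a},\mathbf{b})}=v]$, playing $\alpha(v)$ earns $p$, playing $\perp$ earns $1-p$, and any other value earns $0$. Since $\max\{p,1-p\}\geq 1/2$ and $\lambda_{\alpha_r}/2\gg\epsilon_{\textsc{Nash}}$, every value outside $\{\alpha(v),\perp\}$ is ruled out by WSNE \emph{unconditionally}---no case split on $p$ is required, and the claim holds exactly as written. Your weakened version (``$\alpha(v)$ or anything when $p$ is small'') would not suffice downstream: \autoref{claim:nash-f} relies on Bob's computed $P,S$ being either correct or $\perp$, which fails if Alice can supply arbitrary garbage $\alpha_r^{(\mathbf{a})}$ against a nonzero-probability $v_r^{(\mathbf{a},\mathbf{b})}=v$.
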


\begin{itemize}
\item If $\hat{\mathbf{x}}_{r}^{{\cal \overline{\mathcal{B}}}}$ is $9\sqrt{h}$-close
to the encoding $\Enc_{C}\left[v\right]$ of some vertex $v\in\left\{ 0,1\right\} ^{n}$,
then $\alpha^{\left(\mathbf{a}\right)}=\alpha\left(v\right)$.
\item If $\left\Vert \hat{\mathbf{x}}_{r}^{{\cal \overline{\mathcal{B}}}}-\Enc_{C}\left(v\right)\right\Vert _{2}\in\left[9\sqrt{h},24\sqrt{h}\right]$
for some $v\in\left\{ 0,1\right\} ^{n}$, then $\alpha^{\left(\mathbf{a}\right)}\in\left\{ \alpha\left(v\right),\perp\right\} $.
\item if $\hat{\mathbf{x}}_{r}^{{\cal \overline{\mathcal{B}}}}$ is $24\sqrt{h}$-far
from the encoding $\Enc_{C}\left[v\right]$ of every vertex $v\in\left\{ 0,1\right\} ^{n}$,
then $\alpha^{\left(\mathbf{a}\right)}=\perp$.
\end{itemize}

\subsubsection*{Locally computing the Brouwer function}
\begin{claim}
\label{claim:nash-f}In expectation over a random pair $\left(\mathbf{a},\mathbf{b}\right)\sim\overline{{\cal A}}\times\overline{{\cal B}}$,
we have that 
\begin{equation}
\E_{\mathbf{a},\mathbf{b}}\left[\left\Vert f\left(\hat{\mathbf{x}}^{\overline{{\cal B}}}\right)\mid_{\sigma_{j^{\left(\mathbf{a}\right)}}\cap\tau_{j^{\left(\mathbf{b}\right)}}}-f^{\left(\mathbf{a},\mathbf{b}\right)}\left(\mathbf{x}^{\left(\mathbf{a}\right)}\right)\right\Vert _{2}^{2}\right]=O\left(\epsilon_{\textsc{Brouwer}}\right).\label{eq:A-f(A)}
\end{equation}
\end{claim}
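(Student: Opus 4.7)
The plan is to argue, on a pair-by-pair basis, that the locally computed Brouwer value $f^{(\mathbf{a},\mathbf{b})}(\mathbf{x}^{(\mathbf{a})})$ approximates the true $f(\hat{\mathbf{x}}^{\overline{\calB}})$ on the intersection $\sigma_{j^{(\mathbf{a})}}\cap\tau_{j^{(\mathbf{b})}}$, by feeding Lemma~\ref{lem:doubly-local} the ``noisy'' inputs that arise from Alice's and Bob's actions and tracking how noise propagates. Fix a pair $(\mathbf{a},\mathbf{b})\in\supp(\overline{\calA})\times\supp(\overline{\calB})$, and for each $r\in\{1,2\}$ let $v_r^\star$ be a closest codeword vertex to $\hat{\mathbf{x}}_r^{\overline{\calB}}$. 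I will compare $f^{(\mathbf{a},\mathbf{b})}(\mathbf{x}^{(\mathbf{a})})$ to the output of the doubly-local algorithm fed the ``ideal'' inputs $v_r^\star, S(v_r^\star), P(v_r^\star)$, the true coordinate $[\hat{\mathbf{x}}^{\overline{\calB}}]_i$, and the full restriction $\hat{\mathbf{x}}^{\overline{\calB}}|_{[4]\times\sigma_{j^{(\mathbf{a})}}}$, which by the ``Furthermore'' clause of Lemma~\ref{lem:doubly-local} equals $f(\hat{\mathbf{x}}^{\overline{\calB}})_i$ with high probability over $\sigma_{j^{(\mathbf{a})}}$.

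First I would check that with high probability the vertex/predecessor/successor part of the input matches what the doubly-local algorithm accepts: by Claim~\ref{claim:nash-vertex} the concatenations $v_r^{(\mathbf{a},\mathbf{b})}$ supplied by $(\mathbf{a},\mathbf{b})$ either equal $v_r^\star$ or equal $\perp$, and only equal $\perp$ outside the $9\sqrt{h}$-ball around some encoding — which is exactly the regime in which the lemma permits $\perp$ as valid input. By Claim~\ref{claim:nash-alpha}, whenever $v_r^{(\mathbf{a},\mathbf{b})}=v_r^\star$, Alice plays the correct gadget input $\alpha(v_r^\star)$, so $P^{(\mathbf{a})}(v_r^{(\mathbf{a},\mathbf{b})}),S^{(\mathbf{a})}(v_r^{(\mathbf{a},\mathbf{b})})$ are the true predecessor and successor. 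Thus the vertex-structural input the local algorithm receives is a valid input matching some $\mathbf{x}$ in the neighborhood of $\hat{\mathbf{x}}^{\overline{\calB}}$, as required by Lemma~\ref{lem:doubly-local}.

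Next I would quantify the numerical noise. The algorithm uses $x_i^{(\mathbf{a})}$ in place of $[\hat{\mathbf{x}}^{\overline{\calB}}]_i$, and $\mathbf{x}^{(\mathbf{a})}$ in place of $\hat{\mathbf{x}}^{\overline{\calB}}|_{[4]\times\sigma_{j^{(\mathbf{a})}}}$. By Claim~\ref{claim:pure}, $\|\mathbf{x}^{(\mathbf{a})} - \hat{\mathbf{x}}^{\overline{\calB}}|_{\sigma_{j^{(\mathbf{a})}}}\|_2^2\leq \epsilon_{\textsc{Brouwer}}$, which bounds the expected per-coordinate squared deviation. The ``Furthermore'' clause of Lemma~\ref{lem:doubly-local} then guarantees that, whenever the random set $\sigma_{j^{(\mathbf{a})}}$ behaves typically (the $k$-wise Chernoff events inside the proof of Lemma~\ref{lem:doubly-local} succeed), the coordinate-wise output is within $\pm O(\epsilon_{\textsc{Brouwer}})$ of $f(\hat{\mathbf{x}}^{\overline{\calB}})_i$ for each $i$ in $\sigma_{j^{(\mathbf{a})}}\cap\tau_{j^{(\mathbf{b})}}$. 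Since $\overline{\calA}$ and $\overline{\calB}$ are uniform in their $j$-marginals, averaging the per-coordinate squared error over $i\in\sigma_{j^{(\mathbf{a})}}\cap\tau_{j^{(\mathbf{b})}}$ and over $(\mathbf{a},\mathbf{b})$ yields an $O(\epsilon_{\textsc{Brouwer}})$ bound on the expected squared $L_2$ distance.

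The main obstacle I expect is the intermediate regime $\|\hat{\mathbf{x}}_r^{\overline{\calB}} - \Enc_C(v_r^\star)\|_2\in[9\sqrt{h},24\sqrt{h}]$, where players are allowed to play either the correct $v_r^\star$ or $\perp$, so different actions in the support of $\overline{\calA}\times\overline{\calB}$ could hand the local algorithm different structural inputs. I plan to handle this by invoking the $[8\sqrt{h},25\sqrt{h}]$ robustness clause of Lemma~\ref{lem:doubly-local}, which explicitly permits both choices of input and still guarantees a correct output, combined with a small margin $(8\sqrt h\text{ vs. }9\sqrt h, 25\sqrt h\text{ vs. }24\sqrt h)$ to absorb the $O(\sqrt{\epsilon_{\textsc{Uniform}}})$ slack between $\hat{\mathbf{x}}^{\overline{\calB}}$ and $\mathbf{x}^{(\mathbf{a})}$. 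The remaining bookkeeping — the rare failure events of $k$-wise concentration and the $O(\epsilon_{\textsc{Uniform}})$ discrepancy between $\calB$ and $\overline{\calB}$ — contribute only additive $O(\epsilon_{\textsc{Brouwer}})$ terms and can be absorbed into the final constant.
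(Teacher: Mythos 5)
Your proposal follows the same approach as the paper's proof: use Claims~\ref{claim:nash-vertex} and~\ref{claim:nash-alpha} to certify the discrete inputs to the doubly-local algorithm, use Claim~\ref{claim:pure} (combined with $k$-wise Chernoff over the coordinate subsets) to bound the numerical noise, invoke the robustness of Lemma~\ref{lem:doubly-local}, and absorb the $\overline{\calB}$-versus-$\calB$ discrepancy and the rare Chernoff failures into $O(\epsilon_{\textsc{Uniform}})=O(\epsilon_{\textsc{Brouwer}})$ terms. One minor imprecision: the ``Furthermore'' clause only yields a per-coordinate error of $O(\max\{|x_i^{(\mathbf{a})}-[\hat{\mathbf{x}}^{\overline{\calB}}]_i|,\epsilon_{\textsc{Brouwer}}\})$, not a uniform $\pm O(\epsilon_{\textsc{Brouwer}})$ at every $i$; the desired squared-$\ell_2$ bound still follows by averaging, and the paper's own proof is equally terse about this step.
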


\begin{proof}
In fact, we prove a stronger claim; namely that for every $\mathbf{a}\in\supp\left({\cal A}\right)$,
we have that with probability at least $1-O\left(\epsilon_{\textsc{Uniform}}\right)$
over $\mathbf{b}\sim\overline{{\cal B}}$, 
\[
\left\Vert f\left(\hat{\mathbf{x}}^{\overline{{\cal B}}}\right)\mid_{\sigma_{j^{\left(\mathbf{a}\right)}}\cap\tau_{j^{\left(\mathbf{b}\right)}}}-f^{\left(\mathbf{a},\mathbf{b}\right)}\left(\mathbf{x}^{\left(\mathbf{a}\right)}\right)\right\Vert _{2}^{2}=O\left(\epsilon_{\textsc{Brouwer}}\right).
\]

By \autoref{claim:nash-vertex} and \autoref{claim:nash-alpha}, it
follows that for $r\in\left\{ 1,2\right\} $, the discrete inputs
$v_{r},P\left(v_{r}\right),S\left(v_{r}\right)$ to the algorithm
from \autoref{lem:doubly-local} are always computed correctly (for
every $\mathbf{b}\in\supp\left({\cal B}\right)$). 

Furthermore, $\mathbf{x}^{\left(\mathbf{a}\right)}$ is used as the
third input to the algorithm; by \autoref{claim:nash-A-B}, we have
that for every $\mathbf{a}\in\supp\left({\cal A}\right)$,
\begin{equation}
\left\Vert \hat{\mathbf{x}}^{\overline{{\cal B}}}\mid_{\sigma_{j^{\left(\mathbf{a}\right)}}}-\mathbf{x}^{\left(\mathbf{a}\right)}\right\Vert _{2}^{2}=O\left(\epsilon_{\textsc{Brouwer}}\right).\label{eq:nash-A-a}
\end{equation}
Thus by $k$-wise Chernoff bound, we also have that with probability
$1-o\left(1\right)$ over a random $\mathbf{b}\sim\overline{{\cal B}}$,
\[
\left\Vert \hat{\mathbf{x}}^{\overline{{\cal B}}}\mid_{\sigma_{j^{\left(\mathbf{a}\right)}}\cap\tau_{j^{\left(\mathbf{b}\right)}}}-\mathbf{x}^{\left(\mathbf{a}\right)}\mid_{\sigma_{j^{\left(\mathbf{a}\right)}}\cap\tau_{j^{\left(\mathbf{b}\right)}}}\right\Vert _{2}^{2}=\left\Vert \hat{\mathbf{x}}^{\overline{{\cal B}}}\mid_{\sigma_{j^{\left(\mathbf{a}\right)}}}-\mathbf{x}^{\left(\mathbf{a}\right)}\right\Vert _{2}^{2}+o\left(1\right)=O\left(\epsilon_{\textsc{Brouwer}}\right).
\]
Therefore with probability $1-O\left(\epsilon_{\textsc{Uniform}}\right)$,
the same also holds for a random $\mathbf{b}\sim{\cal B}$. Whenever
this is the case, we have by \autoref{lem:doubly-local} that indeed
\[
\left\Vert f\left(\hat{\mathbf{x}}^{\overline{{\cal B}}}\right)\mid_{\sigma_{j^{\left(\mathbf{a}\right)}}\cap\tau_{j^{\left(\mathbf{b}\right)}}}-f^{\left(\mathbf{a},\mathbf{b}\right)}\left(\mathbf{x}^{\left(\mathbf{a}\right)}\right)\right\Vert _{2}^{2}=O\left(\epsilon_{\textsc{Brouwer}}\right).
\]

Therefore, by \autoref{claim:nash-uniform_i,j} the same holds for
a random $\mathbf{b}\sim\overline{{\cal B}}$, which completes the
proof.
\end{proof}

\subsubsection*{Completing the proof}
\begin{proof}
[Proof of \autoref{prop:nash-A-f(A)}] Let $\overline{{\cal A}}\left(i\right)$
denote the restriction of $\overline{{\cal A}}$ to actions $\mathbf{a}$
such that $\sigma_{j^{\left(\mathbf{a}\right)}}\ni i$; define $\overline{{\cal B}}\left(i\right)$
analogously (conditioning on $\tau_{j^{\left(\mathbf{b}\right)}}\ni i$).
Similarly to Equation \eqref{eq:x^A-decompose}, Bob's $\hat{\mathbf{x}}$-utility
facing an expected $\mathbf{a}\sim\overline{{\cal A}}$ decomposes
as
\[
\E_{\mathbf{a\sim\overline{{\cal A}}}}\left[U_{\hat{\mathbf{x}}}^{B}\left(\mathbf{b};\mathbf{a}\right)\right]=-\E_{i\in\tau_{j^{\left(\mathbf{b}\right)}}}\left[\underbrace{\Var_{\mathbf{a}\sim\overline{{\cal A}}\left(i\right)}\left(f_{i}^{\left(\mathbf{a},\mathbf{b}\right)}\left(\mathbf{x}^{\left(\mathbf{a}\right)}\right)\right)}_{\eqqcolon V}+\left(\hat{\mathbf{x}}^{\mathbf{b}}\mid_{i}-\E_{\mathbf{a}\sim\overline{{\cal A}}\left(i\right)}\left[f_{i}^{\left(\mathbf{a},\mathbf{b}\right)}\left(\mathbf{x}^{\left(\mathbf{a}\right)}\right)\right]\right)^{2}\right].
\]
Now the variance term $V$ depends on other parts of Bob's action
(in particular $j^{\left(\mathbf{b}\right)},v^{\left(\mathbf{b}\right)}$),
but not on the choice of $\hat{\mathbf{x}}^{\left(\mathbf{\mathbf{b}}\right)}$.
Let $\mathbf{x}^{*}\in\left[-1,2\right]^{T_{j^{\left(\mathbf{b}\right)}}}$
denote the optimal choice of $\hat{\mathbf{x}}^{\left(\mathbf{\mathbf{b}}\right)}$
(fixing the rest of Bob's action) when facing mixed strategy $\overline{{\cal A}}$. 

For every $i\in\tau_{j^{\left(\mathbf{b}\right)}}$, the optimal $x_{i}^{*}$
is within $\pm\epsilon_{\textsc{Precision}}$ of $\E_{\mathbf{a}\sim\overline{{\cal A}}\left(i\right)}\left[f_{i}^{\left(\mathbf{a},\mathbf{b}\right)}\left(\mathbf{x}^{\left(\mathbf{a}\right)}\right)\right]$.
When he sets $\hat{\mathbf{x}}^{\left(\mathbf{\mathbf{b}}\right)}=\mathbf{\hat{\mathbf{x}}^{*}}$
and faces mixed strategy $\overline{{\cal A}}$, Bob's expected $\hat{\mathbf{x}}$-utility
is at least $-V-\epsilon_{\textsc{Precision}}^{2}$; thus by \autoref{claim:nash-uniform_i,j} when facing strategy ${\cal A}$ his
expected $\hat{\mathbf{x}}$-utility is at least $-V-\epsilon_{\textsc{Precision}}^{2}-O\left(\epsilon_{\textsc{Uniform}}\right)$.
Therefore, for every action in Bob's support (restricted to same choice
of $j^{\left(\mathbf{b}\right)},v^{\left(\mathbf{b}\right)}$), his
expected $\hat{\mathbf{x}}$-utility (when facing either $\overline{{\cal A}}$
or ${\cal A}$) must be at least $-V-\epsilon_{\textsc{Precision}}^{2}-\epsilon_{\textsc{Nash}}/\lambda_{\hat{\mathbf{x}}}-O\left(\epsilon_{\textsc{Nash}}+\lambda_{\mathbf{x}}\right)=-V-O\left(\epsilon_{\textsc{Brouwer}}\right)$.
For any such action, we have that,
\begin{equation}
\E_{i\in\tau_{j^{\left(\mathbf{b}\right)}}}\left[\left(\hat{\mathbf{x}}^{\mathbf{b}}\mid_{i}-\E_{\mathbf{a}\sim\overline{{\cal A}}\left(i\right)}\left[f_{i}^{\left(\mathbf{a},\mathbf{b}\right)}\left(\mathbf{x}^{\left(\mathbf{a}\right)}\right)\right]\right)^{2}\right]=O\left(\epsilon_{\textsc{Brouwer}}\right).\label{eq:b-f(a)}
\end{equation}
We now want to argue that on average $\mathbf{b}\sim{\cal \overline{B}}$
and $i\in\tau_{j^{\left(\mathbf{b}\right)}}$, we also have that $\E_{\mathbf{a}\sim\overline{{\cal A}}\left(i\right)}\left[f_{i}^{\left(\mathbf{a},\mathbf{b}\right)}\left(\mathbf{x}^{\left(\mathbf{a}\right)}\right)\right]$
is close to $f_{i}\left(\hat{\mathbf{x}}^{\overline{{\cal B}}}\right)$.
Indeed,

\begin{align}
\E_{\mathbf{b}\sim\overline{{\cal B}}}\E_{i\in\tau_{j^{\left(\mathbf{b}\right)}}}\left[\left(f_{i}\left(\hat{\mathbf{x}}^{\overline{{\cal B}}}\right)-\E_{\mathbf{a}\sim\overline{{\cal A}}\left(i\right)}\left[f_{i}^{\left(\mathbf{a},\mathbf{b}\right)}\left(\mathbf{x}^{\left(\mathbf{a}\right)}\right)\right]\right)^{2}\right] & \leq\E_{\mathbf{b}\sim\overline{{\cal B}}}\E_{i\in\tau_{j^{\left(\mathbf{b}\right)}}}\E_{\mathbf{a}\sim\overline{{\cal A}}\left(i\right)}\left[\left(f_{i}\left(\hat{\mathbf{x}}^{\overline{{\cal B}}}\right)-f_{i}^{\left(\mathbf{a},\mathbf{b}\right)}\left(\mathbf{x}^{\left(\mathbf{a}\right)}\right)\right)^{2}\right]\label{eq:f(a)-f(A)-1}\\
 & =\E_{\mathbf{a},\mathbf{b}\sim\overline{{\cal A}}\times\overline{{\cal B}}}\left[\left\Vert f\left(\hat{\mathbf{x}}^{\overline{{\cal B}}}\right)\mid_{\sigma_{j^{\left(\mathbf{a}\right)}}\cap\tau_{j^{\left(\mathbf{b}\right)}}}-f^{\left(\mathbf{a},\mathbf{b}\right)}\left(\mathbf{x}^{\left(\mathbf{a}\right)}\right)\right\Vert _{2}^{2}\right]\label{eq:f(a)-f(A)-2}\\
 & =O\left(\epsilon_{\textsc{Brouwer}}\right).\label{eq:f(a)-f(A)-3}
\end{align}
Above, we can change the order of expectations (from \eqref{eq:f(a)-f(A)-1}
to \eqref{eq:f(a)-f(A)-2}) because $\overline{{\cal A}},\overline{{\cal B}}$
have uniform marginals over $j$'s and the intersection of any $\sigma_{j^{\left(\mathbf{a}\right)}},\tau_{j^{\left(\mathbf{b}\right)}}$
has cardinality exactly $\left|\sigma_{j^{\left(\mathbf{a}\right)}}\cap\tau_{j^{\left(\mathbf{b}\right)}}\right|=m/\ell^{2}$.
Then \eqref{eq:f(a)-f(A)-3} follows by \autoref{claim:nash-f}.

Combining \eqref{eq:b-f(a)} with \eqref{eq:f(a)-f(A)-3} via the
triangle inequality, we have that
\[
\E_{\mathbf{b}\sim\overline{{\cal B}}}\E_{i\in\tau_{j^{\left(\mathbf{b}\right)}}}\left[\left(f_{i}\left(\hat{\mathbf{x}}^{\overline{{\cal B}}}\right)-\hat{\mathbf{x}}^{\mathbf{b}}\mid_{i}\right)^{2}\right]=O\left(\epsilon_{\textsc{Brouwer}}\right).
\]
Since $\overline{{\cal B}}$ has uniform $j$-marginals, we can again
replace the order of expectations and write
\begin{align*}
\left\Vert \hat{\mathbf{x}}^{{\cal \overline{B}}}-f\left(\hat{\mathbf{x}}^{\overline{{\cal B}}}\right)\right\Vert _{2}^{2} & =\E_{i}\E_{\mathbf{b}\sim\overline{{\cal B}}\left(i\right)}\left[\left(f_{i}\left(\hat{\mathbf{x}}^{\overline{{\cal B}}}\right)-\hat{\mathbf{x}}^{\mathbf{b}}\mid_{i}\right)^{2}\right]\\
 & =\E_{\mathbf{b}\sim\overline{{\cal B}}}\E_{i\in\tau_{j^{\left(\mathbf{b}\right)}}}\left[\left(f_{i}\left(\hat{\mathbf{x}}^{\overline{{\cal B}}}\right)-\hat{\mathbf{x}}^{\mathbf{b}}\mid_{i}\right)^{2}\right]=O\left(\epsilon_{\textsc{Brouwer}}\right).
\end{align*}

\autoref{prop:nash-A-f(A)} now follows from \autoref{claim:=00005Cbar=00007BA=00007D-A},
$O\left(1\right)$-Lipschitzness of $f$ and the triangle inequality.
\end{proof}

\pagebreak

\addcontentsline{toc}{section}{References}
\DeclareUrlCommand{\Doi}{\urlstyle{sf}}
\renewcommand{\path}[1]{\small\Doi{#1}}
\renewcommand{\url}[1]{\href{#1}{\small\Doi{#1}}}
\bibliographystyle{alphaurl}
\bibliography{cc-nash}

\newcommand{\etalchar}[1]{$^{#1}$}
\begin{thebibliography}{GLM{\etalchar{+}}16}

\bibitem[Alt94]{Althofer1993}
Ingo Althofer.
\newblock On sparse approximations to randomized strategies and convex
  combinations.
\newblock {\em Linear Algebra and its Applications}, 199:339--355, 1994.
\newblock \href {http://dx.doi.org/10.1016/0024-3795(94)90357-3}
  {\path{doi:10.1016/0024-3795(94)90357-3}}.

\bibitem[Bab12]{Bab12-completely_uncoupled}
Yakov Babichenko.
\newblock Completely uncoupled dynamics and {N}ash equilibria.
\newblock {\em Games and Economic Behavior}, 76(1):1--14, 2012.
\newblock \href {http://dx.doi.org/10.1016/j.geb.2012.06.004}
  {\path{doi:10.1016/j.geb.2012.06.004}}.

\bibitem[Bab16]{B13}
Yakov Babichenko.
\newblock Query complexity of approximate {N}ash equilibria.
\newblock {\em Journal of the ACM}, 63(4):36:1--36:24, 2016.
\newblock \href {http://dx.doi.org/10.1145/2908734}
  {\path{doi:10.1145/2908734}}.

\bibitem[BBM10]{BBM10-0.36392}
Hartwig Bosse, Jaroslaw Byrka, and Evangelos Markakis.
\newblock New algorithms for approximate {N}ash equilibria in bimatrix games.
\newblock {\em Theoretical Computer Science}, 411(1):164--173, 2010.
\newblock \href {http://dx.doi.org/10.1016/j.tcs.2009.09.023}
  {\path{doi:10.1016/j.tcs.2009.09.023}}.

\bibitem[BM07]{BM07-regret}
Avrim Blum and Yishay Mansour.
\newblock From external to internal regret.
\newblock {\em Journal of Machine Learning Research}, 8:1307--1324, 2007.
\newblock URL: \url{http://dl.acm.org/citation.cfm?id=1314543}.

\bibitem[BR17]{BR16-CC-nash}
Yakov Babichenko and Aviad Rubinstein.
\newblock Communication complexity of approximate {N}ash equilibria.
\newblock In {\em Proceedings of the 49th Symposium on Theory of Computing
  (STOC)}, pages 878--889. ACM, 2017.
\newblock \href {http://dx.doi.org/10.1145/3055399.3055407}
  {\path{doi:10.1145/3055399.3055407}}.

\bibitem[Bro51]{Brown51-fictitious_play}
George Brown.
\newblock Iterative solutions of games by fictitious play.
\newblock {\em Activity Analysis of Production and Allocation}, pages 374--376,
  1951.

\bibitem[CCT17]{CCT}
Xi~Chen, Yu~Cheng, and Bo~Tang.
\newblock Well-supported vs. approximate {N}ash equilibria: Query complexity of
  large games.
\newblock In {\em Proceedings of the 8th Innovations in Theoretical Computer
  Science Conference (ITCS)}, pages 57:1--57:9, 2017.
\newblock \href {http://dx.doi.org/10.4230/LIPIcs.ITCS.2017.57}
  {\path{doi:10.4230/LIPIcs.ITCS.2017.57}}.

\bibitem[CDF{\etalchar{+}}16]{CDFFJS15-0.382}
Artur Czumaj, Argyrios Deligkas, Michail Fasoulakis, John Fearnley, Marcin
  Jurdzi{\'n}ski, and Rahul Savani.
\newblock Distributed methods for computing approximate equilibria.
\newblock In {\em Proceedings of the 12th Conference on Web and Internet
  Economics (WINE)}, pages 15--28, 2016.
\newblock \href {http://dx.doi.org/10.1007/978-3-662-54110-4_2}
  {\path{doi:10.1007/978-3-662-54110-4_2}}.

\bibitem[CDT09]{CDT}
Xi~Chen, Xiaotie Deng, and Shang-Hua Teng.
\newblock Settling the complexity of computing two-player {N}ash equilibria.
\newblock {\em Journal of the ACM}, 56(3):14:1--14:57, 2009.
\newblock \href {http://dx.doi.org/10.1145/1516512.1516516}
  {\path{doi:10.1145/1516512.1516516}}.

\bibitem[CS04]{CS}
Vincent Conitzer and Tuomas Sandholm.
\newblock Communication complexity as a lower bound for learning in games.
\newblock In {\em Proceedings of the 21st International Conference on Machine
  Learning (ICML)}, page~24. ACM, 2004.
\newblock \href {http://dx.doi.org/10.1145/1015330.1015351}
  {\path{doi:10.1145/1015330.1015351}}.

\bibitem[DGP09]{DGP}
Constantinos Daskalakis, Paul Goldberg, and Christos Papadimitriou.
\newblock The complexity of computing a {N}ash equilibrium.
\newblock {\em SIAM Journal on Computing}, 39(1):195--259, 2009.
\newblock \href {http://dx.doi.org/10.1137/070699652}
  {\path{doi:10.1137/070699652}}.

\bibitem[DMP07]{DMP07-nash-0.38}
Constantinos Daskalakis, Aranyak Mehta, and Christos Papadimitriou.
\newblock Progress in approximate {N}ash equilibria.
\newblock In {\em Proceedings of the 8th Conference on Electronic Commerce
  (EC)}, pages 355--358. ACM, 2007.
\newblock \href {http://dx.doi.org/10.1145/1250910.1250962}
  {\path{doi:10.1145/1250910.1250962}}.

\bibitem[DMP09]{DMP09-nash-0.5}
Constantinos Daskalakis, Aranyak Mehta, and Christos Papadimitriou.
\newblock A note on approximate {N}ash equilibria.
\newblock {\em Theoretical Computer Science}, 410(17):1581--1588, 2009.
\newblock \href {http://dx.doi.org/10.1016/j.tcs.2008.12.031}
  {\path{doi:10.1016/j.tcs.2008.12.031}}.

\bibitem[DP09]{Daskalakis-Papadimitriou-PTAS}
Constantinos Daskalakis and Christos Papadimitriou.
\newblock On oblivious {PTAS}'s for {N}ash equilibrium.
\newblock In {\em Proceedings of the 41st Symposium on Theory of Computing
  (STOC)}, pages 75--84. ACM, 2009.
\newblock Full version at http://arxiv.org/abs/1102.2280.
\newblock \href {http://dx.doi.org/10.1145/1536414.1536427}
  {\path{doi:10.1145/1536414.1536427}}.

\bibitem[FGGS15]{FGGS}
John Fearnley, Martin Gairing, Paul Goldberg, and Rahul Savani.
\newblock Learning equilibria of games via payoff queries.
\newblock {\em Journal of Machine Learning Research}, 16:1305--1344, 2015.
\newblock URL: \url{http://jmlr.org/papers/v16/fearnley15a.html}.

\bibitem[FS16]{FS16-query}
John Fearnley and Rahul Savani.
\newblock Finding approximate {N}ash equilibria of bimatrix games via payoff
  queries.
\newblock {\em Transactions on Economics and Computation}, 4(4):25:1--25:19,
  2016.
\newblock \href {http://dx.doi.org/10.1145/2956579}
  {\path{doi:10.1145/2956579}}.

\bibitem[FY06]{FY06-converging_dynamics}
Dean Foster and Peyton Young.
\newblock Regret testing: learning to play {N}ash equilibrium without knowing
  you have an opponent.
\newblock {\em Theoretical Economics}, 1(3):341--367, 2006.
\newblock URL:
  \url{http://econtheory.org/ojs/index.php/te/article/view/20060341}.

\bibitem[GK17]{GS17-correlated}
Anat Ganor and {Karthik {C. S.}}
\newblock Communication complexity of correlated equilibrium in two-player
  games.
\newblock Technical report, arXiv, 2017.
\newblock \href {http://arxiv.org/abs/1704.01104} {\path{arXiv:1704.01104}}.

\bibitem[GLM{\etalchar{+}}16]{goos16rectangles}
Mika G{\"o}{\"o}s, Shachar Lovett, Raghu Meka, Thomas Watson, and David
  Zuckerman.
\newblock Rectangles are nonnegative juntas.
\newblock {\em SIAM Journal on Computing}, 45(5):1835--1869, 2016.
\newblock \href {http://dx.doi.org/10.1137/15M103145X}
  {\path{doi:10.1137/15M103145X}}.

\bibitem[GP14a]{GP}
Paul Goldberg and Arnoud Pastink.
\newblock On the communication complexity of approximate {N}ash equilibria.
\newblock {\em Games and Economic Behavior}, 85:19--31, 2014.
\newblock \href {http://dx.doi.org/10.1016/j.geb.2014.01.009}
  {\path{doi:10.1016/j.geb.2014.01.009}}.

\bibitem[GP14b]{goos14communication}
Mika G{\"o}{\"o}s and Toniann Pitassi.
\newblock Communication lower bounds via critical block sensitivity.
\newblock In {\em Proceedings of the 46th Symposium on Theory of Computing
  (STOC)}, pages 847--856. ACM, 2014.
\newblock \href {http://dx.doi.org/10.1145/2591796.2591838}
  {\path{doi:10.1145/2591796.2591838}}.

\bibitem[GPW17]{goos17bpp}
Mika G{\"o}{\"o}s, Toniann Pitassi, and Thomas Watson.
\newblock Query-to-communication lifting for {BPP}.
\newblock In {\em Proceedings of the 58th Symposium on Foundations of Computer
  Science (FOCS)}, pages 132--143. IEEE, 2017.
\newblock \href {http://dx.doi.org/10.1109/FOCS.2017.21}
  {\path{doi:10.1109/FOCS.2017.21}}.

\bibitem[GR16]{GR16-query}
Paul Goldberg and Aaron Roth.
\newblock Bounds for the query complexity of approximate equilibria.
\newblock {\em Transactions on Economics and Computation}, 4(4):24:1--24:25,
  2016.
\newblock \href {http://dx.doi.org/10.1145/2956582}
  {\path{doi:10.1145/2956582}}.

\bibitem[HM03]{HM03-deterministic_dynamics}
Sergiu Hart and Andreu Mas{-}Colell.
\newblock Uncoupled dynamics do not lead to {N}ash equilibrium.
\newblock {\em American Economic Review}, 93(5):1830--1836, 2003.
\newblock \href {http://dx.doi.org/10.1257/000282803322655581}
  {\path{doi:10.1257/000282803322655581}}.

\bibitem[HM10]{HM}
Sergiu Hart and Yishay Mansour.
\newblock How long to equilibrium? {T}he communication complexity of uncoupled
  equilibrium procedures.
\newblock {\em Games and Economic Behavior}, 69(1):107--126, 2010.
\newblock \href {http://dx.doi.org/10.1016/j.geb.2007.12.002}
  {\path{doi:10.1016/j.geb.2007.12.002}}.

\bibitem[HN12]{huynh12virtue}
Trinh Huynh and Jakob Nordstr{\"o}m.
\newblock On the virtue of succinct proofs: Amplifying communication complexity
  hardness to time--space trade-offs in proof complexity.
\newblock In {\em Proceedings of the 44th Symposium on Theory of Computing
  (STOC)}, pages 233--248. ACM, 2012.
\newblock \href {http://dx.doi.org/10.1145/2213977.2214000}
  {\path{doi:10.1145/2213977.2214000}}.

\bibitem[HN16]{HN13}
Sergiu Hart and Noam Nisan.
\newblock The query complexity of correlated equilibria.
\newblock {\em Games and Economic Behavior}, 2016.
\newblock \href {http://dx.doi.org/10.1016/j.geb.2016.11.003}
  {\path{doi:10.1016/j.geb.2016.11.003}}.

\bibitem[HPV89]{HPV89}
Michael Hirsch, Christos Papadimitriou, and Stephen Vavasis.
\newblock Exponential lower bounds for finding {B}rouwer fix points.
\newblock {\em Journal of Complexity}, 5(4):379--416, 1989.
\newblock \href {http://dx.doi.org/10.1016/0885-064X(89)90017-4}
  {\path{doi:10.1016/0885-064X(89)90017-4}}.

\bibitem[KL93]{KaLe}
Ehud Kalai and Ehud Lehrer.
\newblock Rational learning leads to {N}ash equilibrium.
\newblock {\em Econometrica}, 61(5):1019--1045, 1993.
\newblock \href {http://dx.doi.org/10.2307/2951492}
  {\path{doi:10.2307/2951492}}.

\bibitem[Kop13]{Kopparty-notes}
Swastik Kopparty.
\newblock Lecture 5: $k$-wise independent hashing and applications (topics in
  complexity theory and pseudorandomness), 2013.
\newblock URL:
  \url{http://sites.math.rutgers.edu/~sk1233/courses/topics-S13/lec5.pdf}.

\bibitem[KPS09]{KPS09-nash-0.75}
Spyros Kontogiannis, Panagiota Panagopoulou, and Paul Spirakis.
\newblock Polynomial algorithms for approximating {N}ash equilibria of bimatrix
  games.
\newblock {\em Theoretical Computer Science}, 410(17):1599--1606, 2009.
\newblock \href {http://dx.doi.org/10.1016/j.tcs.2008.12.033}
  {\path{doi:10.1016/j.tcs.2008.12.033}}.

\bibitem[KS17]{KS17-correlated}
Young~Kun Ko and Ariel Schvartzman.
\newblock Bounds for the communication complexity of two-player approximate
  correlated equilibria.
\newblock Technical Report TR17-071, Electronic Colloquium on Computational
  Complexity (ECCC), 2017.
\newblock URL: \url{https://eccc.weizmann.ac.il/report/2017/071}.

\bibitem[LMM03]{LMM}
Richard Lipton, Evangelos Markakis, and Aranyak Mehta.
\newblock Playing large games using simple strategies.
\newblock In {\em Proceedings of the 4th Conference on Electronic Commerce
  (EC)}, pages 36--41. ACM, 2003.
\newblock \href {http://dx.doi.org/10.1145/779928.779933}
  {\path{doi:10.1145/779928.779933}}.

\bibitem[MT05]{MT05-imitation}
Andrew McLennan and Rabee Tourky.
\newblock From imitation games to {K}akutani.
\newblock Technical report, University of Melbourne, 2005.
\newblock URL:
  \url{http://cupid.economics.uq.edu.au/mclennan/Papers/kakutani60.pdf}.

\bibitem[MU05]{mitzenmacher05probability}
Michael Mitzenmacher and Eli Upfal.
\newblock {\em Probability and Computing: Randomized Algorithms and
  Probabilistic Analysis}.
\newblock Cambridge University Press, 2005.

\bibitem[Nas51]{Nash}
John Nash.
\newblock Non-cooperative games.
\newblock {\em Annals of Mathematics}, 54(2):286--295, 1951.
\newblock \href {http://dx.doi.org/10.2307/1969529}
  {\path{doi:10.2307/1969529}}.

\bibitem[Rob51]{Robinson51-fictitious_play}
Julia Robinson.
\newblock An iterative method of solving a game.
\newblock {\em Annals of Mathematics}, 54(2):296--301, 1951.
\newblock \href {http://dx.doi.org/10.2307/1969530}
  {\path{doi:10.2307/1969530}}.

\bibitem[Rou14]{Roughgarden14-barriers-to-equilibria}
Tim Roughgarden.
\newblock Barriers to near-optimal equilibria.
\newblock In {\em Proceedings of the 55th Symposium on Foundations of Computer
  Science (FOCS)}, pages 71--80. IEEE, 2014.
\newblock \href {http://dx.doi.org/10.1109/FOCS.2014.16}
  {\path{doi:10.1109/FOCS.2014.16}}.

\bibitem[Rou18]{Roughgarden18-barbados}
Tim Roughgarden.
\newblock Complexity theory, game theory, and economics.
\newblock Technical report, arXiv, 2018.
\newblock \href {http://arxiv.org/abs/1801.00734} {\path{arXiv:1801.00734}}.

\bibitem[Rub15]{R15}
Aviad Rubinstein.
\newblock Inapproximability of {N}ash equilibrium.
\newblock In {\em Proceedings of the 47th Symposium on Theory of Computing
  (STOC)}, pages 409--418. ACM, 2015.
\newblock \href {http://dx.doi.org/10.1145/2746539.2746578}
  {\path{doi:10.1145/2746539.2746578}}.

\bibitem[Rub16]{R16}
Aviad Rubinstein.
\newblock Settling the complexity of computing approximate two-player {N}ash
  equilibria.
\newblock In {\em Proceedings of the 57th Symposium on Foundations of Computer
  Science (FOCS)}, pages 258--265. IEEE, 2016.
\newblock \href {http://dx.doi.org/10.1109/FOCS.2016.35}
  {\path{doi:10.1109/FOCS.2016.35}}.

\bibitem[RVW02]{reingold02entropy}
Omer Reingold, Salil Vadhan, and Avi Wigderson.
\newblock Entropy waves, the zig-zag graph product, and new constant-degree
  expanders.
\newblock {\em Annals of Mathematics}, 155(1):157--187, 2002.
\newblock \href {http://dx.doi.org/10.2307/3062153}
  {\path{doi:10.2307/3062153}}.

\bibitem[RW16]{RW}
Tim Roughgarden and Omri Weinstein.
\newblock On the communication complexity of approximate fixed points.
\newblock In {\em Proceedings of the 57th Symposium on Foundations of Computer
  Science (FOCS)}, pages 229--238. IEEE, 2016.
\newblock \href {http://dx.doi.org/10.1109/FOCS.2016.32}
  {\path{doi:10.1109/FOCS.2016.32}}.

\bibitem[SSS95]{SSS95-chernoff-k-wise}
Jeanette Schmidt, Alan Siegel, and Aravind Srinivasan.
\newblock Chernoff-{H}oeffding bounds for applications with limited
  independence.
\newblock {\em SIAM Journal on Discrete Mathematics}, 8(2):223--250, 1995.
\newblock \href {http://dx.doi.org/10.1137/S089548019223872X}
  {\path{doi:10.1137/S089548019223872X}}.

\bibitem[TS08]{TS08-0.33}
Haralampos Tsaknakis and Paul Spirakis.
\newblock An optimization approach for approximate {N}ash equilibria.
\newblock {\em Internet Mathematics}, 5(4):365--382, 2008.
\newblock \href {http://dx.doi.org/10.1080/15427951.2008.10129172}
  {\path{doi:10.1080/15427951.2008.10129172}}.

\end{thebibliography}

\end{document}